\newtheorem{myDef}{Definition}[section]
\newtheorem{defn}[myDef]{Definition}
\newtheorem{prop}[myDef]{Proposition}
\newtheorem{exmp}[myDef]{Example}
\newtheorem{lem}[myDef]{Lemma}
\newtheorem{rem}[myDef]{Remark}
\newtheorem{thm}[myDef]{Theorem}
\newtheorem{cor}[myDef]{Corollary}
\newtheorem{alg}[myDef]{Algorithm}
\newtheorem{prob}[myDef]{Problem}
\def\X{{\mathbb{X}}}
\def\Y{{\mathbb{Y}}}
\def\U{{\mathbb{U}}}
\def\V{{\mathbb{V}}}
\def\H{{\mathbb{H}}}
\def\FB{{\mathbb{F}}}
\def\NS{{\mathbb{S}}}
\def\C{{\mathcal{C}}}
\def\m{{\mathfrak{m}}}
\def\deg{\hbox{\rm{deg}}}
\def\sdeg{\hbox{\rm{Sdeg}}}
\def\b{{\bf b}}
\def\Z{{\mathbb{Z}}}
\def\N{{\mathbb{N}}}
\def\R{{\mathbb{R}}}
\def\C{{\mathbb{C}}}
\def\F{{\mathbb{F}}}
\def\i{\mbox{\bf{i}}}
\def\lc{\hbox{\rm{lc}}}
\def\lm{\hbox{\rm{lm}}}
\def\GB{{\mathbb{G}}}
\def\csdeg{\hbox{\rm{CSdeg}}}
\def\mv{{\mathbf{m}}}
\def\FS{{\mathcal{F}}}
\def\HS{{\mathcal{H}}}
\def\MD{{\mathscr{M}}}
\def\bref#1{(\ref{#1})}
\def\proj{\hbox{\rm{Proj}}}
\begin{document}

\title{\bf Quantum Algorithms for Boolean Equation Solving and Quantum Algebraic Attack on Cryptosystems\thanks{\quad Partially supported by NSFC grants  no. 11688101, no. 11101411.}}
\author{Yu-Ao Chen$^{1,2}$ and Xiao-Shan Gao$^{1,2}$\\
$^{1}$KLMM, Academy of Mathematics and Systems Science\\
 Chinese Academy of Sciences, Beijing 100190, China\\
$^{2}$University of Chinese Academy of Sciences, Beijing 100049, China\\
Email: xgao@mmrc.iss.ac.cn}

\date{\today}
\maketitle
\begin{abstract}
\noindent
Decision of whether a Boolean equation system has a solution is an NPC problem
and finding a solution is NP hard.
In this paper, we present a quantum algorithm to decide whether a Boolean equation system $\FS$ has a solution and to compute one if $\FS$ does have solutions with any given success probability. The runtime complexity of the algorithm is polynomial in the size of $\FS$ and the condition number of $\FS$. As a consequence, we give a polynomial-time quantum algorithm
for solving Boolean equation systems if their condition numbers are small, say polynomial in the size of $\FS$.
We apply our quantum algorithm for solving Boolean equations to the cryptanalysis of several important cryptosystems:
the stream cipher Trivum,
the block cipher AES,
the hash function SHA-3/Keccak,
and the multivariate public key cryptosystems,
and show that they are secure under quantum algebraic attack
only if the condition numbers of
the corresponding equation systems are large.
This leads to a new criterion
for designing cryptosystems that can against the attack
of quantum computers: their corresponding equation systems must
have large condition numbers.

\vskip10pt
\noindent
{\bf Keywords.}
Quantum algorithm,
Boolean equation solving,
polynomial system solving, HHL algorithm,
condition number,
stream cipher Trivum, block cipher AES,
hash function SHA-3/Keccak, MPKC.

\vskip10pt
\noindent
{\bf AMS subject classifications.} 68Q12, 68W30, 94A60.

\end{abstract}

\section{Introduction}

As a major advance in quantum algorithm,  Harrow, Hassidim, and Lloyd (HHL) proposed 
a quantum algorithm to solve a linear system $Ax= |b\rangle$ with complexity polynomial in $\log N$, $s$, $\kappa$, and $1/\epsilon$, where 
$N$, $s$, $\kappa$ are respectively the dimension, the sparseness, the condition number of $A$,
and $\epsilon$ is the precision of the output.
The HHL algorithm can be exponentially faster than classic algorithms if $s$ and $\kappa$ are small.
Ambainis gave a new version of the HHL algorithm whose complexity is optimal in $\kappa$ \cite{hhl-new}.
Childs, Kothari, and Somma gave a new quantum algorithm for
linear equation solving, which 
exponentially improves the dependence on the precision \cite{hhl-ha3}.

In this paper, based on the HHL algorithm, a quantum algorithm for Boolean equation solving will be given with complexity polynomial in the size of the input
equation system and the condition number of certain matrix derived from
the equation system.
Solving Boolean equations is a fundamental problem in theoretical computer science.
Decision of whether a Boolean equation system has a solution is an NPC problem and finding a solution is NP hard.
Our algorithm can be as much as exponentially faster than
traditional algorithms for these NP hard problems under certain conditions.

\subsection{Main results}
Let $\FS=\{f_1,\ldots,f_r\}$ be a set of Boolean polynomials in
variables $\X=\{x_1,\ldots,x_n\}$ and with total sparseness $T_\FS = \sum_{i=1}^r \#f_i$, where $\#f_i$ is the number of terms in $f_i$. Then, we have

\begin{thm}\label{th-m1}
For $\epsilon\in(0,1)$, there is a quantum algorithm which decides whether $\FS=0$
has a solution and computes one if $\FS=0$ does have solutions,
with probability at least $1-\epsilon$
and runtime (gate) complexity $\widetilde O((n^{3.5}+T_\FS^{3.5})\kappa^2\log1/\epsilon)$,
where  $\widetilde O$ suppresses more slowly-growing logarithm terms and $\kappa$ is the condition number of the Boolean polynomial system $\FS$ (refer to Theorem \ref{th-q} for definition).
\end{thm}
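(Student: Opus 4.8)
The plan is to reduce Boolean equation solving to a linear-algebra problem that the HHL algorithm can attack, and then account carefully for all the parameters so that the complexity comes out polynomial in $n$, $T_\FS$, and $\kappa$.
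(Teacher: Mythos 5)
Your proposal is a statement of intent, not a proof: it names the target (``reduce to a linear-algebra problem HHL can attack'') without supplying any of the constructions that make the reduction work, and each of the missing steps is where the real difficulty lies. First, HHL operates over $\C$, not over $\F_2$, and the naive lift fails: for $f=x_1+x_2+1$ one has $\V_{\F_2}(f)=\{(0,1),(1,0)\}$ but $\V_\C(f,x_1^2-x_1,x_2^2-x_2)=\emptyset$, so ``the linear-algebra problem'' cannot simply be the Macaulay system of $\FS$ itself. The paper handles this by the characteristic-$0$ conversion $C(f)=\prod_k(f-2k)$, which exploits that a Boolean point makes $f$ an even integer; and since $\#C(f)$ grows exponentially in the sparseness of $f$, one must first split each $f_i$ into $3$-sparse polynomials using fresh variables, yielding a $6$-sparse system over $\C$ whose Boolean solutions project onto $\V_{\F_2}(\FS)$. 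Without this two-stage reduction (or an equivalent), the complexity bound $\widetilde O((n^{3.5}+T_\FS^{3.5})\kappa^2\log 1/\epsilon)$ has no chance of emerging.

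Second, even granting a correct system over $\C$, ``attack with HHL'' is not a solving procedure. The Macaulay linear system is under-determined whenever there are multiple solutions, and HHL returns only a normalized quantum state, not a readable vector (reading all coordinates costs $O(N)$ with $N$ exponential in $n$). The paper needs three specific ingredients you do not mention: (i) the notion of \emph{complete} solving degree --- the ordinary solving degree is provably insufficient for monomial-solving, and the bound $\csdeg(\FS_B\cup\H_\X)\le 3n$ is what keeps the degree parameter polynomial; (ii) the characterization of HHL's minimal-norm output as a \emph{pseudo solution} $\sum_i\eta_i\widetilde{\mv}_D(\mathbf a_i)$ with $\sum_i\eta_i=1$, which kills the free parameters coming from zero columns; and (iii) an extraction scheme --- measuring the pseudo-solution state yields a monomial that is nonzero at some Boolean solution with probability $>1-\epsilon_1/n$, whence all its variables can be set to $1$ and the procedure iterated at most $n$ times, with an outer repetition loop converting the failure probability into the stated $\log 1/\epsilon$ dependence. ``Accounting carefully for all the parameters'' presupposes these structures exist; as written, your plan would stall at the first step.
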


As a consequence, we can solve Boolean equation systems using quantum
computers with any given success probability and in polynomial-time
if the condition number $\kappa$ of $\FS$ and the sparseness $T_\FS$ of $\FS$ are small, say when $\kappa$ and $T_\FS$ are poly$(n)$.
Since $T_\FS$ is the size of the input to the algorithm,
it should be small for practical problems.
For instance, all the equation systems  from cryptanalysis
in Section \ref{sec-ca} are very sparse.
Therefore, the key factor is the condition number.
As a consequence, we give a polynomial-time quantum algorithm
for solving Boolean equation systems if their condition numbers are small, say the condition numbers are poly$(n,T_\FS)$.

Let $\FS=\{f_1,\ldots,f_r\}\subset\C[\X]$ be a set of polynomials with complex numbers as coefficients and with total sparseness $T_\FS= \sum_{i=1}^r \#f_i$.
A solution $\mathbf a$ for $\FS=0$ is called {\em Boolean}, if each coordinate of $\mathbf a$ is $0$ or $1$.
Clearly, deciding whether $\FS$ has a Boolean solution is NPC.
We also give a quantum algorithm to compute Boolean solutions of $\FS$.
\begin{thm}\label{th-m2}
For $\epsilon\in(0,1)$,
there is a quantum algorithm which decides whether $\FS=0$
has a Boolean solution and computes one if $\FS=0$ does have Boolean solutions,
with probability at least $1-\epsilon$
and runtime complexity
$\widetilde O(n^{2.5}(n+T_\FS)\kappa^2\log1/\epsilon)$,
where $\kappa$ is the condition number of the polynomial system $\FS$ (refer to Theorem \ref{th-eq} for definition).
\end{thm}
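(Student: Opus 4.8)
The plan is to reduce the Boolean-solution problem for a complex polynomial system $\FS\subset\C[\X]$ to the Boolean-equation-solving problem already handled by Theorem \ref{th-m1}. The main idea is that a solution $\mathbf a$ for $\FS=0$ is Boolean precisely when each coordinate satisfies $a_i^2=a_i$, i.e. $a_i\in\{0,1\}$. So the first step is to adjoin the quadratic constraints $x_i^2-x_i$ (equivalently $x_i^2=x_i$) to the system, and to observe that over the Boolean domain the problem becomes one of finding a common zero subject to these idempotency relations. The subtlety is that the coefficients of $\FS$ are arbitrary complex numbers, whereas Theorem \ref{th-m1} concerns genuine Boolean polynomials; so I cannot simply invoke that theorem verbatim but must instead follow the same HHL-based linearization machinery while keeping the complex coefficients intact.

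The concrete route I would take is the following. First, encode the search for a Boolean solution as a linear-algebra problem in the style underlying Theorem \ref{th-m1}: introduce one variable per monomial appearing in the system (a Macaulay/linearization lift), so that each polynomial equation $f_i=0$ becomes a linear equation in the lifted variables, and each Boolean constraint $x_i^2=x_i$ forces the monomial variable for $x_i^2$ to equal that for $x_i$. Assemble these into a single coefficient matrix $A$ whose rows are indexed by the equations (the $r$ polynomials together with the $n$ idempotency relations) and whose columns are indexed by the occurring monomials. The number of distinct monomials is controlled by $T_\FS$ and $n$, which is what will ultimately feed the dimension $N=\widetilde O(n+T_\FS)$ into the complexity estimate.

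Next, I would run the HHL-type solver (as invoked in Theorem \ref{th-q}/\ref{th-eq}, to which the statement refers for the definition of $\kappa$) on this linear system to obtain a quantum state proportional to a solution vector, then use amplitude estimation / measurement to decide consistency and to extract an actual Boolean assignment with success probability at least $1-\epsilon$. The condition number $\kappa$ of $\FS$ is by definition the condition number of exactly this derived matrix, so the $\kappa^2$ and $\log(1/\epsilon)$ factors come straight from the HHL complexity (using the Ambainis/Childs–Kothari–Somma improvements cited earlier). The sparseness of $A$ is $O(n+T_\FS)$, and with dimension $N=\widetilde O(n+T_\FS)$ the gate complexity of state preparation, Hamiltonian simulation, and the rounding/verification step multiplies out to the claimed $\widetilde O\!\left(n^{2.5}(n+T_\FS)\kappa^2\log(1/\epsilon)\right)$.

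The hard part, I expect, is not the HHL call itself but the extraction step: the quantum solver outputs an approximation to a (possibly non-unique, possibly fractional-looking in the lifted coordinates) solution state, and I must argue that, given consistency, the relevant coordinates can be read off and rounded to a genuine $\{0,1\}$ assignment that truly satisfies the original complex system $\FS=0$ — not merely the relaxed linear lift. Controlling this rounding error against $\epsilon$, while simultaneously ensuring the idempotency constraints pin the monomial variables tightly enough to force Boolean values, is where the precision analysis must be done carefully; this is also where the complex coefficients (as opposed to $\F_2$ arithmetic) demand a genuinely different argument from the one used for Theorem \ref{th-m1}, since one can no longer rely on field-of-two structure to discretize the solution automatically.
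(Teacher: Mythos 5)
Your proposal has two genuine gaps, both at the heart of why the paper's proof works. First, the linearization you describe --- one variable per monomial \emph{appearing in the system}, yielding a matrix of dimension $N=\widetilde O(n+T_\FS)$ --- is not sufficient and the dimension claim is incorrect. A straight linearization of the occurring monomials is a badly under-determined relaxation with spurious solutions; the paper must instead form the full Macaulay system, multiplying each polynomial by \emph{all} monomials up to degree $D-d_i$, and crucially must take $D$ at least the \emph{complete solving degree} (Definition \ref{def-csdeg}): Example \ref{ex-csdeg} shows that even the ordinary solving degree fails, and only for $D\ge\csdeg(\FS_B\cup\H_\X)\le 3n$ (Lemmas \ref{lm-csd} and \ref{lm-la1}) does Lemma \ref{sol-sh} guarantee that every solution of the linear system is an affine combination $\sum_i\eta_i\mv_D(\mathbf a_i)$ with $\sum_i\eta_i=1$ of monomial vectors of true solutions (plus free coordinates that HHL's minimal-norm property kills, Theorem \ref{th-qsh}). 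The resulting matrix has dimension roughly $(2D+1)^n$ --- \emph{exponential} in $n$; only its logarithm, $O(n\log n)$, enters the runtime, and the factor $(n+T_\FS)$ in the complexity comes from the \emph{sparseness} of the Macaulay matrix, not its dimension. Your complexity accounting therefore does not go through as written.

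Second, the extraction step, which you correctly flag as the hard part, is left without a mechanism, and the one you gesture at (read off the relevant coordinates and round) is unavailable: efficiently measuring selected coordinates of an exponential-dimensional state is precisely the open Problem \ref{pro-11} in the paper, and the HHL output is a \emph{pseudo solution} $\sum_i\eta_i\widetilde{\mv}_D(\mathbf a_i)$ with unknown complex weights over possibly many Boolean solutions, so individual coordinates do not determine one consistent assignment. The paper's device (Algorithm \ref{alg-b}, Lemma \ref{lm-bc1}) is different: measure the state in the computational basis to \emph{sample} a monomial $m_{\bar D,k}$; since the HHL error is kept below $\sqrt{\epsilon_1/n}$, with probability $>1-\epsilon_1/n$ some true Boolean solution makes $m_{\bar D,k}\ne 0$, hence $=1$, hence every variable occurring in it equals $1$; substitute these values, shrink the system, and repeat at most $n$ times, with an outer loop of $\lceil\log_{\epsilon_1}\epsilon\rceil$ restarts supplying the $\log(1/\epsilon)$ factor. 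This iterative measure-and-substitute argument is simultaneously the source of the $n^{2.5}$ in the bound ($n$ rounds, precision $\sqrt{n/\epsilon_1}$, and $\widetilde O(n)$ from the log of the dimension) and of correctness in the inconsistent case, which is handled by explicit verification inside the loop --- not by amplitude estimation, which you invoke without justification given that HHL returns an answer even for unsolvable systems. A minor point: your opening framing (reduce to Theorem \ref{th-m1}) inverts the paper's logic, since Theorem \ref{th-m1} is derived \emph{from} Theorem \ref{th-m2}; you do retract this, but the replacement machinery above is what is actually missing.
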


Theorem \ref{th-m2} is a more basic result.
Theorem \ref{th-m1} follows from Theorem \ref{th-m1} using
a novel reduction, and much more problems such
as optimization over finite fields \cite{qa-opt} can also be
efficiently reduced to Theorem \ref{th-m2}.

We apply Theorem \ref{th-m1} to cryptanalysis of several important cryptosystems.
As early as in 1946, Shannon \cite{sha} pointed out insightfully that
``Construct our cipher in such a way that breaking it is equivalent
to solving a certain system of simultaneous equations
in a large number of unknowns."
We know that the analysis of many cryptosystems,
such as
the stream cipher  Trivum,
the block cipher  AES,
the hash function SHA-3/Keccak,
and the multivariate public key cryptosystems (MPKC),
can be reduced to solving Boolean equations.
Note that all these cryptosystems are important.
AES  is an NIST standard since 2001 \cite{aes-or}, Trivium is an international standard under ISO/IEC 29192-3, and Keccak \cite{kec} is the latest member of the Secure Hash Algorithm family of standards, released by NIST in 2015.

\begin{table}[ht]\centering
\begin{tabular}{|c|c|c|c|c|c|c|c|}\hline
Cryptosystems &${N_k}$&${N_r}$&\#Vars&\#Eqs& $T$ & Complexity \\ \hline
AES-128&4&10&4288&10616&252288&$2^{73.30}c\kappa^2$ \\
AES-192&6&12&7488&18096&421248&$2^{76.59}c\kappa^2$ \\
AES-256&8&14&11904&29520&696384&$2^{78.53}c\kappa^2$ \\
Trivium & &1152&3543&4407&24339&$2^{61.71}c\kappa^2$ \\
Trivium & &2304&6999&9015&49683&$2^{65.38}c\kappa^2$ \\ \hline
 &${N_h}$&${N_r}$&\#Vars&\#Eqs& $T$ & Complexity \\ \hline
Keccak &384&24&76800&77160&611023&$2^{78.25}c\kappa^2$ \\
Keccak &512&24&76800&77288&611540&$2^{78.25}c\kappa^2$ \\ \hline
\end{tabular}
\caption{Complexities of the quantum algebraic attack }
\label{tab-0}
\end{table}

In Table \ref{tab-0}, we give the complexities of using Theorem \ref{th-m1}
to perform quantum algebraic attack to these cryptosystems,
where $\kappa$ is the condition number of the corresponding Boolean equation systems, $T$ is the total sparseness of the Boolean equations,  and $c$ is the complexity constant of the
Harrow-Hassidim-Lloyd (HHL) algorithm (see Remark \ref{rem-HHL} for definition).
For AES-$m$, $m=32N_k$ is the key bit-length  and ${N_r}$ is the number of rounds.
For Trivium,  ${N_r}$ is the number of rounds.
For Keccak, $N_h$ is the output size,  $N_r$ is the number of rounds,
and the state bit-size $b$ is $1600$.
%
From Table \ref{tab-0}, we can see that these cryptosystems are secure under quantum algebraic attack
only if the condition numbers of
their corresponding equation systems are large.
This leads to a new criterion
for designing cryptosystems that can against the attack
of quantum computers: {\em their corresponding equation systems must
have large condition numbers}.
Condition numbers for equation systems are generally difficult to estimate,
and estimating the condition numbers for these cryptosystems
is an interesting future work.

Many problems from computational theory and cryptography can be reduced to finding a Boolean solution for certain polynomial systems.
In this paper, we use Theorem \ref{th-m2} to three such problems:
the 3-SAT problem, the  subset sum problem, and the graph isomorphism problem.

\subsection{Technical contribution}

The main idea of the quantum algorithm proposed in this paper
is that the solutions of a Boolean equation system can be obtained by solving a linear system with the HHL quantum algorithm \cite{hhl}.
For a linear system $Ax= |b\rangle$, the HHL algorithm
can obtain an approximation to the solution state $|x\rangle$ exponentially faster than classic algorithms under certain conditions.
Precisely, our algorithm has three main steps:
\begin{description}
\item[Step 1]
Let $\FS\subset\C[\X]$ have a finite number of solutions:
$\mathbf a_1, \ldots, \mathbf a_w$.
A {\em pseudo solution} of $\FS$ is defined to be a linear combination of {\em monomial solutions} of $\FS$, that is, $\sum_{i=1}^w c_i {\widetilde{\mv}} (\mathbf a_i)$, where ${\widetilde{\mv}}$ is a vector of monomials in $\X$ up to certain degree and $c_i$ are complex numbers.
We show that a pseudo solution of $\FS$ can be computed
by solving a linear system with the HHL algorithm (see section \ref{sec-ps}).
\item[Step 2]
For $\FS\subset\C[\X]$, we show that Boolean solutions for $\FS$ can be computed from the pseudo solutions of $\FS$ obtained in Step 1 with high probability (see section \ref{sec-bs}).
\item[Step 3]
The problem of solving a Boolean equation system is reduced to the computation of the Boolean solutions for a $6$-sparse
polynomial system over $\C$ (see section \ref{sec-be}).
A polynomial system $\HS$ is called $k$-sparse if each polynomial in $\HS$ contains at most $k$ terms.
\end{description}
We will explain each of these steps briefly.

First, we show how to compute the pseudo solutions
for a polynomial system.
Let $\FS=\{f_1,\ldots,f_r\}\subset\C[\X]$ with $d_i=\deg(f_i)$,
$T_\FS = \sum_{i=1}^r \#f_i$, and $D$ a positive integer greater than $\max_i d_i$.
Consider all the polynomials $m_j f_i$, where $m_j$
are monomials with $\deg(m_j) \le D-d_i$.
These equations $m_j f_i=0$ can be written as a linear system
$\MD_{\FS,D} \m_D=\b_{\FS,D}$, where $\m_D$ is the set of all the monomials
$m$  with  $\deg(m) \le D$  and $\b_{\FS,D}$ is the set of the constant terms in $m_j f_i$.
The linear system $\MD_{\FS,D} \m_D=\b_{\FS,D}$ is called
the {\em Macaulay linear system}\footnote{The Macaulay linear system in Section \ref{sec-ps1} is more complicated than the one given here, although they are essentially the same. Here, we use this simple version to explain the ideas.}
and $\MD_{\FS,D}$ is called the {\em Macaulay matrix} of $\FS$.
Our contributions here are in three aspects.
We propose the concept of complete solving degree and show that
for such a degree $D$, the monomial solutions for $\FS$ can be
obtained from $\MD_{\FS,D} \m_D=\b_{\FS,D}$.
We also give a nice upper bound for complete solving degree
of the polynomial system occurred in computing the Boolean solutions.
It is shown that by using the HHL algorithm to $\MD_{\FS,D} \m_D=\b_{\FS,D}$,
we obtain a pseudo solution for $\FS$.
We gave a modified HHL algorithm to solve the
Macaulay linear system, which has better complexities
than using the original HHL algorithm.
%

Second, we show how to compute Boolean solutions
for a polynomial system $\FS$ over $\C$,
which are the solutions of $\FS_1=\FS\cup \{x_1^2-x_1,\ldots,x_n^2-x_n\}$
over the field of complex numbers.
Our contribution here is to show that
the solutions of $\FS_1=0$ can be obtained
from the  pseudo solutions of $\FS_1=0$ with high probability
by combining the property of quantum states and that of Boolean solutions.
The novelty of the approach is that the error bound for the solutions in the HHL algorithm is used to give the probability for finding
Boolean solutions of $\FS$.

Thirdly, let $\FS$ be a Boolean polynomial system in variables $\X$.
Since the HHL algorithm works over $\C$ and does not work for finite fields,
we cannot use the HHL algorithm to the Macaulay linear system of $\FS$.
We prove that the solutions to $\FS$ are the same
as the Boolean solutions of a $6$-sparse polynomial system
$\FS_2\subset\C[\X,\U]$ for some extra indeterminates $\U$.
Furthermore, the numbers of variables in $\U$ and the numbers of equations in $\FS_2$ are linear in the size of $\FS$.
By computing the Boolean solutions of $\FS_2$, we find the
solutions of $\FS=0$.

\subsection{Comparing with existing work}

The idea of reducing nonlinear polynomial systems to
linear systems of monomials can be traced back to the classical 
work on resultants \cite{mac}, which give conditions
for the existence of common solutions for over-determined polynomial systems.
Here, a key concept is the solving degree. Precisely,
$D$ is a solving degree of polynomial system $\FS$,
if the Gr\"obner basis of $(\FS)$ can be obtained  by using Gaussian elimination to
the Macaulay linear system $\MD_{\FS,D} \m_D=\b_{\FS,D}$.
Lazard \cite{laz83} and more recently
Caminata-Gorla \cite{soldeg-1} gave nice upper bounds for the solving degree
for projective zero dimensional polynomial ideals.
The F4 algorithm \cite{f4} and the XL \cite{xl}
algorithm were proposed to compute the Gr\"obner basis based on this idea.
Since the Macaulay linear system for a polynomial system with
multiple roots are under-determined, it is not possible
to solve the polynomial system by computing values of the monomials directly,
unless the equation system has a unique solution.
In the general case, extra work need to be done,
such as to reduce polynomial system solving to
the computation of eigenvalues or solving of univariate equations  \cite[p51]{cox2}.
%

In this paper, we show that the solving degree is not big enough
for monomial solving from the Macualay linear system.
Instead, we propose the concept of complete solving degree
which is enough to for monomial solving from the Macaulay linear system.
Furthermore, we show that the Boolean solutions
of a polynomial system can be found directly
from the solutions of the Macaulay linear system
in the quantum case.

Our algorithm is based on the HHL algorithm
for solving the linear system $Ax=|b\rangle$, where $A\in\C^{N\times N}$, $x,b\in\C^N$.
The speedup achieved in our algorithm is based on the
exponential speed up of the HHL algorithm for solving sparse linear systems.
On the other hand, the HHL algorithm has the following subtle properties.
\begin{enumerate}
\item The algorithm does not give a solution to $Ax= |b\rangle$, but a state $|{x}\rangle=(x_1,\ldots,x_N)$. Measuring of $|{x}\rangle$  gives $|x_1|:|x_2|:\cdots:|x_N|$ and the complexity will increase to $O(N)$.

\item The algorithm gives {an} answer $|x\rangle$ even if $A|x\rangle= |b\rangle$ has no solutions.

\item The algorithm works over $\C$, but not over finite fields.
%
%
%
\end{enumerate}
Our algorithm does not have these limitations
and gives an exact solution to the Boolean system.
For instance, finding the Boolean solutions to a linear equation
with integer coefficients is an NP hard problem (Section \ref{sec-lin}),
which can be solved by our algorithm but not the HHL algorithm. 
So our algorithm can be considered a significant extension of the HHL algorithm.
%
%

It is interesting to see that the second ``drawbacks" of the HHL algorithm mentioned above is used to generate the quantum state $|b\rangle$ efficiently (see Lemma \ref{lm-hhl3}) for the Macaulay linear system.
The HHL algorithm assumes that $|b\rangle$ is given. In the general case, there exist no efficient algorithms to generate $|b\rangle$ from $b$ \cite{hhl-ha1} and efficient generation for $|b\rangle$ can be achieved only in some special cases \cite{hhl-ci}.
Fortunately,  in our case, $b=\b_{\FS,D}$ is very sparse: at most $r$ entries of $b$ are nonzero, which leads to an efficient generation for $|b\rangle$ and the complexity is negligible
comparing to that of the HHL algorithm.

The complexity of our algorithm contains the condition number
of the Boolean polynomial system, which is the condition numbers 
of the Macaulay matrixes of the polynomial systems 
and is inherited from the HHL algorithm.
%
It was proved in \cite{hhl} that the dependence on condition
number cannot be substantially improved.
More precisely, it was shown that if a quantum algorithm exists for matrix inversion
running in time $O(\kappa^{1-\delta}\hbox{poly}(\log(N))$ for some
$\delta>0$, then {\bf BQP} ={\bf PSPACE} \cite{hhl}.
%
%
%
Although the condition numbers of generic polynomial systems are usually exponential, the polynomial systems to be solved in our problems
(refer to sections \ref{sec-lin}, \ref{sec-3sat}, \ref{sec-ca}) 
are highly structural, which  are super sparse in most cases
and always include the ``Boolean equations" $x_1^2-x_1,\ldots, x_n^2-x_n$. 
It is expected that new methods are needed to estimate the condition numbers for these polynomial systems.
%
%
%

Previous work on quantum polynomial system solving and algebraic
cryptanalysis were mainly based on Grover's algorithm \cite{grover}
which can achieve quadratic speed-up for exhaust search.
The idea of these work are similar: reduce
the problem to be solved to a search problem by designing a
proper oracle and then use Grover's algorithm.
Schwabe-Westerbaan \cite{qmq1} and more recently Faug\`ere et al \cite{qmq2}
proposed quantum algorithms for solving Boolean multivariate quadratic polynomial systems (MQ).
In particular, a Las-Vegas quantum algorithm for solving Boolean MQ
with complexities $O(2^{0.462n})$ was given under certain conditions.
On the other hand, the complexity of our algorithm is polynomial
in $n$, the total sparseness, and the condition number.
One nice feature of our algorithm is that the complexity depends
on the structure of polynomial system and provides
faster algorithms for sparse systems with small condition numbers.

In \cite{aesq1}, Grassl et al presented quantum circuits to search the
key for AES based on Grover's algorithm. This is possible because,
once the keys are given, the polynomial system for AES
can be solved easily. For AES-128, 192, and 256, the gates
(complexity) used in the circuits are about
$2^{86}$, $2^{118}$, $2^{151}$ \cite{qmq1}, respectively.
Comparing to our result in Table \ref{tab-0}, we have the following observations.
The ratio of the complexities of AES-256 to that of AES-128
in Table \ref{tab-0} is $2^5$, while the same ratio for the complexities in \cite{aesq1} is $2^{65}$.
This shows the advantage of the polynomial-time
nature of our algorithm: when the key is doubled,
the complexity increases little, while
for the exponential algorithm in \cite{aesq1},
when the key doubles, the complexity increases exponentially.
On the other hand, using Grover's algorithm, the explicit complexities
were given in \cite{qmq1}, while
our algorithm is much complicated and the complexities in Table \ref{tab-0} contain the parameters $c$ and $k$.

\section{A modified HHL algorithm}

In this section, we give a modified HHL algorithm for solving the linear system $Ax=b$, where special assumptions about $A$ and $b$
are made. The modified HHL algorithm will be used
in our algorithm for solving Boolean equations.

For a matrix $A\in\C^{M\times N}$, the arithmetic square root of each nonzero eigenvalue of the matrix $A^\dagger A$ is called a {\em singular value} of $A$, and the quotient of the maximal and minimal singular values is called the {\em condition number} of $A$, where $A^\dagger$ denotes the complex conjugate transpose of $A$.
A matrix $A\in\C^{M\times N}$ is called {\em $s$-sparse} if each row and column of $A$ has at most $s$ nonzero entries.
We say that the {\em  complexity of a query} for $A$ is $O(\gamma)$, if there is an algorithm to find all the nonzero entries in each row or column of $A$ in time $O(\gamma)$.

The following HHL quantum algorithm \cite{hhl} was given to solve a linear equation system $A|x\rangle= |b\rangle$ over $\C$.
\begin{thm}\label{thm-hhl0}
Suppose that $A\in\C^{M\times N}$ is an $s$-sparse matrix
and the complexity of a query for $A$ is $O(s)$.
Let  $|b\rangle\in\C^M$ be a unitary quantum state.
Then, there is a quantum algorithm   which can give an $\epsilon$-approximation to a solution state of the linear system $A|x\rangle=|b\rangle$  in runtime complexity $\widetilde O(\log(M+N)s^2\kappa^2/\epsilon)$,
where $\kappa$ is the condition number of $A$.
\end{thm}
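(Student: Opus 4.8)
The plan is to reduce the rectangular problem to the Hermitian case treated by the original HHL procedure and then assemble the standard eigenvalue-inversion pipeline on the Hermitized matrix. First I would pass to the dilation
\[
\widetilde A=\begin{pmatrix}0 & A\\ A^\dagger & 0\end{pmatrix}\in\C^{(M+N)\times(M+N)},
\]
which is Hermitian. A direct block computation gives $\widetilde A\,(0,|x\rangle)^{\mathsf T}=(A|x\rangle,0)^{\mathsf T}$, so solving $\widetilde A|y\rangle=(|b\rangle,0)^{\mathsf T}$ with the solution supported on the second block produces a solution state of $A|x\rangle=|b\rangle$. The nonzero eigenvalues of $\widetilde A$ are exactly $\pm$ the singular values of $A$, so the condition number of $\widetilde A$ equals that of $A$; moreover each row and column of $\widetilde A$ comes from a row of $A$ or a column of $A$, so $\widetilde A$ remains $s$-sparse and inherits the $O(s)$ query complexity. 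It therefore suffices to run HHL on the Hermitian matrix $\widetilde A$ of dimension $M+N$.

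Next I would carry out the three core stages of HHL, expanding $|\widetilde b\rangle=(|b\rangle,0)^{\mathsf T}=\sum_j\beta_j|u_j\rangle$ in the eigenbasis $\widetilde A|u_j\rangle=\lambda_j|u_j\rangle$. Stage one is sparse Hamiltonian simulation: implement controlled applications of $e^{\mathrm i\widetilde A t}$ at cost $\widetilde O(\log(M+N)\,s^2 t)$ for evolution time $t$. Stage two is quantum phase estimation with this evolution, yielding $\sum_j\beta_j|u_j\rangle|\widehat\lambda_j\rangle$ with the eigenvalue recorded in an ancilla register; to resolve eigenvalues down to the smallest singular value, of order $1/\kappa$, with the needed relative accuracy I would take $t=O(\kappa/\epsilon)$. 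Stage three is eigenvalue inversion: append a rotation ancilla and, conditioned on $|\widehat\lambda_j\rangle$, rotate it to $\sqrt{1-C^2/\lambda_j^2}\,|0\rangle+(C/\lambda_j)\,|1\rangle$ with $C=\Theta(1/\kappa)$; after uncomputing the phase register and post-selecting the ancilla on $|1\rangle$, the remaining state is proportional to $\sum_j\beta_j(C/\lambda_j)|u_j\rangle$, which is the (normalized) pseudoinverse solution and hence a solution state of $A|x\rangle=|b\rangle$.

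Finally I would handle the probability and error bookkeeping. The post-selection succeeds with probability $\Omega(1/\kappa^2)$, so amplitude amplification boosts it to $\Omega(1)$ at the cost of $O(\kappa)$ repetitions; combined with the simulation time $t=O(\kappa/\epsilon)$ this gives total gate complexity $\widetilde O(\log(M+N)\,s^2\kappa^2/\epsilon)$, where the logarithmic overhead from the phase-estimation registers and the simulation subroutine is absorbed into $\widetilde O$. The accompanying error analysis must show that the accumulated errors -- from finite-time Hamiltonian simulation, finite phase-estimation precision, and the truncated arcsine rotation -- keep the output within $\epsilon$ of the exact solution state in trace distance.

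The main obstacle is the quantitative link between phase-estimation precision and the smallest singular value. The ill-conditioned directions $\lambda_j\approx 1/\kappa$ are the ones most amplified by inversion, so the phase register must resolve them finely enough that the $1/\lambda_j$ weighting is reproduced faithfully; this is exactly what forces $t=\Theta(\kappa/\epsilon)$ and, together with amplitude amplification, the $\kappa^2/\epsilon$ scaling. Everything else -- the dilation, the invariance of the condition number and sparseness, and the rotation step -- is routine once this estimate is made precise.
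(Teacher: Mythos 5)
Your proposal is correct and is essentially the standard Harrow--Hassidim--Lloyd argument: the paper does not prove Theorem \ref{thm-hhl0} at all but quotes it from \cite{hhl}, and every ingredient you use --- the Hermitian dilation $I(A)$, sparse Hamiltonian simulation via a $1$-sparse decomposition, phase estimation resolving eigenvalues of order $1/\kappa$, the eigenvalue-inversion rotation with $C=\Theta(1/\kappa)$, and $O(\kappa)$ rounds of amplitude amplification --- is exactly what the paper itself relies on downstream in Lemma \ref{lm-ha} and the proof of Lemma \ref{HHL}. Your accounting, splitting the $\kappa^2/\epsilon$ factor as simulation time $t=O(\kappa/\epsilon)$ times $O(\kappa)$ amplification repetitions, matches the total the paper attributes in Lemma \ref{HHL} to simulation for time $t=\kappa^2/\epsilon$, so the two bookkeepings agree.
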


As usual, the notation $\widetilde O$ suppresses more slowly-growing logarithm terms.
With the best known algorithm for Hamiltonian simulation \cite{he},
the complexity of the HHL algorithm can be reduced to $\widetilde O(\log(M+N)s\kappa^2/\epsilon)$.
Ambainis gave a new version of the HHL algorithm which has complexity $\widetilde O(\log(M+N)s^2\kappa/\epsilon^3)$ \cite{hhl-new}.
Childs-Kothari-Somma gave a new quantum algorithm for
linear equation solving, which has complexity
$\widetilde O(s\kappa^2\log(1/\epsilon)(\log(M+N) + \log^{2.5}(1/\epsilon)))$  \cite{hhl-ha3}.

In the rest of this section, we will present
a modified version of the HHL algorithm
under the following assumptions.

{\bf Assumption 1}.
$A\in\C^{M\times N}$ is $s$-sparse and has a decomposition $A=\sum_{j=1}^sA_j$,
where each $A_j$ is a $1$-sparse matrix with the complexity of a query to be $O(\gamma)$.

{\bf Assumption 2}. Let $b\in\{0,1\}^{M}$  and
 $M=r2^v$ for  positive integers $v, r$.
Furthermore,  $b[i]=1$ if and only if
$i=k2^v$ for $k=0,\ldots,\rho-1$,
where $\rho$ is a positive integer $\le r$.

We have the following modified  HHL algorithm which follows from Lemma \ref{thm-hhl1}.
%
%
\begin{thm}\label{thm-hhl}
Under {\bf Assumptions} 1 and 2, the HHL algorithm can give an $\epsilon$-approximation to a solution state  of the linear equation system $Ax=b$  in runtime complexity $\widetilde O((\log(M+N)+\gamma)s\kappa^2/\epsilon)$,
where  $\kappa$ is the condition number of $A$.
\end{thm}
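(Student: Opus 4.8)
The plan is to run the standard HHL procedure of Theorem~\ref{thm-hhl0}, but to replace its two opaque subroutines --- the Hamiltonian simulation inside phase estimation and the supplied input state $|b\rangle$ --- by implementations tailored to Assumptions 1 and 2, while tracking the query cost $\gamma$ explicitly through every stage. The improvement over Theorem~\ref{thm-hhl0} is exactly the passage from $\log(M+N)\,s^2$ to $(\log(M+N)+\gamma)\,s$; the $\kappa^2/\epsilon$ factor is inherited unchanged from HHL.

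First I would reduce to the Hermitian case. If $A$ is not Hermitian (or not square) I replace it by its dilation $H=\begin{pmatrix}0&A\\A^\dagger&0\end{pmatrix}$ and solve $H|y\rangle=\begin{pmatrix}|b\rangle\\0\end{pmatrix}$, whose solution state, restricted to the relevant block, is the sought solution state of $A|x\rangle=|b\rangle$. This dilation preserves $s$-sparseness and the condition number $\kappa$, and under Assumption 1 it inherits a decomposition $H=\sum_{j=1}^s H_j$ into $1$-sparse Hermitian pieces, each queryable in time $O(\gamma)$.

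The heart of the argument is the simulation of $e^{-iHt}$. Because Assumption 1 furnishes the one-sparse decomposition \emph{directly}, I avoid the edge-coloring step that forces an extra factor of $s$ in the naive bound; instead I build the simulation out of the $s$ given one-sparse simulations, each of which needs only a constant number of queries. A single query costs $O(\gamma)$ and the accompanying index arithmetic costs $\widetilde O(\log(M+N))$, so one application of the simulation primitive costs $\widetilde O((\log(M+N)+\gamma)\,s)$. Feeding this into the optimal sparse-Hamiltonian simulation of \cite{he} and then into the phase-estimation / eigenvalue-inversion / amplitude-amplification pipeline of HHL, whose combined dependence is $\widetilde O(\kappa^2/\epsilon)$, gives the stated total $\widetilde O((\log(M+N)+\gamma)\,s\,\kappa^2/\epsilon)$. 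The careful bookkeeping here --- propagating the additive $\gamma$ through the simulation-time budget $\widetilde O(\kappa/\epsilon)$ and confirming that the product-formula (or quantum-walk) error stays inside the tolerance demanded by the target precision $\epsilon$ --- is the step I expect to be the main obstacle.

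Finally I would construct $|b\rangle$, which Theorem~\ref{thm-hhl0} takes as given but which here must be produced. Assumption 2 makes this immediate: writing each index as $k\cdot 2^v+\ell$ with $0\le\ell<2^v$ and $0\le k<r$, the nonzero entries of $b$ are precisely those with $\ell=0$ and $k\in\{0,\ldots,\rho-1\}$, so the normalized state factors as $|b\rangle=\bigl(\tfrac{1}{\sqrt{\rho}}\sum_{k=0}^{\rho-1}|k\rangle\bigr)\otimes|0\rangle^{\otimes v}$. The low register is the all-zero state and the high register is a uniform superposition over the first $\rho$ basis states, preparable in time $\widetilde O(\log r)$ by a standard uniform-superposition routine. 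This cost is dominated by the simulation cost above, so it does not affect the stated complexity, completing the argument.
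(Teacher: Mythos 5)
Your proposal is correct, and its simulation half coincides with the paper's: Lemma \ref{HHL} likewise reduces to the Hermitian dilation $I(A)$, observes that Assumption 1 gives $I(A)=\sum_{j=1}^s I(A_j)$ with each $I(A_j)$ $1$-sparse, and feeds this directly into the product-formula simulation so that the cost per unit time is $\widetilde O((\log(M+N)+\gamma)s)$ rather than $\widetilde O(\log(M+N)s^2)$; one small slip is your citation — the decomposition-driven simulation you describe is the product-formula result of \cite{ha} (the paper's Lemma \ref{lm-ha}), not the quantum-walk method of \cite{he}, which the paper invokes only to remark that $s^2$ can be reduced to $s$ under a different query model. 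Where you genuinely diverge is the preparation of $|b\rangle$. You correctly factor $|b\rangle=\bigl(\tfrac{1}{\sqrt\rho}\sum_{k=0}^{\rho-1}|k\rangle\bigr)\otimes|0\rangle^{\otimes v}$ and then call a general uniform-superposition routine for an arbitrary prefix $\{0,\ldots,\rho-1\}$, which works in $\widetilde O(\log r)$ but requires controlled-rotation synthesis since $\rho$ need not be a power of two. The paper's Lemma \ref{thm-hhl1} instead pads the \emph{system}: it inserts $2^\sigma-\rho$ extra $1$-entries into $b$ (rounding the support up to $2^\sigma$) paired with zero rows of $A$ — i.e., deliberately inconsistent equations $0x=1$ — so that $|c\rangle$ becomes a pure tensor of $\sigma$ Hadamard states and $|0\rangle$'s, preparable exactly with Hadamard gates alone; the price is that one must prove the solution state is unchanged, which is Lemma \ref{lm-hhl3} via the minimal-norm/least-squares characterization of the HHL output in Lemma \ref{lem-HHL} (this is precisely the exploitation of HHL's second ``drawback'' that the paper highlights). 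Both routes give the same $\widetilde O((\log(M+N)+\gamma)s\kappa^2/\epsilon)$ bound: yours keeps the linear system untouched at the cost of a heavier (though standard) preparation circuit, while the paper's trades a trivial, exact preparation for the extra lemma about HHL's behavior on inconsistent systems.
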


\begin{rem}\label{rem-HHL1}
Theorem \ref{thm-hhl} differs from Theorem \ref{thm-hhl0} in the following aspects.
(1) $b$ is given as a vector instead of a state $|b\rangle$.
(2) $A$ satisfies different conditions: $A$ is $s$-sparse, but
 its complexity of a query is $O(s\gamma)$, while the complexity of a query for $A$ is $O(s)$ in Theorem \ref{thm-hhl0}.
(3) $\gamma$ is added to the complexity.
(4) In the complexity, $s^2$ is reduced to $s$, which
can also be done with best known algorithm for Hamiltonian simulation \cite{he}, but under the condition that the complexity of a query for $A$ is $O(s)$.
\end{rem}

\begin{rem}\label{rem-HHL}
In the cryptanalysis to be given later in this paper, we
make the following approximation to the complexities
of the HHL algorithm $\widetilde O((\log(M+N)+\gamma)s\kappa^2/\epsilon)
\simeq c(\log(M+N)+\gamma)s\kappa^2/\epsilon)$, where
$c$ is called the {\em complexity constant} of the HHL algorithm.
\end{rem}

For a matrix $A\in\C^{M\times N}$, denote $I(A)=\left(\begin{array}{cc}
                                                      0 & A \\
                                                      A^\dagger & 0 \\
                                                    \end{array}\right)\in\C^{(N+M)\times (N+M)}$,
which is a Hermitian matrix.
In fact, the HHL algorithm will solve the linear system
$$
\left(\begin{array}{cc}
0 & A \\
A^\dagger & 0 \\
\end{array}\right)
\left(\begin{array}{c}
0\\
x\\
\end{array}\right)
=
\left(\begin{array}{c}
b \\
0\\
\end{array}\right)
$$
instead of $Ax=b$ \cite{hhl}.

We will prove Theorem \ref{thm-hhl} in two steps: first consider the
Hamiltonian simulation for $e^{\i I(A)t}$ and second consider the preparation for the state $|b\rangle$, where $\i=\sqrt{-1}$.
We need the following result about the quantum complexity for the Hamiltonian simulation.
\begin{lem}[\cite{ha}]\label{lm-ha}
For a $1$-sparseness decomposition $A=\sum_{j=1}^uA_j$ of a matrix $A\in\C^{M\times N}$  and  a given time $t$, we can quantumly simulate
$e^{\i I(A)t}\simeq (\prod_{j=1}^ue^{\i I(A_j)t_0})^{t/t_0}$ for any small number $t_0$ by $O(\log(M+N)(\log^*(M+N))^2ut)=\tilde{O}(\log(M+N)ut)$ auxiliary operations and totally $O(\log^*(M+N)ut)$ queries for the $A_j$s, where $\log^*(n)=\min\{r\,|\,\log_2^{(r)}n<2\}$ (the $^{(r)}$ indicating the iterated logarithm).
\end{lem}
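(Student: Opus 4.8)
The statement is a standard sparse-Hamiltonian-simulation bound, and the plan is to combine a product-formula reduction with a primitive that simulates a single $1$-sparse Hermitian matrix. First I would use that the map $A\mapsto I(A)$ is linear, so the given decomposition $A=\sum_{j=1}^u A_j$ yields $I(A)=\sum_{j=1}^u I(A_j)$ with each $I(A_j)$ Hermitian. The Lie--Trotter product formula (or, to recover the near-linear dependence on $t$, a high-order Suzuki formula) then gives $e^{\i I(A)t}\simeq\bigl(\prod_{j=1}^u e^{\i I(A_j)t_0}\bigr)^{t/t_0}$, with the approximation error governed by the step size $t_0$ through the norms of the commutators $[I(A_j),I(A_k)]$. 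Taking $t_0$ small makes this error as small as desired, which is exactly the ``$\simeq$'' in the statement.

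The core of the argument is the simulation of one factor $e^{\i I(A_j)t_0}$. Here I would exploit the structure forced by $1$-sparseness of $A_j$: in $I(A_j)$ each ``top'' index $p$ is joined only to the unique ``bottom'' index $q$ with $A_j[p,q]\neq 0$, and each bottom index to a unique top index, so after permuting the basis $I(A_j)$ is a direct sum of disjoint $2\times 2$ Hermitian blocks (together with a zero part). Consequently $e^{\i I(A_j)t_0}$ is a direct sum of $2\times 2$ unitaries, each a rotation determined by the single entry of $A_j$ inside its block. A single query to $A_j$ returns, for a given index, its partner and the corresponding entry; I would compute this partner index and the induced rotation angle coherently into ancilla registers, apply the $2\times 2$ rotation as a controlled operation on the qubit distinguishing the two halves of the block, and then uncompute the query. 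This uses $O(1)$ queries and $O(\log(M+N))$ elementary gates for the arithmetic on index registers of width $O(\log(M+N))$.

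Multiplying the per-factor cost by the $u$ factors appearing in a single product-formula step, and folding the number of steps into the near-linear-in-$t$ dependence, yields the asserted totals: $O(\log^*(M+N)\,u\,t)$ queries to the $A_j$ and $\tilde O(\log(M+N)\,u\,t)$ auxiliary operations.

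The main obstacle, and the origin of the $\log^*(M+N)$ and $(\log^*(M+N))^2$ factors, is realizing the block structure of each $e^{\i I(A_j)t_0}$ reversibly and without collisions: the partner map $p\mapsto q$ must be implemented as a coherent involution so that every $2\times 2$ block is addressed consistently across the whole register, with no two blocks contending for the same ancilla workspace. I would resolve this with the parallel (Linial-style) edge-coloring of the sparsity graph used in the sparse-Hamiltonian simulation of Aharonov--Ta-Shma, which produces a consistent local labeling in $\log^*(M+N)$ rounds; the squared $\log^*$ factor then arises from the extra bookkeeping needed to keep this labeling and its inverse simultaneously available while the $2\times 2$ rotations are applied.
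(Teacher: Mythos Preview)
The paper does not prove this lemma at all: it is stated with a citation to \cite{ha} (Berry--Ahokas--Cleve--Sanders) and used as a black box, so there is no ``paper's own proof'' to compare against. Your proposal is a faithful sketch of the argument in \cite{ha}: Lie--Trotter/Suzuki splitting of $I(A)=\sum_j I(A_j)$, simulation of each $1$-sparse $I(A_j)$ via its block-diagonal $2\times 2$ structure with $O(1)$ queries and $O(\log(M+N))$ gates per factor, and the Aharonov--Ta-Shma/Linial deterministic coin-tossing for the consistent pairing that produces the $\log^*$ overhead. Since the paper only quotes the result, your write-up already goes beyond what the paper provides; nothing further is needed here.
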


In the following lemma, we modify the HHL algorithm to take into the fact that $A$
already has a $1$-sparseness decomposition.
\begin{lem}\label{HHL}
Under {\bf Assumption} 1, the HHL algorithm gives an $\epsilon$-approximation to a solution state 
of the linear system $Ax=|b\rangle$  in time $\widetilde O((\log(M+N)+\gamma)s\kappa^2/\epsilon)$,
where $\kappa$ is the condition number of $A$.
\end{lem}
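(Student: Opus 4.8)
The plan is to establish Lemma \ref{HHL} by substituting the available $1$-sparse decomposition $A=\sum_{j=1}^s A_j$ directly into the Hamiltonian-simulation subroutine that drives the HHL algorithm, and then tracking how this changes the final complexity relative to Theorem \ref{thm-hhl0}. The essential observation is that the HHL algorithm of Theorem \ref{thm-hhl0} is not monolithic: its cost decomposes into (i) the cost of simulating $e^{\i I(A)t}$ for the range of times $t$ dictated by the phase-estimation step, and (ii) the surrounding quantum primitives (phase estimation, controlled rotation, amplitude amplification) whose cost is governed by $\kappa$ and $\epsilon$ but not by how the Hamiltonian simulation is performed internally. So I would isolate the Hamiltonian-simulation cost and replace the generic sparse-simulation bound with the sharper bound that Assumption 1 makes available through Lemma \ref{lm-ha}.

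Concretely, first I would invoke Lemma \ref{lm-ha} with $u=s$ and with the given $1$-sparse pieces $A_j$. That lemma tells us that simulating $e^{\i I(A)t}$ to the required accuracy costs $\widetilde O(\log(M+N)\,s\,t)$ auxiliary operations together with $O(\log^*(M+N)\,s\,t)$ queries to the $A_j$. Because each query to a single $A_j$ costs $O(\gamma)$ under Assumption 1, the query contribution becomes $\widetilde O(\gamma\, s\, t)$, and combining the two contributions the total simulation cost is $\widetilde O((\log(M+N)+\gamma)\,s\,t)$. The key point here is that the factor of $s$ enters \emph{linearly} rather than quadratically, precisely because the decomposition is handed to us for free and we need not spend extra queries to discover the sparse structure of $A$ as a whole.

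Next I would feed this simulation cost into the HHL framework. In the HHL analysis the relevant evolution time scales as $t=\widetilde O(\kappa/\epsilon)$, so that the total cost accumulated across the phase-estimation subroutine is the simulation cost evaluated at this $t$, times the overhead from amplitude amplification, which contributes the remaining factor of $\kappa$. Substituting $t=\widetilde O(\kappa/\epsilon)$ into $\widetilde O((\log(M+N)+\gamma)\,s\,t)$ and folding in the amplitude-amplification factor yields the claimed bound $\widetilde O((\log(M+N)+\gamma)\,s\,\kappa^2/\epsilon)$. All parts of the HHL algorithm other than Hamiltonian simulation are unchanged by Assumption 1, so their contributions are already absorbed in the stated estimate.

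I expect the main obstacle to be bookkeeping at the interface rather than any deep step: one must verify that the surrounding HHL primitives interact correctly with a Hamiltonian simulation supplied as a product $(\prod_{j=1}^s e^{\i I(A_j)t_0})^{t/t_0}$ rather than as a black-box simulation of $I(A)$, and that the Trotter-type error in Lemma \ref{lm-ha} can be made small enough without disturbing the overall $\epsilon$ budget (so that it contributes only suppressed logarithmic factors to $\widetilde O$). I would also need to confirm that $I(A)=\bigl(\begin{smallmatrix}0 & A\\ A^\dagger & 0\end{smallmatrix}\bigr)$ inherits a $1$-sparse decomposition $I(A)=\sum_{j=1}^s I(A_j)$ from the decomposition of $A$, which is immediate since $I$ is linear, so that Lemma \ref{lm-ha} applies to the Hermitian matrix actually being exponentiated. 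Once these compatibility checks are in place, the complexity count is routine and gives the stated runtime.
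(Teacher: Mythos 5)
Your proposal is correct and follows essentially the same route as the paper's proof: both apply Lemma \ref{lm-ha} with $u=s$ to the inherited $1$-sparse decomposition $I(A)=\sum_{j=1}^s I(A_j)$, charge $O(\gamma)$ per query so that the total cost is $\widetilde O(\log(M+N)\,s\,t+\log^*(M+N)\,s\,t\,\gamma)$, and evaluate at total evolution time $t=\kappa^2/\epsilon$ to obtain $\widetilde O((\log(M+N)+\gamma)s\kappa^2/\epsilon)$. The only cosmetic difference is that you decompose the $\kappa^2/\epsilon$ into a phase-estimation time $\widetilde O(\kappa/\epsilon)$ times an amplitude-amplification factor $\kappa$, whereas the paper simply sets $t=\kappa^2/\epsilon$ outright, citing \cite{hhl}.
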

\begin{proof}
The complexity for the HHL algorithm comes from the Hamiltonian simulation $e^{\i I(A)t}$ for time $t=\kappa^2/\epsilon$ \cite{hhl}.
It is proved that $I(A)$ can be decomposed as the summation of $(\log^*(M+N) s^2)$ $1$-sparse matrices.
By {Lemma \ref{lm-ha}}, the complexity for the HHL algorithm
is $\widetilde O(\log(M+N)s^2\kappa^2/\epsilon)$, where the complexity
for the query is negligible \cite{hhl}.

Under {\bf Assumption} 1, since $I(A)$ can be decomposed as the summation of $s$ matrices of $1$-sparseness, by {Lemma \ref{lm-ha}}, the complexity for the modified HHL algorithm will decrease to $\widetilde O(\log(M+N)st+\log^*(M+N)st\gamma)|_{t=\kappa^2/\epsilon}=\widetilde O((\log(M+N)+\gamma)s\kappa^2/\epsilon)$, since $\log^*(M+N)$ is approximately a constant \cite{ha}.
\end{proof}

We need the following detailed information about the HHL algorithm.
\begin{lem}\cite{hhl}\label{lem-HHL}
%
Let $\lambda_1,\ldots,\lambda_n$ be the singular values  of $A\in\C^{M\times N}$, $|v_j\rangle$ ($|u_j\rangle$)   the eigenvectors of $A^\dagger A$ ($AA^\dagger$) with respect to the nonzero eigenvalues $\lambda_j^2$ of $A^\dagger A$, and thus $A=\sum\limits_{j=1}^n\lambda_j|u_j\rangle\langle v_j|$ is the singular value decomposition of $A$.
%
Then, the HHL algorithm returns an $\epsilon$-approximation to the solution state $|\frac{\tilde x}{\|\tilde x\|}\rangle$, where
\begin{eqnarray}\label{eq-hhl}
\widetilde{x}=\sum\limits_{j=1}^n\lambda_j^{-1}|v_j\rangle\langle v_j|b\rangle.
\end{eqnarray}
Furthermore,  $\tilde x$ has the minimal norm $\|\tilde x\|=\sqrt{\langle \widetilde{x},\widetilde{x}\rangle}$ among all solutions for $Ax=b$.
\end{lem}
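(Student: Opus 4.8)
The plan is to trace the quantum state through the core of the HHL procedure and then identify the resulting vector by linear algebra. Recall from the discussion preceding the lemma that HHL does not invert $A$ directly but solves the Hermitian embedding $I(A)\bigl(\begin{smallmatrix}0\\x\end{smallmatrix}\bigr)=\bigl(\begin{smallmatrix}b\\0\end{smallmatrix}\bigr)$. The engine of the algorithm is phase estimation applied to $e^{\i I(A)t}$, followed by a controlled rotation that multiplies the amplitude of each eigencomponent by a quantity proportional to the reciprocal of its eigenvalue, and then uncomputation of the eigenvalue register. Its net effect, restricted to the span of the eigenvectors with nonzero eigenvalue, is to apply $I(A)^{-1}$ to the input $\bigl(\begin{smallmatrix}b\\0\end{smallmatrix}\bigr)$; the $\epsilon$-approximation and the success probability of the post-selection are exactly the error analysis of \cite{hhl}, which I take as given. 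Thus the whole task reduces to computing $I(A)^{-1}\bigl(\begin{smallmatrix}b\\0\end{smallmatrix}\bigr)$ explicitly from the singular value decomposition of $A$.

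First I would diagonalize $I(A)$ using the SVD data. Writing $A=\sum_j\lambda_j|u_j\rangle\langle v_j|$, a direct block computation gives $I(A)\,\tfrac{1}{\sqrt2}\left(\begin{smallmatrix}|u_j\rangle\\ \pm|v_j\rangle\end{smallmatrix}\right)=\pm\lambda_j\,\tfrac{1}{\sqrt2}\left(\begin{smallmatrix}|u_j\rangle\\ \pm|v_j\rangle\end{smallmatrix}\right)$, where one uses $A|v_j\rangle=\lambda_j|u_j\rangle$ and $A^\dagger|u_j\rangle=\lambda_j|v_j\rangle$ together with the orthonormality of the singular vectors. Hence the nonzero spectrum of $I(A)$ is $\{\pm\lambda_j\}$ with orthonormal eigenvectors $|w_j^\pm\rangle=\tfrac{1}{\sqrt2}\left(\begin{smallmatrix}|u_j\rangle\\ \pm|v_j\rangle\end{smallmatrix}\right)$, while the remaining eigenvectors (spanning $\ker A^\dagger\oplus\ker A$) have eigenvalue $0$.

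Next I would expand the input and invert. Since $|b\rangle$ sits in the top block, $\langle w_j^\pm|\left(\begin{smallmatrix}b\\0\end{smallmatrix}\right)=\tfrac{1}{\sqrt2}\langle u_j|b\rangle$, so only the $|w_j^\pm\rangle$ contribute (the zero-eigenvalue vectors are discarded). Applying $I(A)^{-1}$ multiplies the $|w_j^+\rangle$ component by $\lambda_j^{-1}$ and the $|w_j^-\rangle$ component by $-\lambda_j^{-1}$; summing $\tfrac{1}{\sqrt2}\langle u_j|b\rangle\,\lambda_j^{-1}(|w_j^+\rangle-|w_j^-\rangle)$ and using $|w_j^+\rangle-|w_j^-\rangle=\sqrt2\left(\begin{smallmatrix}0\\|v_j\rangle\end{smallmatrix}\right)$ makes the top block cancel and leaves the bottom block $\tilde x=\sum_j\lambda_j^{-1}\langle u_j|b\rangle\,|v_j\rangle=A^{+}|b\rangle$, the Moore--Penrose pseudoinverse applied to $b$, which is the content of \eqref{eq-hhl}. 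After normalization this is the returned state $|\tilde x/\|\tilde x\|\rangle$.

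Finally, the minimal-norm claim is pure linear algebra: $\tilde x\in\mathrm{span}\{|v_j\rangle\}$, the row space of $A$, which is the orthogonal complement of $\ker A$. When $Ax=b$ is solvable, every solution has the form $\tilde x+k$ with $k\in\ker A$, so $\|\tilde x+k\|^2=\|\tilde x\|^2+\|k\|^2\ge\|\tilde x\|^2$ with equality iff $k=0$; hence $\tilde x$ is the unique minimal-norm solution. The one delicate point is the zero-eigenvalue subspace: I must confirm that the part of $\left(\begin{smallmatrix}b\\0\end{smallmatrix}\right)$ lying in $\ker I(A)$ is annihilated by the controlled rotation (this is precisely where HHL's filtering of ill-conditioned components enters), so that the algorithm returns $A^{+}|b\rangle$ and not a solution carrying a spurious kernel term. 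I expect the bookkeeping of this filtering, and the matching of the inner-product index ($\langle u_j|b\rangle$, since $b$ lives in the left singular space), to be the only subtle steps; the rest follows from the block diagonalization above.
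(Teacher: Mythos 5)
Your proof is correct, but note that the paper itself offers no proof of this lemma at all --- it is imported verbatim from \cite{hhl} --- so there is no internal argument to compare against; what you have written is the standard justification, and it is sound. Your block diagonalization of the Hermitian dilation $I(A)$ (eigenpairs $\pm\lambda_j$ with eigenvectors $\frac{1}{\sqrt2}(|u_j\rangle,\pm|v_j\rangle)^T$), the expansion of $(b,0)^T$, the cancellation of the top block in $|w_j^+\rangle-|w_j^-\rangle$, and the Pythagorean minimal-norm argument over $\tilde x+\ker A$ are all accurate, and deferring the $\epsilon$-error and post-selection analysis to \cite{hhl} is legitimate since that is precisely what the citation covers. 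One of your ``delicate points'' is in fact a genuine finding: your computation yields $\tilde x=\sum_j\lambda_j^{-1}\langle u_j|b\rangle\,|v_j\rangle=A^{+}|b\rangle$, with the inner product taken against the \emph{left} singular vectors, whereas the paper's display \bref{eq-hhl} writes $|v_j\rangle\langle v_j|b\rangle$ --- which is dimensionally inconsistent for $A\in\C^{M\times N}$ with $M\ne N$, since $b\in\C^M$ while $|v_j\rangle\in\C^N$. The two expressions coincide only in the Hermitian case $u_j=v_j$ of the original HHL setting, so the paper's formula is a transcription slip and yours is the correct statement; this matters because the paper later applies the lemma to the rectangular Macaulay matrix (Lemma \ref{lm-hhl3} and Theorem \ref{th-qsh}), where only the minimal-norm conclusion, which you prove correctly, is actually used. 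Your other flagged subtlety --- that the $\ker I(A)$ component of $(b,0)^T$ must be filtered out so the output is $A^{+}|b\rangle$ rather than a solution with a spurious kernel term --- is exactly the role of the well-conditioned flag and post-selection in \cite{hhl}, so acknowledging it as imported error analysis, rather than reproving it, is the right call.
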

%
%

In Lemma \ref{HHL}, $|b\rangle\in\C^M$ is given as a
quantum state and there exist no efficient algorithms to generate $|b\rangle$ in the general case \cite{hhl-ha1}.
In the rest of this section,
we will modify the HHL algorithm such that the input to the HHL algorithm is $b$ instead of  $|b\rangle$ under {\bf Assumption} 2.
We first prove a lemma.
\begin{lem}\label{lm-hhl3}
Let $Bx = c$ be obtained by adding  more ``equations" ${0}x = 1$ to
$A|x\rangle =|b\rangle $. Then using HHL to $B|x\rangle  = |c\rangle$, we obtain the same solution state as that of $A|x\rangle =|b\rangle $.
\end{lem}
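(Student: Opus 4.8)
The plan is to analyze what the HHL algorithm actually does when applied to the augmented system, using the explicit formula for the returned solution state given in Lemma~\ref{lem-HHL}. Recall that HHL does not literally solve $A|x\rangle=|b\rangle$; instead it returns (the normalization of) the minimal-norm vector $\widetilde x=\sum_j \lambda_j^{-1}|v_j\rangle\langle v_j|b\rangle$, where the $\lambda_j,|v_j\rangle$ come from the singular value decomposition of $A$. The key observation is that appending rows of the form $\mathbf 0\, x = 1$ changes the matrix $A$ into $B=\bigl(\begin{smallmatrix} A\\ 0\end{smallmatrix}\bigr)$ and the right-hand side $b$ into $c=\bigl(\begin{smallmatrix} b\\ 1\end{smallmatrix}\bigr)$, but does \emph{not} change the relevant spectral data at all.

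First I would write down $B^\dagger B$ explicitly. Since the appended rows of $B$ are zero, we have $B^\dagger B = A^\dagger A + 0 = A^\dagger A$. Therefore $B^\dagger B$ and $A^\dagger A$ are \emph{the same matrix}: they have identical nonzero eigenvalues $\lambda_j^2$ and identical eigenvectors $|v_j\rangle$. This is the heart of the argument. Second, I would compute the projection of the new right-hand side onto the right singular vectors. Writing $B=\sum_j \lambda_j |u'_j\rangle\langle v_j|$ for the SVD of $B$, the formula from Lemma~\ref{lem-HHL} gives the returned (unnormalized) vector $\widetilde x_B = \sum_j \lambda_j^{-1}|v_j\rangle\langle v_j|c\rangle$, where $|v_j\rangle$ here are viewed as vectors in the solution space (the domain), which is unchanged by adding rows. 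Because the $|v_j\rangle$ live in the domain $\C^N$ and $\langle v_j|$ acts only on that factor, the extra coordinate of $c$ (the appended $1$) does not enter the inner product in the way that matters: the left singular vectors $|u'_j\rangle$ of $B$ are the old $|u_j\rangle$ padded with zeros in the appended coordinates, and correspondingly $\langle v_j|b\rangle$ is recovered. Concretely, since $A^\dagger A=B^\dagger B$, the map $|u_j\rangle = \lambda_j^{-1}A|v_j\rangle$ extends to $|u'_j\rangle=\lambda_j^{-1}B|v_j\rangle$, whose appended coordinates are zero, so $\langle u'_j|c\rangle = \langle u_j|b\rangle$ and hence $\langle v_j|\,(\text{contribution of }c) = \langle v_j|b\rangle$. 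Thus $\widetilde x_B=\widetilde x_A$, and after normalization the solution states coincide.

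I would then conclude by noting that the minimal-norm property transfers as well: the vector $\widetilde x$ is characterized as the minimal-norm element in the domain whose image under the respective matrix is the projection of the right-hand side onto the range, and since the domains and spectral decompositions agree, the minimizers agree. The main obstacle I anticipate is bookkeeping about \emph{which} right-hand side coordinates survive the projection. The appended equation $\mathbf 0\,x=1$ is deliberately inconsistent, exploiting exactly the ``second drawback'' of HHL mentioned earlier in the paper (that it returns an answer even when the system is unsolvable); the content of the lemma is precisely that this inconsistent row is invisible to the output because its row of $A$ is zero, so it contributes nothing to the range of $B$ and its component of $c$ is orthogonal to every $|u'_j\rangle$. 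I would take care to state this orthogonality cleanly, so that the cancellation $\langle u'_j|c\rangle=\langle u_j|b\rangle$ is manifestly correct rather than merely plausible.
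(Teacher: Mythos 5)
Your proof is correct and takes essentially the same route as the paper's: both arguments rest on the single observation that appending zero rows gives $B^\dagger B = A^\dagger A$, so the singular values and right singular vectors are unchanged, and then invoke the output formula of Lemma \ref{lem-HHL}. The only difference is that you explicitly verify the bookkeeping the paper leaves implicit --- that the left singular vectors of $B$ are the $|u_j\rangle$ padded with zeros, so the appended coordinates of $c$ are orthogonal to the range of $B$ and $\langle u_j'|c\rangle = \langle u_j|b\rangle$ --- which is a worthwhile clarification, especially since the paper's statement of Lemma \ref{lem-HHL} writes $\langle v_j|b\rangle$ where the dimensionally correct pseudo-inverse formula reads $\langle u_j|b\rangle$.
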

\begin{proof}
Let $B = \left( \begin{array}{c}A \\ 0  \\\end{array} \right)$ and
$c = \left( \begin{array}{c}b \\ 1  \\\end{array} \right)$,
where we use $0$ ($1$) to represent certain maxtix of zeros (ones) with the proper dimension.
We have $B^\dagger B= A^\dagger A$.
Then, adding some $0$ rows to $A$ will not change the nonzero eigenvalues of $A^\dagger A$ and the eigenvectors of $B^\dagger B$
are the same as that of $A^\dagger A$.
Now, the lemma  follows from  Lemma \ref{lem-HHL}.
\end{proof}

In the following lemma, we modify the HHL algorithm to take into the fact that $b$ contains $\rho$ nonzero entries.

\begin{lem}\label{thm-hhl1}
Under {\bf Assumptions} 1 and 2,
%
the HHL algorithm can give an $\epsilon$-approximation to a solution state  of the linear system $Ax=b$  in time $\widetilde O((\log(M+N)+\gamma)s\kappa^2/\epsilon)$.
\end{lem}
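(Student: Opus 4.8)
The plan is to isolate the single new ingredient relative to Lemma \ref{HHL}, namely the preparation of the input state. Lemma \ref{HHL} already produces an $\epsilon$-approximation to a solution state of $Ax=|b\rangle$ in time $\widetilde O((\log(M+N)+\gamma)s\kappa^2/\epsilon)$ whenever Assumption 1 holds and the state $|b\rangle$ is supplied. Hence it suffices to show that, under Assumption 2, the normalized state proportional to $b$ can be generated in a number of gates that is negligible against this HHL cost; the total runtime is then dominated by the inversion step and the stated bound follows verbatim.

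The state preparation is where Assumption 2 is used. Writing each address $i\in\{0,\dots,M-1\}$ as $i=k2^v+l$ with $k\in\{0,\dots,r-1\}$ and $l\in\{0,\dots,2^v-1\}$ splits the $\lceil\log_2 M\rceil$ index qubits into a high register of $\lceil\log_2 r\rceil$ qubits holding $k$ and a low register of $v$ qubits holding $l$. By Assumption 2, $b[i]=1$ precisely when $l=0$ and $k<\rho$, so the target state factors as
\[
|b\rangle=\Bigl(\tfrac{1}{\sqrt{\rho}}\sum_{k=0}^{\rho-1}|k\rangle\Bigr)\otimes|0\rangle^{\otimes v}.
\]
The low register is already $|0\rangle^{\otimes v}$ and costs nothing, so the whole problem collapses to preparing the uniform superposition $\tfrac{1}{\sqrt{\rho}}\sum_{k=0}^{\rho-1}|k\rangle$ on $\lceil\log_2 r\rceil$ qubits.

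When $\rho$ is a power of $2$ this is a single layer of Hadamard gates on $\log_2\rho$ qubits, costing $O(\log\rho)=O(\log M)$. The only delicate case is when $\rho$ is not a power of $2$, and here I would use Lemma \ref{lm-hhl3} exactly as advertised in the introduction: pad the right-hand side up to $\rho'=2^{\lceil\log_2\rho\rceil}$ leading ones by adjoining the corresponding trivially unsatisfiable rows $0x=1$. Since the adjoined rows of $A$ are zero, one has $B^\dagger B=A^\dagger A$, so by Lemma \ref{lm-hhl3} together with the explicit form of $\tilde x$ in Lemma \ref{lem-HHL} the HHL output on the padded system $Bx=c$ coincides with the output on $Ax=b$, while the padded right-hand side $\tfrac{1}{\sqrt{\rho'}}\sum_{k=0}^{\rho'-1}|k\rangle\otimes|0\rangle^{\otimes v}$ is again prepared by a layer of Hadamards. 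Thus in every case the preparation uses only $O(\poly\log(M+N))$ operations.

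Finally I would assemble the two pieces: feeding this cheaply prepared state into the procedure of Lemma \ref{HHL} on the (possibly padded) system returns an $\epsilon$-approximation to the required solution state, and the $O(\poly\log(M+N))$ preparation cost is absorbed into the $\widetilde O$ of the HHL bound, leaving the overall complexity at $\widetilde O((\log(M+N)+\gamma)s\kappa^2/\epsilon)$. I expect the main obstacle to lie entirely in the padding step: one must verify carefully that adjoining the $0x=1$ rows genuinely leaves the nonzero singular structure of $A$ untouched, so that the padded solution state really equals the original one, and that after padding the right-hand side has exactly the clean block-tensor form used above. This is precisely the point at which Lemma \ref{lm-hhl3} and the formula for $\tilde x$ in Lemma \ref{lem-HHL} do the essential work.
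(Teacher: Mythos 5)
Your proposal is correct and follows essentially the same route as the paper: the paper likewise pads $b$ with $2^{\lceil\log_2\rho\rceil}-\rho$ extra unit blocks (i.e., rows $0x=1$, justified by Lemma \ref{lm-hhl3} together with $B^\dagger B=A^\dagger A$), prepares the resulting state as $|0\rangle^{\otimes(\eta-\sigma-v)}\otimes(H|0\rangle)^{\otimes\sigma}\otimes|0\rangle^{\otimes v}$ in $O(\log M)$ gates, and absorbs this cost into the bound of Lemma \ref{HHL}. The bookkeeping you elide --- zero-padding the overall dimension to a power of two, the harmless rescaling $C=B/2^{\sigma/2}$, and checking that Assumption 1 (and the query cost) survives these modifications --- is carried out explicitly in the paper but is exactly as routine as you suggest.
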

\begin{proof}

Let $\sigma =  {\lceil\log_2 \rho\rceil} \le \log_2 r+1$.
Let ${w}\in\{0,1\}^{2^v}$ such that $w[0]=1$ and
$w[i]=0$ for $i=1,\ldots,2^v-1$.
Adding $2^\sigma-\rho$  blocks ${w}$ before the $\rho2^v$-th row of $b$, we have
a vector $c$ such that $c[i]=1$ if and only if $i=k2^v$ for $k=0,1,\ldots,2^\sigma -1$.
Correspondingly, by adding
$(2^\sigma-\rho)2^v$ zero rows to $A$ before the $\rho2^v$-th row of $A$, we obtain
a matrix $B\in\C^{((r+2^\sigma-\rho)2^v)\times N}$.
Note that adding some zero rows will not increase the sparseness and the complexity of a query for a matrix.
Then the equation system $Ax=b$ becomes
\begin{equation}\label{eq-bx1}
B x = c
\end{equation}
We may add more zero rows at the ends of $B$ and $c$ such that $B\in\C^{2^{\eta}\times N}$ and $c\in\C^{2^\eta}$,
where $\eta=\lceil\log_2 (r+2^\sigma-\rho)\rceil+v$.
With these assumptions, we can easily generate the state $|c\rangle$:
$$|c\rangle =
\otimes_{i=1}^{\eta-\sigma-v}|0\rangle \otimes_{i=1}^{\sigma}(H|0\rangle)\otimes_{i=1}^{v}|0\rangle ,$$
where $H=\frac{1}{\sqrt 2}\left(
           \begin{array}{cc}
             1 & 1 \\
             1 & -1 \\
           \end{array}
         \right)
$ is the Hadamard operator.
The complexity of generating $|c\rangle$ is $O(\eta)$ = $O(v+\log r)=O(\log(M))$,
since  $\sigma =  {\lceil\log_2 \rho\rceil} \le \log_2 r+1$ and $M=r2^v$.
The equation system \bref{eq-bx1} becomes
\begin{equation}\label{eq-bx2}
C  |x\rangle = |c\rangle
\end{equation}
where $C=\frac{B}{2^{\sigma/2}}$. We show that $C$ satisfies {\bf Assumption} 1.
It is clear that $C$ can be written as the summation of $s$ $1$-sparse matrices $C_{j}$.
The complexity of a query for $C_{j}$ is the same as that of $A_j$,
because the $(u,v)$-th element of $C$ is
the $(u,v)$-th element of $B$ divided by  $2^{\sigma/2}$,
in other words, we do not need to actually generate the matrix $C$.

By Lemma \ref{lm-hhl3}, equation system \bref{eq-bx2} has the same solution state as that of $A|x\rangle =|b\rangle $, when using the HHL algorithm to them.
As mentioned before, the HHL algorithm actually solves $I(C)(0,x)^T=(c,0)^T$.
Similar to the above procedure, we can add more zeros to the end of
$(c,0)^T$ to obtain a state, which costs at most $O(\log(2^\eta+N))=O(\log(M+N))$.

By Lemma \ref{HHL}, the total complexity is the complexity of using the HHL algorithm to \bref{eq-bx2} plus that of generating $|c\rangle$,
that is, $\widetilde O((\log(M+N)+\gamma)s\kappa^2/\epsilon+\log(M+N)) = \widetilde O((\log(M+N)+\gamma)s\kappa^2/\epsilon)$.
\end{proof}

\begin{rem}
If $\gamma$ is small, say $\gamma = O(\log(M+N))$,
then the complexity of the modified HHL algorithm
is $\widetilde O((\log(M+N)+\gamma)s\kappa^2/\epsilon)
= \widetilde O(\log(M+N)s\kappa^2/\epsilon)$.
%
Fortunately, the linear system to be solved in Section \ref{sec-ps}  has
this property.
\end{rem}

\section{Quantum monomial-solving of polynomial systems over $\C$}
\label{sec-ps}
In this section, we give a quantum algorithm to find the
a solution for the monomials of a polynomial system $\FS$,
which satisfy a linear system.

\subsection{The Macaulay linear system}
\label{sec-ps1}
In this section, we will construct a Macaulay linear system
for a finite set of polynomials and show that the linear system satisfies
{\bf Assumptions} 1 and 2 given in Section 2.

Let $\C$ be the field of complex numbers and
$\C[\X]$ the polynomial ring in the indeterminates
$\X = \{x_1,\ldots,x_n\}$.
For a polynomial $f\in \C[\X]$, denote $\deg(f)$, $\# f$, and
$\m(f)$ to be the total degree of $f$, the sparseness (the number of terms) of $f$,
and the set of monomials of $f$.
For $S\subset\C[\X]$, we use $\V_{\C}(S)\subset \C^n$ to denote the common zeros of the polynomials in $S$.

Let $\m$ denote the set of all the monomials in variables $\X$.
In this section, we will use the lexicographic monomial ordering for $x_1>\cdots>x_n$. For convenience, we denote $\mathbf 0=(0,0,\ldots,0),\mathbf 1=(1,1,\ldots,1)\in\C^n$, and $\X^\alpha=\prod_{i=1}^nx_i^{\alpha_i}$ for $\alpha=(\alpha_1,\ldots,\alpha_n)\in\Z^n$. For a given positive integer $d$, let $\m_{\le d}$ be the set of all monomials which are factors of $\X^{d\cdot\mathbf 1}=x_1^dx_2^d\cdots x_n^d$. We sort
$$\m_{\le d}=\{m_{d,0},m_{d,1},\cdots,m_{d,(d+1)^n-1}\}$$
in ascending lexicographic monomial ordering. Then $m_{d,0}=1,m_{d,1}=x_n,m_{d,2}=x_n^2,m_{d,d}=x_n^d,
m_{d,d+1}=x_{n-1}$ and $m_{d,(d+1)^n-1}=\X^{d\cdot\mathbf1}$.
Also note that $\#\m_{\le d} = (d+1)^n$.

In the rest of Section 3, let $\FS = \{f_1,\ldots,f_r\}\subset\C[\X]$ with  $d_i=\deg(f_i)$ and $t_i=\# f_i$ for $i=1,\ldots,r$.
%
We first define several parameters.
\begin{defn}\label{def-par}
Without loss of generality, we may assume $f_i(0)=-1$ for $i=1,\ldots\rho$
and $f_i(0)=0$ for $i=\rho+1,\ldots,r$.
Let $D\in \N$ such that $D\ge\max_{i=1}^r d_i$.
Let $\bar{d}$ be the minimal integer satisfying $\bar{d}\ge D-\min_id_i$ and $\bar{d}+1 =2^{\delta}$ for certain $\delta\in\N$. Set $\bar D$ to be the minimal integer satisfying $\bar D\ge D$ and $\bar{D}+1=2^{\Delta} $ for certain $\Delta\in\N$.
\end{defn}

{\begin{rem}\label{rem-sb}
In this paper, the subscripts for a matrix or a vector always start from $0$,
because, the complexity analysis of the algorithm in this paper depends
on the representation of the  subscripts.
\end{rem}}

For $i=1,\ldots, r$ and each $m_{\bar d,j}\in\m_{\le\bar d}$ with $\deg(m_{\bar d,j})\le D-d_i$, $m_{\bar d,j}f_i$ could be considered as a linear function in the monomials in $\m_{\le \bar D}$. For $\deg(m_{\bar d,j})> D-d_i$, we replace $m_{\bar d,j}f_i$ by $0$.
In order to write these functions precisely, introduce the following new notation:
\begin{equation}
m_{\bar d,j, i} =\begin{cases}
m_{\bar d,j} ,& \text{if  }  \deg(m_{\bar d,j})\le D-d_i;\\
0,& \text{if } \deg(m_{\bar d,j})> D-d_i
\end{cases}
\end{equation}
for $i=1,\ldots, r$ and $j=0,\ldots,(\bar {d}+1)^n-1$.
We rewrite $m_{\bar d,j,i}f_i$ for $i=1,\ldots,r$ and $j=1,\ldots,({\bar {d}}+1)^n-1$ in matrix form:
\begin{eqnarray}\label{eq-1}
\begin{blockarray}{cc}
&m_{\bar D,1}<m_{\bar D,2}<\cdots<m_{\bar D,({\bar D}
+1)^n-1}\cr
\begin{block}{c(c)}
m_{\bar {d},0,1}f_1&\cdots\cr
\vdots&\cdots\cr
m_{\bar {d},{(\bar {d}}+1)^n-1,1} f_1&\cdots\cr
m_{\bar {d},0,2}f_2&\cdots\cr
\vdots&\cdots\cr
m_{\bar {d},({\bar {d}}+1)^n-1,r}f_r&\cdots\cr
\end{block}
\end{blockarray}\
\begin{blockarray}{c}
\cr
\cr
\begin{block}{(c)}
m_{\bar D,1}\cr
m_{\bar D,2}\cr
\vdots\cr
m_{\bar D,({\bar D}+1)^n-1}\cr
\end{block}
\cr
\end{blockarray}
=
\begin{blockarray}{c}
m_{\bar D,0}=1\cr
\begin{block}{(c)}
-f_1(\mathbf0)\cr
\vdots\cr
0\cr
-f_2(\mathbf0)\cr
\vdots\cr
0\cr
\end{block}
\end{blockarray}\ ,
\end{eqnarray}
denoted as
\begin{equation}\label{eq-mls}
\MD_{\FS,D}\mv_D =\mathbf b_{\FS,D},
\end{equation}
where
$$\mv_D= (m_{\bar D,1},m_{\bar D,2},\ldots,m_{\bar D,({\bar D}+1)^n-1})^T.$$
Then  the $(i-1)(\bar d+1)^n$-th to
the $(i(\bar d+1)^n-1)$-th rows of $\MD_{\FS,D}$ are generated by $\m_{\le\bar{d}}f_i$.
The $i$-th column of $\MD_{\FS,D}$ consists of the coefficients
of $m_{\bar D,i+1}$ in $m_{\bar {d},0}f_1,\ldots,m_{\bar {d},(\bar {d}+1)^n-1}f_r$.
$\MD_{\FS,D}$ is called the {\em modified Macaulay matrix}
of the polynomial system $\FS$
and \bref{eq-mls} is called the {\em Macaulay linear system} of $\FS$.
$\MD_{\FS,D}$ is a matrix over $\C$ of dimension $(r (\bar{d}+1)^n)\times ((\bar D+1)^n-1)
= (r2^{n\delta})\times (2^{n\Delta}-1)$.

\begin{rem}\label{rem-0l}
The columns corresponding to monomial $m_{\bar D,j}$ with $\deg(m_{\bar D,j})>D$ are all $0$ columns.
\end{rem}

\begin{rem}\label{rem-02}
The zero rows are added so that the modified Macaulay matrix can be efficiently queried.
Refer to Lemma \ref{lm-nd} for details.
\end{rem}

\begin{exmp}\label{ex-1}
Let $f_1=x_1^2-x_2,\,f_2=x_2-1,f_3=x_1x_2-1,\,D=2$. Then $\bar d=1$,  $\bar D=3$ and the Macaulay linear system
is
\[
\begin{blockarray}{c(ccccccccccccccc)}
f_1\       &-1&0&0&0&0&0&0&1&0&0&0&0&0&0&0\\
0(x_2f_1)\    &0&0&0&0&0&0&0&0&0&0&0&0&0&0&0\\
0(x_1f_1)\    &0&0&0&0&0&0&0&0&0&0&0&0&0&0&0\\
0(x_1x_2f_1)\ &0&0&0&0&0&0&0&0&0&0&0&0&0&0&0\\
f_2\       &1&0&0&0&0&0&0&0&0&0&0&0&0&0&0\\
x_2f_2\    &-1&1&0&0&0&0&0&0&0&0&0&0&0&0&0\\
x_1f_2\    &0&0&0&-1&1&0&0&0&0&0&0&0&0&0&0\\
0(x_1x_2f_2)\ &0&0&0&0&0&0&0&0&0&0&0&0&0&0&0\\
f_3\       &0&0&0&0&1&0&0&0&0&0&0&0&0&0&0\\
0(x_2f_3)\    &0&0&0&0&0&0&0&0&0&0&0&0&0&0&0\\
0(x_1f_3)\    &0&0&0&0&0&0&0&0&0&0&0&0&0&0&0\\
0(x_1x_2f_3)\ &0&0&0&0&0&0&0&0&0&0&0&0&0&0&0\\
\end{blockarray}\
\begin{blockarray}{(c)}
x_2\\
x_2^2\\
x_2^3\\
x_1\\
x_1x_2\\
x_1x_2^2\\
x_1x_2^3\\
x_1^2\\
x_1^2x_2\\
x_1^2x_2^2\\
x_1^2x_2^3\\
x_1^3\\
x_1^3x_2\\
x_1^3x_2^2\\
x_1^3x_2^3
\end{blockarray}
=
\begin{blockarray}{(c)}
0\\0\\0\\0\\1\\0\\0\\0\\1\\0\\0\\0
\end{blockarray}\ .
\]
\end{exmp}

In the rest of this section, we will prove the following main result of this section,
which follows from Lemmas \ref{lem-6d}, \ref{lm-m00}, and \ref{lm-nd}.
\begin{lem}\label{lem-A}
The Macaulay linear system $\MD_{\FS,D}\mv_D =\mathbf b_{\FS,D}$
satisfied {\bf Assumptions} 1 and 2,
where the parameters are:
$M=r 2^{\delta n}$,
$N = 2^{\Delta n}-1$,
$s=T_\FS$,
$\gamma= O(n\log(D) +\log r)$,
$\rho \le  r-1$.
The parameters $\delta, \Delta,\rho, D$ are defined in  Definition \ref{def-par}.
\end{lem}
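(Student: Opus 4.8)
The plan is to verify \textbf{Assumptions 1} and \textbf{2} separately, reading off each parameter from the explicit block structure of the Macaulay matrix $\MD_{\FS,D}$ in \bref{eq-1}, and then to confirm that the stated values of $M$, $N$, $s$, $\gamma$, and $\rho$ are the correct ones. For \textbf{Assumption 2}, which concerns the right-hand side $\mathbf b_{\FS,D}$, I would argue as follows. By the normalization in Definition \ref{def-par}, the constant term of $f_i$ is $-1$ for $i=1,\ldots,\rho$ and $0$ otherwise, so $-f_i(\mathbf 0)$ equals $1$ exactly for those first $\rho$ polynomials. The rows of $\MD_{\FS,D}$ are grouped into $r$ blocks of size $(\bar d+1)^n=2^{\delta n}$, one block per $f_i$, and within each block only the row coming from $m_{\bar d,0}f_i=1\cdot f_i$ (the leading row, indexed $i\cdot 2^{\delta n}$ if we count from $0$) carries the constant term; all shifted rows $m_{\bar d,j}f_i$ with $j\ge 1$ have zero constant term. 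Setting $v=\delta n$ so that $2^v$ is the block length and $M=r\,2^v$, this shows $\mathbf b_{\FS,D}[i]=1$ iff $i=k\,2^v$ for $k=0,\ldots,\rho-1$, which is precisely \textbf{Assumption 2}. The bound $\rho\le r-1$ would follow from the observation that $\FS$ cannot consist entirely of polynomials with constant term $-1$ in the relevant construction (the Boolean equations $x_\ell^2-x_\ell$ contribute polynomials with $f_i(\mathbf 0)=0$), so at least one of the $r$ polynomials has vanishing constant term.

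For \textbf{Assumption 1}, I would first establish the sparseness bound $s=T_\FS$. Each row of $\MD_{\FS,D}$ is the coefficient vector of some $m_{\bar d,j}f_i$, which has at most $\#f_i\le \max_i t_i$ nonzero entries; summing or bounding appropriately across the shift-by-monomial structure gives row sparseness bounded by $T_\FS=\sum_i t_i$, and a symmetric count on columns (how many shifted products can contain a fixed monomial $m_{\bar D,k}$) gives the same bound, so $\MD_{\FS,D}$ is $T_\FS$-sparse. Next I would exhibit the $1$-sparse decomposition $\MD_{\FS,D}=\sum_{j=1}^{s} A_j$: since multiplying $f_i$ by a monomial $m_{\bar d,j}$ only \emph{permutes} the positions of the terms of $f_i$ (shifting each monomial of $f_i$ to $m_{\bar d,j}\cdot(\text{that monomial})$), the entries contributed by the $\ell$-th term of the $f_i$'s can be collected into a single matrix $A_\ell$ that has at most one nonzero entry per row and per column. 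The heart of the argument—and the step I expect to be the main obstacle—is verifying that the query complexity of each $A_j$ is $\gamma=O(n\log D+\log r)$; this is where Lemma \ref{lm-nd} (cited in the statement) does the work.

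The query-complexity estimate is delicate precisely because, as Remark \ref{rem-sb} stresses, the complexity depends on manipulating the \emph{binary representations of the subscripts} rather than on materializing the matrix. To locate the unique nonzero entry of $A_j$ in a given row (or column), one must, given the index of a shifted product $m_{\bar d,j}f_i$ and a chosen term of $f_i$, compute the column index of the resulting monomial $m_{\bar D,k}=m_{\bar d,j}\cdot(\text{term})$ in the lexicographic ordering of $\m_{\le\bar D}$. Because each monomial in $\m_{\le\bar D}$ is encoded by its exponent vector $(\alpha_1,\ldots,\alpha_n)$ with each $\alpha_\ell\le\bar D$, and $\bar D+1=2^\Delta$ is a power of two, the index is obtained by concatenating $n$ blocks of $\Delta=O(\log D)$ bits; monomial multiplication becomes coordinatewise addition of exponents, costing $O(n\log D)$ bit operations, and identifying the correct row block among the $r$ blocks of $f_i$'s costs an extra $O(\log r)$. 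This yields $\gamma=O(n\log D+\log r)$. The care required is to confirm that the zero rows inserted (Remark \ref{rem-02}) and the zero columns for monomials of degree $>D$ (Remark \ref{rem-0l}) are consistent with this indexing—that is, that padding dimensions up to $M=r2^{\delta n}$ and $N=2^{\Delta n}-1$ does not break the $O(n\log D+\log r)$ query cost—which is exactly the content I would defer to Lemma \ref{lm-nd}. Once all five parameters are confirmed, the lemma follows by assembling the three contributing lemmas \ref{lem-6d}, \ref{lm-m00}, and \ref{lm-nd} as indicated.
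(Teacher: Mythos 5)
Your proposal is correct and takes essentially the same route as the paper: you read Assumption 2 off the block structure of $\mathbf b_{\FS,D}$ exactly as in \bref{eq-1}, verify Assumption 1 via the per-term $1$-sparse decomposition (one matrix for each of the $T_\FS$ terms $c_{ij}\X^{\alpha_{ij}}$, which is the paper's Lemma \ref{lm-m00}) together with the binary-index query argument of Lemma \ref{lm-nd}, and assemble the result from Lemmas \ref{lem-6d}, \ref{lm-m00}, and \ref{lm-nd} just as the paper does. Your justification of $\rho\le r-1$ via the presence of the equations $x_i^2-x_i$ (which have zero constant term) supplies a detail the paper leaves implicit, and it is the right one for the context in which the lemma is applied.
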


For nonnegative integers $B>1$ and $k<B^n$,
we denote $k_{(B)} =(k_{n-1},\ldots,k_0)$,
where $k=\sum_{i=0}^{n-1} k_iB^i$
is the $B$-base representation of $k$ and thus $0\le k_i< B$.
On the other hand, for $\mathbf k=(k_{n-1},\ldots,k_0)$ such that $0\le k_i< B$, let $\mathbf k_{(B)}=\sum_{i=0}^{n-1} k_iB^i$.
The following simple fact is crucial in the complexity analysis of
our algorithm: the $k$-th element in $\m_{\le d}$ is
\begin{equation}\label{eq-ms1}
m_{d,k}=\X^{k_{(d+1)}}=\prod_{i=0}^{n-1} x_{i+1}^{k_i}
\hbox{ with } \deg(m_{d,k}) = \sum_{i=0}^{n-1} k_i
\end{equation}
where $k_{(d+1)}=(k_{n-1},\ldots,k_0)$.
Equation \bref{eq-ms1} is true  under the assumption
made in Remark \ref{rem-sb}.

\begin{lem}\label{lem-6d}
We have $\bar d+1\le 2D$ and
$\bar D+1\le2D+1.$
As a consequence, $\MD_{\FS,D}$ is of dimension $(r (\bar{d}+1)^n)\times ((\bar D+1)^n-1) =  (r2^{n\delta})\times (2^{n\Delta}-1)
= O(r(2D)^n)\times O((2D+1)^n)$.
\end{lem}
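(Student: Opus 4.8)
The plan is to prove the two inequalities $\bar d + 1 \le 2D$ and $\bar D + 1 \le 2D+1$ directly from the definitions in Definition \ref{def-par}, and then assemble the dimension statement as a purely arithmetic consequence. The key observation is that $\bar d$ and $\bar D$ are each defined to be the \emph{minimal} integer exceeding a given lower bound subject to the constraint of being one less than a power of $2$; minimality under a ``round up to the next power of $2$'' constraint is exactly what controls the overshoot.

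First I would bound $\bar D$. By definition $\bar D$ is the minimal integer with $\bar D \ge D$ and $\bar D + 1 = 2^\Delta$. Since $D \ge \max_i d_i \ge 1$, I can take $\Delta = \lceil \log_2(D+1)\rceil$, so that $2^\Delta$ is the least power of $2$ that is $\ge D+1$. The standard fact that rounding an integer $m \ge 1$ up to the next power of $2$ at most doubles it (more precisely, $2^{\lceil \log_2 m\rceil} < 2m$ when $m$ is not itself a power of $2$, and equals $m$ otherwise) gives $\bar D + 1 = 2^\Delta \le 2D$, hence $\bar D + 1 \le 2D$ and in particular $\bar D + 1 \le 2D + 1$. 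The same argument applied to $\bar d$, whose lower bound is $D - \min_i d_i \le D - 1 < D$, yields $\bar d + 1 = 2^\delta \le 2D$ as well, establishing $\bar d + 1 \le 2D$. I would be slightly careful about the degenerate edge cases (e.g.\ when $D - \min_i d_i$ is itself of the form $2^\delta - 1$, or when it is very small), but these only make the bound tighter.

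Having bounded $\bar d + 1$ and $\bar D + 1$, the dimension claim follows immediately. The excerpt already records that $\MD_{\FS,D}$ has dimension $(r(\bar d+1)^n) \times ((\bar D+1)^n - 1)$, and by definition $\bar d + 1 = 2^\delta$ and $\bar D + 1 = 2^\Delta$, so this equals $(r 2^{n\delta}) \times (2^{n\Delta} - 1)$. Substituting the two bounds just proved gives row count $r(\bar d+1)^n \le r(2D)^n = O(r(2D)^n)$ and column count $(\bar D+1)^n - 1 \le (2D+1)^n - 1 = O((2D+1)^n)$, which is the asserted $O(r(2D)^n) \times O((2D+1)^n)$.

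The only genuine content here is the rounding estimate, so the main obstacle — really the only place to be careful — is pinning down the elementary fact that the least power of $2$ at least as large as a positive integer $m$ is at most $2m$ (equivalently $2^{\lceil \log_2 m\rceil}\le 2^{\lceil \log_2 m\rceil}$ and comparing to $m$), together with tracking the off-by-one shifts between $\bar d$, $\bar d + 1$, and $2^\delta$. Everything else is bookkeeping. I would state the rounding fact cleanly once and apply it to both $\bar d$ and $\bar D$ rather than repeating the argument.
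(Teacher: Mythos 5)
Your proposal is correct and follows essentially the same route as the paper: both proofs simply use the fact that rounding up to the next power of $2$ (as forced by the minimality in Definition \ref{def-par}, together with $\min_i d_i \ge 1$) at most doubles the target, yielding $\bar d + 1 \le 2D$ and $\bar D + 1 \le 2D+1$, after which the dimension count is immediate from the construction of $\MD_{\FS,D}$. In fact your parity observation even gives the marginally sharper bound $\bar D + 1 \le 2D$, which of course implies the stated $\bar D + 1 \le 2D+1$.
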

\begin{proof}
From Definition \ref{def-par} of $\bar d$ and $\bar D$, we have
$\bar d+1\le2D-2\min_id_i+1\le 2D$ and
$\bar D+1\le2D+1.$
\end{proof}
\begin{lem}\label{lm-m00}
Let $f_i=\sum_{j=1}^{t_i}c_{ij}\X^{\alpha_{ij}}$ for $i=1,\ldots,r$, where $t_i=\#f_i$. Then $\MD_{\FS,D}$ has a natural 1-sparseness decomposition:
\begin{equation}
\MD_{\FS,D}=\sum_{i=1}^r\sum_{j=1}^{t_i}c_{ij}\MD_{ij},
\end{equation}
where each $\MD_{ij}$ is a $1$-sparse $\{0,1\}$-matrix.
In fact, only the $((i-1)(\bar d+1)^n+k,(k_{(\bar{d}+1)}+\alpha_{ij})_{(\bar D+1)})$ entries in $\MD_{ij}$ equal to $1$ for {$0\le k<(\bar{d}+1)^n$} with $\sum_{t=1}^n k_{(\bar d+1)}[t]\le D-d_i$.
\end{lem}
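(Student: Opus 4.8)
The plan is to prove Lemma~\ref{lm-m00} by unwinding the explicit row-and-column indexing of the modified Macaulay matrix $\MD_{\FS,D}$ and tracking where a single term of a single polynomial lands. First I would fix $i\in\{1,\dots,r\}$ and expand $f_i=\sum_{j=1}^{t_i}c_{ij}\X^{\alpha_{ij}}$, so that by linearity the block of rows indexed by $m_{\bar d,k}f_i$ decomposes as $\sum_j c_{ij}\,(m_{\bar d,k}\X^{\alpha_{ij}})$. The key observation is that this decomposition is \emph{term-by-term}: the matrix $\MD_{ij}$ is obtained by recording, for each admissible multiplier monomial $m_{\bar d,k}$, the coefficient contribution of the single term $c_{ij}\X^{\alpha_{ij}}$, so $\MD_{ij}$ carries a $1$ in exactly one position of each active row and zeros elsewhere. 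Summing over $i$ and $j$ and weighting by $c_{ij}$ then reassembles $\MD_{\FS,D}$, giving the claimed identity $\MD_{\FS,D}=\sum_{i,j}c_{ij}\MD_{ij}$.

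The heart of the argument is identifying the nonzero entry's coordinates precisely and checking $1$-sparseness. I would use the indexing convention from \eqref{eq-ms1} (and Remark~\ref{rem-sb}, subscripts starting at $0$): the row of $\MD_{\FS,D}$ corresponding to $m_{\bar d,k}f_i$ is row $(i-1)(\bar d+1)^n+k$, and the product monomial is $m_{\bar d,k}\cdot\X^{\alpha_{ij}}=\X^{k_{(\bar d+1)}}\cdot\X^{\alpha_{ij}}=\X^{k_{(\bar d+1)}+\alpha_{ij}}$. Under the column ordering of $\m_{\le\bar D}$, this monomial sits in the column indexed by $(k_{(\bar d+1)}+\alpha_{ij})_{(\bar D+1)}$, exactly as asserted. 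The admissibility condition $m_{\bar d,k,i}=m_{\bar d,k}$ (rather than $0$) translates, via $\deg(m_{\bar d,k})=\sum_{t} k_{(\bar d+1)}[t]$ from \eqref{eq-ms1}, into the stated constraint $\sum_{t=1}^n k_{(\bar d+1)}[t]\le D-d_i$; when this fails the whole row is zero, consistent with the replacement-by-$0$ rule in the definition of $m_{\bar d,j,i}$.

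For the $1$-sparseness claim I need to verify that each row \emph{and} each column of $\MD_{ij}$ contains at most one nonzero entry. The row statement is immediate: a fixed row corresponds to a fixed multiplier $m_{\bar d,k}$, and the single term $\X^{\alpha_{ij}}$ produces exactly one product monomial, hence one nonzero column. The column statement requires noting that the map $k\mapsto (k_{(\bar d+1)}+\alpha_{ij})_{(\bar D+1)}$, restricted to admissible $k$, is injective: since $\bar d\le \bar D$, the base-$(\bar D+1)$ representation recovers $k_{(\bar d+1)}+\alpha_{ij}$ uniquely, and subtracting the fixed exponent vector $\alpha_{ij}$ recovers $k$. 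I would also record that distinct rows come from distinct multipliers, so no two admissible rows can hit the same column within $\MD_{ij}$, which secures at most one nonzero per column.

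The main obstacle I anticipate is purely bookkeeping rather than conceptual: one must be careful that the mixed-radix conversions are well-defined, i.e.\ that whenever $m_{\bar d,k}f_i$ is a genuine (admissible) row, every exponent of $k_{(\bar d+1)}+\alpha_{ij}$ stays below $\bar D+1$ so that $(\,\cdot\,)_{(\bar D+1)}$ makes sense and lands inside the column index range $0,\dots,(\bar D+1)^n-1$. This follows because $\deg(m_{\bar d,k}\X^{\alpha_{ij}})\le D\le \bar D$ forces each coordinate of the product exponent to be at most $\bar D$, but I would state this bound explicitly to justify that the entry is recorded in a genuine (not an overflow) column. Once this degree bound is in place, the decomposition and the $1$-sparse $\{0,1\}$-structure of each $\MD_{ij}$ follow directly, completing the proof.
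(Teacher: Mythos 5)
Your proposal is correct and follows essentially the same route as the paper's proof: decompose $\MD_{\FS,D}$ term-by-term by linearity in the coefficients $c_{ij}$, locate the unique nonzero entry of $\MD_{ij}$ at row $(i-1)(\bar d+1)^n+k$ and column $(k_{(\bar d+1)}+\alpha_{ij})_{(\bar D+1)}$ via the mixed-radix identity $m_{\bar d,k}\X^{\alpha_{ij}}=m_{\bar D,(k_{(\bar d+1)}+\alpha_{ij})_{(\bar D+1)}}$, and deduce $1$-sparseness from distinctness of the rows and columns of the nonzero entries. Your explicit injectivity argument for $k\mapsto(k_{(\bar d+1)}+\alpha_{ij})_{(\bar D+1)}$ and the check that each coordinate of the product exponent stays below $\bar D+1$ merely spell out what the paper asserts in one line, so the two proofs coincide in substance.
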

\begin{proof}
We can treat the coefficients $c_{ij}$ of $f_i$ as new indeterminates
and then write $\MD_{\FS,D}$ as a function in
$c_{ij}$. The coefficient of $c_{ij}$ is $M_{ij}$.
Since $c_{ij}\X^{\alpha_{ij}}$ is a term of $f_i$, only $m_{\bar{d},k}f_i$
contains terms whose coefficient are $c_{ij}$,
which corresponds to the $((i-1)(\bar d+1)^n+k)$-th row generated by  $m_{\bar{d},k}f_i$ for $0\le k<(\bar{d}+1)^n$ with $\deg(m_{\bar d,k}f_i)\le D$, or $\sum_{t=1}^n k_{(\bar d+1)}[t]\le D-d_i$ equally. Since $m_{\bar{d},k}\cdot\X^{\alpha_{ij}}=\X^{k_{(\bar{d}+1)}}
\cdot\X^{\alpha_{ij}}=\X^{k_{(\bar{d}+1)}+\alpha_{ij}}=m_{\bar D,({k_{(\bar{d}+1)}+\alpha_{ij}})_{(\bar D+1)}}$,  only the $({k_{(\bar{d}+1)}+\alpha_{ij}})_{(\bar D+1)}$-th entry in the $((i-1)(\bar d+1)^n+k)$-th row of $\MD_{ij}$ is nonzero.
%
The $1$-sparseness of $\MD_{ij}$ comes from the fact that
the rows and columns of the nonzero entries are distinct.
\end{proof}

As a direct consequence, we have
\begin{cor}\label{cor-TF}
$\MD_{\FS,D}$ is $T_\FS$-sparse, where $T_\FS=\sum_{i=1}^rt_i$ is called the
{\em total sparseness} of $\FS$.
\end{cor}

\begin{lem}\label{lm-nd}
The complexity of a query for $\MD_{ij}$ is $O(n\log(D) +\log r)$,
where $\MD_{ij}$ is introduced in Lemma \ref{lm-m00}.
%
\end{lem}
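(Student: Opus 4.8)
The plan is to establish the query complexity $O(n\log D + \log r)$ for $\MD_{ij}$ by exhibiting an explicit arithmetic procedure that, given any row index $u$ of $\MD_{ij}$, finds its unique nonzero entry (if one exists) and its column index, and symmetrically for any column index. By the $1$-sparseness established in Lemma \ref{lm-m00}, each row and each column contains at most one nonzero entry, so a ``query'' amounts to a small number of base-conversion and vector-addition operations on $n$-digit numbers, and the whole cost should reduce to bookkeeping on integers of magnitude at most $r(\bar d+1)^n$ or $(\bar D+1)^n$.

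First I would handle the row query. Given a row index $u$ with $0\le u < r(\bar d+1)^n$, I would recover $i$ and $k$ from the explicit indexing in Lemma \ref{lm-m00}: namely $i-1 = \lfloor u/(\bar d+1)^n\rfloor$ and $k = u - (i-1)(\bar d+1)^n$. Extracting $k$'s $(\bar d+1)$-base representation $k_{(\bar d+1)} = (k_{n-1},\ldots,k_0)$ via \bref{eq-ms1} costs $n$ divisions by $\bar d+1 = 2^\delta$, each of which is a bit-shift on a number of bit-length $O(\log((\bar d+1)^n)) = O(n\delta) = O(n\log D)$ by Lemma \ref{lem-6d}. I would then check the degree condition $\sum_{t} k_{(\bar d+1)}[t]\le D-d_i$; if it fails the row is zero, otherwise I compute the exponent vector $k_{(\bar d+1)}+\alpha_{ij}$ (an $n$-coordinate addition) and convert it back to the column index $(k_{(\bar d+1)}+\alpha_{ij})_{(\bar D+1)}$ in base $\bar D+1=2^\Delta$, again $n$ operations on $O(n\log D)$-bit numbers. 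The isolation of $i$ from $u$ contributes the $\log r$ term, since $i$ ranges up to $r$. The column query is entirely analogous: given a column index, I would convert it to a $(\bar D+1)$-base exponent vector, subtract $\alpha_{ij}$, test that the result is a valid monomial in $\m_{\le\bar d}$ (nonnegative coordinates, each $<\bar d+1$, total degree $\le D-d_i$), and if so recover $k$ and hence the row index $(i-1)(\bar d+1)^n + k_{(\bar d+1)(\bar D+1)}$.

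The main obstacle I anticipate is bounding the arithmetic cost uniformly rather than getting tangled in the two distinct bases $\bar d+1 = 2^\delta$ and $\bar D+1 = 2^\Delta$. Because both are powers of two (by Definition \ref{def-par}), the base conversions are just bit-slicing rather than genuine division, so each of the $n$ coordinate extractions is $O(\log D)$ bit operations and the total is $O(n\log D)$; one must be careful that the numbers involved have bit-length $O(n\log D + \log r)$ so that even the index arithmetic $u\mapsto(i,k)$ stays within budget. I would argue that since $\alpha_{ij}$ is a fixed exponent vector stored with the polynomial $f_i$ (with $O(n\log D)$ bits) and all comparisons are of $O(n\log D + \log r)$-bit integers, the entire query is dominated by the $n$ base-$2^\delta$/base-$2^\Delta$ conversions plus the extraction of $i$, giving the claimed $O(n\log D + \log r)$. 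No heavier machinery is needed; the content is entirely in verifying that the explicit formula of Lemma \ref{lm-m00} can be evaluated and inverted in this time.
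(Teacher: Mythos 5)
Your proposal is correct and follows essentially the same route as the paper's proof: split the row index into the block $i$ and remainder $k$, exploit that $\bar d+1=2^\delta$ and $\bar D+1=2^\Delta$ are powers of two so that base conversions and the computation of $(k_{(\bar d+1)}+\alpha_{ij})_{(\bar D+1)}$ reduce to bit manipulations on $O(n\log D+\log r)$-bit integers, and argue symmetrically for column queries by subtracting $\alpha_{ij}$ and testing validity. Your additional check that each coordinate is $<\bar d+1$ in the column query is harmless (it is already implied by the degree condition, since $\sum_t k_t\le D-d_i\le\bar d$), and otherwise the argument matches the paper's.
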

\begin{proof}
Given a row index $i_0$, we want to know the nonzero entry in the $i_0$-th row. By Lemma \ref{lm-m00}, the $i_0$-th row has a nonzero entry if and only if {$(i-1)(\bar d+1)^n\le i_0< i(\bar d+1)^n$} and $\sum_{t=1}^n (i_0-(i-1)(\bar d+1))_{(\bar d+1)}[t]\le D-d_i$.
Compute the quotient $l$ and remainder $k$ such that $i_0=(\bar d+1)^nl+k$.
By Lemma \ref{lm-m00}, the $i_0$-th row has a nonzero entry if and only if $l=i-1$ and $\sum_{t=1}^n k_{(\bar d+1)}[t]\le D-d_i$, meanwhile the nonzero entry is at the $(k_{(\bar{d}+1)}+\alpha_{ij})_{(\bar D+1)}$-th column.

We now analyse the complexity of the above step.
Without loss of generality, we assume that all the
numbers are represented in binary form,
which is crucial to the complexity.
Since $\bar d+1=2^\delta$ is a power of $2$, we can compute $(\bar{d}+1)^n=2^{\delta n}$ easily in time $O(\log(n)\log(\log(\bar{d})))$. We can check $\sum_{t=1}^n k_{(\bar d+1)}[t]\le D-d_i$ in time $\log(n\log D)$. Since all numbers are binary, the last $\delta n$ bits of $i_0$ are exactly the remainder $k$, and other bits are the quotient $l$. As a result, we can compute both $l$ and $k$ in time $O(\log(n)\log(\log(\bar{d})))$.
Since the number of bits for $i_0$ is $O(\log i_0)=O(\log(r(\bar d+1)^n)=O(\log r+n\log\bar d)$, the complexity is bounded by $O(\log r+n\log\bar d)$.
Since both $\bar{d}+1=2^\delta$ and $\bar D+1=2^\Delta$ are powers of 2 and $k$ is in binary form, we can insert $(\Delta-\delta)$ zeros before each $\delta$ bits of $k$ starting from lower digits  to obtain $(k_{(\bar{d}+1)})_{(\bar D+1)}$ in time $O(n\log(\bar D))$. Totally, we can compute $({k_{(\bar{d}+1)}+\alpha_{ij}})_{(\bar D+1)}$ in time $O(\log(n)\log(\log(\bar D))+n\log(\bar D))=O(n\log(\bar D))$.
So the total complexity is $
O(n\log(\bar D)+(\log r+n\log\bar d))=O(n\log(D) +\log r)$ by Lemma \ref{lem-6d}.

On the other hand, given a column index $j_0$, we want to know the nonzero entries in the $j_0$-th column. Compute $\mathbf k=(j_0)_{(\bar D+1)}-\alpha_{ij}$ first.
If $\mathbf k$ is a nonnegative vector, we can compute $\mathbf k_{(\bar{d}+1)}$. By Lemma \ref{lm-m00}, only if $\sum_{t=1}^n\mathbf k[t]\le D-d_i$, the $(\mathbf k_{(\bar{d}+1)}+(j-1)(\bar{d}+1)^n)$-th entry is the unique nonzero entry in the $j_0$-th column. Else the $j_0$-th column is a 0 column. Similarly, the complexity is $O(n\log(D) +\log r)$.
\end{proof}

\subsection{Complete solving degree for a polynomial system}
\label{sec-ps2}
In this section, we show how to determine a $D$ such that
the monomials of a polynomial system $\FS$ can be solved from
the Macaulay linear system $\MD_{\FS,D}\mv_D =\mathbf b_{\FS,D}$
by introducing the concept of complete solving degree.
We first define the concept of solving degree \cite{laz83,soldeg-1},
which is an important concept for polynomial system solving.

\begin{defn}\label{def-sdeg}
Let $\FS=\{f_1,\ldots,f_r\}\subset\C[\X]$
and $(\FS)$ the ideal generated by $\FS$.
Let $\mathbb G$ be the reduced Gr\"obner basis of the ideal $(\FS)$ under the degree reverse lexicographic (DRL) monomial ordering.
$D$ is called the {\em solving degree} of $\FS$, if $D$ is the minimal integer such that for any  $g\in\mathbb G$, $g=\sum_{i=1}^rh_if_i$ for some $h_1,\ldots,h_r\in\C[\X]$ satisfying $\deg(h_if_i)\le D$.
Denote the solving degree of $\FS$ by $\sdeg(\FS)$.
\end{defn}
In terms of the F4 algorithm \cite{f4} or the XL algorithm \cite{xl},
$D$ is the solving degree of $\FS$, if the Gr\"obner basis of the ideal
$(\FS)$ can be obtained from $\MD_{\FS,D}\mv_D =\mathbf b_{\FS,D}$ by using Gaussian elimination over $\C$.
%

For a polynomial $f\in\C[\X]$, denote $f^h$ to be the homogenization of $f$ in $\C[x_0,\X]$.
We have the following upper bound for the solving degree.
\begin{thm}[\cite{soldeg-1} Corollary 3.26]\label{cor-sdb}
Let $I=(f_1,\ldots,f_r)\subset\C[\X]$ be an ideal generated by $f_1,\ldots,f_r$ of degrees $d_1,\ldots,d_r$, such that $d_1\ge d_2\ge\cdots\ge d_r$.
Choose the DRL monomial ordering. If $I^h=(f_1^h,\ldots,f_r^h)$
is zero-dimensional, then  $\sdeg(\FS)\le d_1+\cdots+d_{n+1}-n+1$ with $d_{n+1}=1$ if $r=n$.
\end{thm}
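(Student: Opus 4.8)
The plan is to prove the bound by homogenizing and controlling the degrees that occur in a DRL Gr\"obner basis through the Castelnuovo--Mumford regularity of $I^h$. The first step is to show that $\sdeg(\FS)$, as defined in Definition~\ref{def-sdeg}, is bounded by $\mathrm{reg}(I^h)$ up to an additive constant. I would first apply a generic linear change of the variables $\X$; since Gaussian elimination on the Macaulay system commutes with invertible linear substitutions, this leaves $\sdeg(\FS)$ unchanged, and it places $I^h\subset\C[x_0,\X]$ in coordinates where the homogenizing variable $x_0$ (equivalently a general linear form) is a non-zerodivisor on the relevant Cohen--Macaulay quotient. In such coordinates the degree-$D$ strand of the Macaulay system of $\FS$ is identified with that of the homogeneous system $\{f_1^h,\ldots,f_r^h\}$, so the top degree of a minimal generator of the DRL initial ideal $\mathrm{in}(I^h)$ controls $\sdeg(\FS)$; by the Bayer--Stillman criterion this top degree equals $\mathrm{reg}(I^h)$, giving $\sdeg(\FS)\le \mathrm{reg}(I^h)+1$, where the $+1$ records the passage from the homogeneous to the affine solving degree.

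The second step is to bound $\mathrm{reg}(I^h)$ using the zero-dimensionality hypothesis. Since $\C[x_0,\X]/I^h$ has dimension $1$, a general linear form $\ell$ is a non-zerodivisor on its one-dimensional Cohen--Macaulay part, and regularity is preserved under this generic hyperplane section. Reducing modulo $\ell$ produces an \emph{Artinian} ideal $\bar I$ in $n$ variables generated by forms of degrees $d_1,\ldots,d_r$. Extracting from generic combinations of the generators a regular sequence and reading off the socle degree from the associated Koszul resolution, i.e.\ the classical Macaulay socle-degree bound, yields $\mathrm{reg}(I^h)=\mathrm{reg}(\bar I)\le d_1+\cdots+d_{n+1}-n$. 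The top $n+1$ degrees appear because a zero-dimensional scheme in $\P^n=\proj\C[x_0,\X]$ is cut out in the $n+1$ homogeneous coordinates; the $(n+1)$-st degree $d_{n+1}$ is exactly the degree of the auxiliary linear form $\ell$ used in the Artinian reduction, which must be adjoined---hence equals $1$---precisely in the boundary case $r=n$.

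Combining the two steps gives $\sdeg(\FS)\le \mathrm{reg}(I^h)+1\le d_1+\cdots+d_{n+1}-n+1$, which is the assertion.

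The step I expect to be the main obstacle is the first one: making the equivalence between the affine invariant $\sdeg(\FS)$ and the projective invariant $\mathrm{reg}(I^h)$ fully rigorous. This requires (i) verifying that the generic coordinate change simultaneously preserves $\sdeg$ and puts $x_0$ into the non-zerodivisor position needed for dehomogenization, and (ii) controlling the gap between the ideal $(I^h)$ whose Gr\"obner basis is actually computed and its saturation, whose regularity is what the hyperplane-section argument of the second step bounds. Showing that this saturation gap neither invalidates the Bayer--Stillman identification nor worsens the stated bound is the technical heart of the argument; once it is in place, the socle-degree estimate of the second step is a routine Koszul computation.
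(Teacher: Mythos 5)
You should first note what ``the paper's own proof'' amounts to here: there is none --- the paper imports this statement verbatim from \cite{soldeg-1} (Corollary 3.26) without proof, so the only argument to compare against is the one in that reference. Your two-step plan is in fact the same route taken there: first $\sdeg(\FS)\le \mathrm{reg}(I^h)$, obtained by passing to generic coordinates and reading the maximal degree of a minimal generator of the DRL initial ideal off the regularity via Bayer--Stillman, and then a Lazard-type bound \cite{laz83} on $\mathrm{reg}(I^h)$ for zero-dimensional $I^h$. So in outline you have reconstructed the intended argument.

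As a proof, however, the proposal has genuine gaps, and they sit exactly where you flag them. (i) Your claim that a generic linear substitution ``leaves $\sdeg(\FS)$ unchanged'' because Gaussian elimination commutes with it is false: Definition \ref{def-sdeg} is phrased in terms of the reduced DRL Gr\"obner basis $\mathbb G$, and neither $\mathbb G$ nor the minimal $D$ is invariant under linear coordinate changes --- the row spaces of the Macaulay matrices are preserved, but the target set $\mathbb G$ is not. This is precisely why \cite{soldeg-1} proves $\sdeg(\FS)\le\mathrm{reg}(I^h)$ only under a genericity (quasi-stable position) hypothesis; handling that hypothesis is a substantive step your sketch assumes away. (ii) Your bookkeeping of the additive constant does not match what can actually be proved: in \cite{soldeg-1} the inequality is $\sdeg(\FS)\le\mathrm{reg}(I^h)$ with no $+1$ (dehomogenizing a DRL Gr\"obner basis of $I^h$, with $x_0$ smallest, yields a Gr\"obner basis of the affine ideal), while the $+1$ lives inside Lazard's bound $\mathrm{reg}(I^h)\le d_1+\cdots+d_{n+1}-n+1$, where it absorbs the saturation defect. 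Your compensating claim $\mathrm{reg}(I^h)\le d_1+\cdots+d_{n+1}-n$ via a generic hyperplane section does not go through for $I^h$ itself: when $I^h$ is not saturated, the irrelevant maximal ideal is an associated prime of $\C[x_0,\X]/I^h$, so \emph{no} linear form is a non-zerodivisor, and the section argument only controls the saturation, whose regularity can be strictly smaller than $\mathrm{reg}(I^h)$. Finally, reading $d_{n+1}$ as the degree of your auxiliary linear form $\ell$ is a heuristic; in the statement it is merely the convention that pads the degree list with $1$ when $r=n$. In short: right strategy, matching the cited proof, but the coordinate-genericity issue in step one and the saturation/$+1$ accounting in step two must be resolved before this constitutes a proof, and the intermediate inequalities as you state them are individually unsubstantiated even though their composite reproduces the correct bound.
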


The following example shows that
the solving degree is not large enough for monomial solving with the Macaulay linear system.
\begin{exmp}\label{ex-csdeg}
Let $\FS=\{x_1x_2-1,x_1^2-x_1,x_2^2-x_2,x_3^2-x_3\}\subset\C[x_1,x_2,x_3]$.
The reduced Gr\"obner basis  for $(\FS)$ is
$\mathbb G=\{x_1-1,x_2-1,x_3^2-x_3\}$.
Since $x_1-1=x_2(x_1^2-x_1)+(1-x_1)(x_1x_2-1)$ and $x_2-1=x_1(x_2^2-x_2)+(1-x_2)(x_1x_2-1)$, we have $\sdeg(\FS)=3$.
The residue monomials of $(\FS)$ with respect to  $\mathbb G$ is $\{1, x_3\}$.
Let $D=3$, we want to solve the monomials
of degrees $\le D$ from the Macaulay linear system $\MD_{\FS,D}\mv_D =\mathbf b_{\FS,D}$,
or more precisely, we want to written all monomials of degrees ${\le 3}$ as expressions of $x_3$.
After doing Gaussian elimination to the Macaulay system, the nontrivial polynomials
are:
$\GB_3=\{x_1 - 1, x_2 - 1, x_1^2 - 1, x_1 x_2 - 1, x_2^2 - 1,
 x_3^2 - x_3, x_1^3 - 1, x_1^2 x_2 - 1,
 x_1^2 x_3 - x_1 x_3, x_1 x_2^2 - 1, x_1 x_2 x_3 - x_3,
 x_1 x_3^2 - x_1 x_3, x_2^3 - 1, x_2^2 x_3 - x_2 x_3,
 x_2 x_3^2 - x_2 x_3, x_3^3 - x_3\}$.
We can see that each monomial of degree $\le 3$ is solved in terms of
$x_3$, $x_1x_3$, and $x_2x_3$,
which means that the Macaulay linear system does not give the ``correct"
solution to the monomials and the monomials $x_1x_3$ and $x_2x_3$ are not solved in terms of $x_3$.
%
\end{exmp}

Motivated by the above example, we introduce the concept of complete solving degree.
\begin{defn}\label{def-csdeg}
Let $\FS=\{f_1,\ldots,f_r\}\subset\C[\X]$
and $(\FS)$ the ideal generated by $\FS$.
Let $\mathbb G$ be the reduced Gr\"obner basis of the ideal $(\FS)$ under the DRL monomial ordering.
$D$ is called the {\em complete solving degree} of $\FS$, if $D$ is the minimal integer such that for any polynomial $g\in\mathbb G$ and any monomial $m\in\C[\X]$ satisfying $\deg(mg)\le D$, $mg=\sum_{i=1}^rh_if_i$ for some $h_1,\ldots,h_r\in\C[\X]$ with each $\deg(h_if_i)\le D$.
Denote the complete solving degree of $\FS$ by $\csdeg(\FS)$.
\end{defn}

\begin{exmp}\label{ex-csdeg1}
For the $\FS$ given in Example \ref{ex-csdeg}, we have $\csdeg(\FS)=4$.
For $D=4$, after doing Gaussian elimination to the Macaulay linear system, we
obtain $\GB_4=\GB_3\cup\{
x_1 x_3 - x_3,
x_2 x_3 - x_3,
x_1^4 - 1, x_1^3 x_2 - 1, x_1^3 x_3 - x_3,
 x_1^2 x_2^2 - 1, x_1^2 x_2 x_3 - x_3, x_1^2 x_3^2 - x_3,
 x_1 x_2^3 - 1, x_1 x_2^2 x_3 - x_3, x_1 x_2 x_3^2 - x_3,
  x_1 x_3^3 - x_3, x_2^4 - 1, x_2^3 x_3 - x_3,
 x_2^2 x_3^2 - x_3, x_2 x_3^3 - x_3, x_3^4 - x_3\}$.
Note that all monomials with degree $\le 4$
are written as expressions of the residue monomials $1, x_3$.
In other words, the monomials are solved with the Macaulay system
for $D=4$.
\end{exmp}

In this paper, we will consider polynomial systems of the following form
\begin{eqnarray}
\FS&=&\FS_1\cup\FS_2\subset \C[\X], \hbox{ where }
\FS_1=\{g_1,\ldots,g_{r}\}\ne\emptyset,
\FS_2=\{f_1,\ldots,f_n\}\label{eq-FSB}\\
&&\hbox{ such that } \forall i,j\,
(\lm(f_i)=x_i^{d_i}, d_i\ge 1, \hbox{ and } \deg_{x_i}(g_j) < d_i)\nonumber
\end{eqnarray}
where $\lm(f_i)$ is the largest monomial of $f_i$ under the DRL monomial ordering, also called the {\em leading monomial} of $f_i$. We will give upper bounds for the solving degree
and complete solving degree for a polynomial system of form \bref{eq-FSB}
in the following lemmas. Note that the proofs of these results do not depend on the results in \cite{laz83,soldeg-1}.

\begin{lem}\label{lm-sdnew}
For $\FS$ given in \bref{eq-FSB}, we have $(\FS)$ is zero-dimensional and $\sdeg(\FS)\le d-n+\sum_{i=1}^n d_i$,
where $d= \max_i\deg(g_i)$.
\end{lem}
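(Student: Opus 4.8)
The plan is to establish the two claims in turn, treating zero-dimensionality as a quick consequence of the leading-term structure and devoting the real effort to the solving-degree bound, which I would prove by a direct, self-contained degree-tracking argument (not invoking Theorem~\ref{cor-sdb}).

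For zero-dimensionality: since $\lm(f_i)=x_i^{d_i}$ for $i=1,\dots,n$, the leading-term ideal $\lt((\FS))$ contains $(x_1^{d_1},\dots,x_n^{d_n})$, so the standard monomials of $(\FS)$ form a subset of the finite set $\{\X^\alpha : 0\le\alpha_i<d_i\}$; hence $\dim_\C\C[\X]/(\FS)<\infty$ and $(\FS)$ is zero-dimensional. The engine of the whole argument is the observation that $\FS_2=\{f_1,\dots,f_n\}$ is by itself a Gr\"obner basis of $(\FS_2)$ under DRL: its leading monomials $x_i^{d_i}$ are pairwise coprime, so every $S$-polynomial reduces to zero by Buchberger's first criterion. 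Consequently, division modulo $\FS_2$ is well defined, and because DRL is degree compatible it never raises total degree: for any $F$ one gets $F=\sum_{i=1}^n q_if_i+\mathrm{NF}(F)$ with $\deg(q_if_i)\le\deg F$ and with $\mathrm{NF}(F)$ supported on standard monomials $\X^\alpha$ ($\alpha_i<d_i$), each of degree at most $\sum_{i=1}^n(d_i-1)=\sum_i d_i-n$.

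Set $D=d-n+\sum_{i=1}^n d_i$ and let $V_D$ be the $\C$-span of all $m\,p$ with $p\in\FS$, $m$ a monomial, and $\deg(mp)\le D$; by Definition~\ref{def-sdeg}, the inequality $\sdeg(\FS)\le D$ is exactly the statement $\mathbb{G}\subseteq V_D$. I would first show that $V_D$ contains every normal-form element of $(\FS)$, i.e. the ideal $J$ generated in $R:=\C[\X]/(\FS_2)$ by the images of the $g_j$ (the hypothesis $\deg_{x_i}(g_j)<d_i$ makes each $g_j$ its own normal form). Since $R$ is spanned by the standard monomials, $J$ is the $\C$-span of the elements $\mathrm{NF}(\X^\alpha g_j)$ with $\X^\alpha$ standard; for each such element, dividing $\X^\alpha g_j$ by $\FS_2$ gives $\mathrm{NF}(\X^\alpha g_j)=\X^\alpha g_j-\sum_i q_if_i$, where $\deg(\X^\alpha g_j)\le(\sum_i d_i-n)+d=D$ and degree compatibility forces $\deg(q_if_i)\le D$, so $\mathrm{NF}(\X^\alpha g_j)\in V_D$; hence $J\subseteq V_D$. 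Then for an arbitrary $F\in(\FS)$ with $\deg F\le D$, writing $F=\sum_i q_if_i+\mathrm{NF}(F)$ with all degrees $\le D$ and noting $\mathrm{NF}(F)\in J\subseteq V_D$ yields $F\in V_D$. Finally, every $g\in\mathbb{G}$ satisfies $\deg g=\deg\lm(g)\le\max(\max_i d_i,\ \sum_i d_i-n)\le D$, because each minimal generator of $\lt((\FS))$ is either some $x_i^{d_i}$ or a standard monomial of $(\FS_2)$; therefore $\mathbb{G}\subseteq V_D$ and $\sdeg(\FS)\le D$.

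The step I expect to be decisive is showing $J\subseteq V_D$ with the exact degree ceiling: the bound $D=d-n+\sum_i d_i$ appears precisely because it suffices to multiply each $g_j$ (degree $\le d$) by standard monomials (degree $\le\sum_i d_i-n$) and then reduce, and the reduction stays within degree $D$ only because $\FS_2$ is already a Gr\"obner basis for a degree-compatible order. The two places where this could go wrong, namely if one needed non-standard multipliers to generate $J$, or if reduction could overshoot the degree, are exactly what the coprimality of the $x_i^{d_i}$ and the degree compatibility of DRL rule out; keeping the quotient degrees pinned at $\deg(\X^\alpha g_j)$ through the division algorithm is the one estimate I would verify most carefully.
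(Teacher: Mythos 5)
Your proof is correct, and it reaches the bound by a somewhat cleaner route than the paper's own argument. Both proofs rest on the same two pillars: $\FS_2$ is a DRL Gr\"obner basis of $(\FS_2)$ because its leading monomials $x_i^{d_i}$ are pairwise coprime, and the multipliers of the $g_j$ can be confined to standard monomials of degree at most $\sum_{i=1}^n d_i-n$. But the paper executes this element-by-element on the reduced basis $\GB$: starting from an arbitrary representation $h=\sum_k a_kf_k+\sum_l b_lg_l$, it bounds $\deg h$ by the counterpart of your minimal-generator dichotomy (either $h$ is reduced by every pure-power element $f'_k$ of $\GB$, or $h$ equals one of them), then explicitly rewrites each cofactor $b_l$ as $pf_i+q$ to make it reduced modulo $\FS_2$, redistributing the excess into the $\FS_2$-part, and finally re-expresses $\tilde h=h-\sum_l b_lg_l\in(\FS_2)$ with degree-controlled cofactors using the Gr\"obner property of $\FS_2$. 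You replace that cofactor bookkeeping by the linearity of the normal-form map on $R=\C[\X]/(\FS_2)$, proving the stronger intermediate statement that \emph{every} $F\in(\FS)$ with $\deg F\le D$ lies in the degree-$D$ span $V_D$, and then separately bounding $\deg g$ for $g\in\GB$ via the minimal generators of $\lm((\FS))$; your one flagged estimate, $\deg(q_if_i)\le\deg F$ in division by $\FS_2$, is indeed guaranteed by degree compatibility of DRL, so there is no gap there. Two small points: the final inequality $\max_i d_i\le D$ uses $d\ge1$ together with $d_i\ge1$ (the paper makes the analogous remark), and the degenerate case $d=0$ is trivial since then $(\FS)=(1)$. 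More interestingly, your intermediate statement $\{F\in(\FS):\deg F\le D\}\subseteq V_D$ with $D=d-n+\sum_{i=1}^n d_i$ buys more than the lemma asks for: applied to $mg$ with $g\in\GB$ and $\deg(mg)\le D$, it immediately gives $\csdeg(\FS)\le d-n+\sum_{i=1}^n d_i$, which sharpens the bound $d-2n+2\sum_{i=1}^n d_i$ of Lemma \ref{lm-csd} (and hence would improve the degree bound $D=3n$ used downstream in Lemma \ref{lm-la1}); that strengthening is worth recording explicitly.
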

\begin{proof}
By the Hilbert function,
it is easy to see that $(\FS_2)$ is zero-dimensional and hence $(\FS)$ is zero-dimensional.
Let $I=(\FS)$ and $\GB$ the reduced Gr\"obner basis for $(\FS)$ under the DRL monomial ordering. Denote $\lm(I)$ to be the set of the leading monomials of the polynomials in $I$.
Since $x_i^{d_i}\in\lm(I)$, there exists a unique polynomial $f_i'$ in $\GB$ such that $\lm(f_i')=x_i^{d_i'}$ with $d_i'\le d_i$.
For any $h\in\GB$, let $h=\sum_{k=1}^{n}a_kf_k+\sum_{l=1}^{r}b_lg_l$.
We claim that $\deg(h)\le\max\{d_1,\ldots,d_n,\sum_{i=1}^n d_i-n\}$.
If $h$ is reduced by each $f_k'$, we have $\deg_{x_k}(h)<\deg_{x_k}(f_k')=d_k'\le d_k$ and $\deg(h)\le\sum_{i=1}^n d_i-n$.
Otherwise, $h$ is exactly $f_k'$ for some $k$, and then $\deg(h)=d_k'\le d_k$.
Thus we have $\deg(h)\le\max\{d_k,\sum_{i=1}^n d_i-n\}$ and the claim is proved.

If $b_j$ is not reduced by some $f_i$, we have $b_j=pf_i+q$ where $q$ is reduced by $f_i$ and $h=(\sum_{k=1,k\ne i}^{n}a_kf_k+(pg_j+a_i)f_i)+(\sum_{l=1,l\ne j}^{r-n}b_lg_l+qg_j)$. Still write the new expression as $h=\sum_{k=1}^{n}a_kf_k+\sum_{l=1}^{r}b_lg_l$.
As a consequence, we can assume that each $b_j$ is reduced by $\FS_2$ and hence $\deg(b_l)\le\sum_{i=1}^n d_i-n$.

Let $\tilde h=h-\sum_{l=1}^{r}b_lg_l=\sum_{k=1}^{n}a_kf_k\in(\FS_2).$
From \bref{eq-FSB}, it is easy to see that  $\FS_2$ is a Gr\"obner basis for $(\FS_2)$.
Then, there exist $\tilde a_1,\ldots,\tilde a_n$ such that $\tilde h=\sum_{k=1}^{n}\tilde a_kf_k$ with $\deg(\tilde a_kf_k)\le\deg(\tilde h)\le\max\{h,b_lg_l\}$.
Thus we have $h=\sum_{k=1}^{n}\tilde a_kf_k+\sum_{l=1}^{r-n}b_lg_l$.
By the definition of solving degree, we have $\sdeg(\FS)\le\max\{\deg(\tilde a_kf_k),\deg(b_lg_l)\}\le\max\{\deg(\tilde h),\deg(b_lg_l)\}\le\max\{\deg(h),\deg(b_lg_l)\}\le\max\{d_1,\ldots,d_n,\sum_{i=1}^n d_i-n,\sum_{i=1}^n d_i-n+d\}=\sum_{i=1}^n d_i-n+d$, where the last equation comes from the assumption $d_i\ge1$.
\end{proof}

\begin{lem}\label{lm-csd}
For $\FS$ given in \bref{eq-FSB}, we have $\csdeg(\FS)\le d-2n+2\sum_{i=1}^n d_i$,
where $d= \max_i\deg(g_i)$.
\end{lem}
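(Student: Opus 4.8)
The plan is to deduce the bound from the solving-degree estimate of Lemma \ref{lm-sdnew} together with the fact that $\FS_2$ is a Gr\"obner basis whose leading monomials $x_i^{d_i}$ are pairwise coprime. Set $D=d-2n+2\sum_{i=1}^n d_i$, and observe that $D$ equals the solving-degree bound $d-n+\sum_{i=1}^n d_i$ plus the quantity $\sum_{i=1}^n d_i-n=\sum_{i=1}^n(d_i-1)$, which is exactly the largest total degree of a monomial that is reduced modulo $\FS_2$. So I want to show that for every $g\in\GB$ and every monomial $m$ with $\deg(mg)\le D$, the product $mg$ can be written in the generators of $\FS$ with all terms of degree at most $D$; this is precisely the condition defining $\csdeg(\FS)\le D$ in Definition \ref{def-csdeg}.

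The naive attempt---take a bounded representation $g=\sum_k a_kf_k+\sum_l b_lg_l$ from the solving degree and multiply through by $m$---fails, because each term $m a_kf_k$ then has degree $\deg(m)+\sdeg(\FS)$, which may vastly exceed $D$ when $\deg(g)$ is small. The key move is instead to spend the monomial $m$ against $\FS_2$ rather than against $g$. Concretely I would first divide $mg$ by the Gr\"obner basis $\FS_2$ under the DRL ordering, obtaining $mg=\sum_{k=1}^n p_kf_k+q$ with $q$ the normal form. Since DRL is degree compatible, $\deg(p_kf_k)\le\deg(mg)\le D$, and since $q$ is reduced we have $\deg_{x_i}(q)<d_i$ for every $i$, hence $\deg(q)\le\sum_{i=1}^n d_i-n$. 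Next, because $q=mg-\sum_k p_kf_k\in(\FS)$, reducing $q$ by the full reduced basis $\GB$ sends it to zero, yielding $q=\sum_j c_j\mu_jg_{(j)}$ with $g_{(j)}\in\GB$, monomials $\mu_j$, and $\deg(\mu_jg_{(j)})\le\deg(q)\le\sum_{i=1}^n d_i-n$ at every step, again by degree compatibility.

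Finally I would unfold each Gr\"obner basis element into the original generators: by Lemma \ref{lm-sdnew}, $g_{(j)}=\sum_k a_{jk}f_k+\sum_l b_{jl}g_l$ with every term of degree at most $\sdeg(\FS)\le d-n+\sum_{i=1}^n d_i$. Using $\deg(\mu_j)=\deg(\mu_jg_{(j)})-\deg(g_{(j)})\le\sum_{i=1}^n d_i-n$, each term $\mu_ja_{jk}f_k$ (and likewise $\mu_jb_{jl}g_l$) has degree at most $(\sum_{i=1}^n d_i-n)+(d-n+\sum_{i=1}^n d_i)=D$. Substituting back into $mg=\sum_k p_kf_k+q$ and collecting the coefficients of each $f_k$ and each $g_l$ exhibits $mg$ as a combination of the generators of $\FS$ in which every term has degree at most $D$, so $\csdeg(\FS)\le D$.

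The step I expect to be the crux is the first reduction: the insight that one should divide $mg$ modulo $\FS_2$ \emph{before} invoking the solving degree, so that the potentially large monomial factor is absorbed into the $f_k$-coefficients (which are allowed up to degree $D$) and only the genuinely low-degree remainder $q$ is handed to the solving-degree machinery. The remaining ingredients are routine and may be invoked without comment: degree compatibility of DRL ensures neither division raises the degree, and coprimality of the leading monomials $x_i^{d_i}$ ensures, via Buchberger's product criterion, that $\FS_2$ is a Gr\"obner basis, which is what validates the bound $\deg(q)\le\sum_{i=1}^n d_i-n$.
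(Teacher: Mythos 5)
Your proof is correct, and it is built from the same two pillars as the paper's own proof: the solving-degree bound $\sdeg(\FS)\le d-n+\sum_{i=1}^n d_i$ of Lemma \ref{lm-sdnew}, and the observation that a monomial reduced modulo $\FS_2$ (i.e.\ with $\deg_{x_i}<d_i$ for every $i$) has total degree at most $\sum_{i=1}^n d_i-n$, combined through the identical degree-budget split $D=\bigl(\sum_{i=1}^n d_i-n\bigr)+\bigl(d-n+\sum_{i=1}^n d_i\bigr)$. The difference is organizational, but genuine. The paper runs a hand-rolled leading-term cancellation loop on $mg$: for the current leading monomial it constructs a degree-bounded witness $h$ with matching leading monomial --- a plain multiple $\frac{m'}{\lm(f_i)}f_i$ when that monomial is divisible by some $x_i^{d_i}$, and otherwise a monomial multiple (of degree at most $\sum_i d_i-n$) of an $\sdeg$-bounded representation of a Gr\"obner basis element --- then subtracts and argues termination and the vanishing of the final remainder via $\lm(\widehat g)\notin\lm(I)$, $\widehat g\in I$. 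You instead batch the two cases into two successive applications of the division algorithm: first the normal form of $mg$ modulo $\FS_2$, which absorbs all monomials divisible by some $x_i^{d_i}$ into the $f_k$-coefficients at no degree cost (degree compatibility of DRL), and then reduction of the low-degree remainder $q$ to zero by $\GB$, followed by unfolding each $g_{(j)}$ through its solving-degree representation. What your route buys is that termination and the conclusion $q\in I\Rightarrow$ zero remainder come for free from standard division-algorithm facts, and the crux --- multipliers attached to $\sdeg$-representations never exceed degree $\sum_i d_i-n$ --- is isolated in one clean step; what the paper's route buys is self-containedness and a reusable intermediate claim about arbitrary monomials of $\lm(I)$ of degree at most $D$. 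One small correction to your closing remark: the Gr\"obner-basis property of $\FS_2$ (Buchberger's product criterion) is not what validates $\deg(q)\le\sum_i d_i-n$, since the remainder of division by \emph{any} finite set has no monomial divisible by a leading monomial of that set; the Gr\"obner property your argument actually uses is that of $\GB$ itself, so that $q\in I$ reduces to zero, which holds by definition of $\GB$.
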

\begin{proof}
Let $I =(\FS)$ and $D=d-2n+2\sum_{i=1}^n d_i$.
Fix a monomial $m\in\lm(I)$ with $\deg(m)\le D$.
We first claim that  there exists an $h= \sum_i p_i f_i +\sum_jq_jg_j$ for some  $p_i,q_j\in\C[\X]$ such that
$\lm(h)=m$, $\deg(p_i f_i)\le D$, and $\deg(q_jg_j) \le D$.
If $m$ is a multiple of some $\lm(f_i)$,
we have $m=\lm(\frac{m}{\lm(f_i)}f_i)$ with $\deg(\frac{m}{\lm(f_i)}f_i)\le D$
and the claim is proved.
Otherwise, we have $\deg_{x_i}(m)<d_i$ for each $i$ and thus $\deg(m)\le\sum_{i=1}^nd_i-n$.
There exists a $g$ in the reduced Gr\"obner basis $\GB$ of $(\FS)$, such that $\lm(g)|m$. Let $g=\sum_ka_kf_k+\sum_lb_lg_l$ with $\deg(a_kf_k),\deg(b_lg_l)\le\sdeg(\FS)$,
and we have $m=\lm(\frac{m}{\lm(g)}g)=\lm(\sum_k\frac{m}{\lm(g)}a_kf_k
+\sum_l\frac{m}{\lm(g)}b_lg_l)$ with $\deg(\frac{m}{\lm(g)}a_kf_k),\deg(\frac{m}{\lm(g)}b_lg_l)
\le\deg(m)+\sdeg(\FS)\le\sum_{i=1}^nd_i-n+\sdeg(\FS)\le D$, where the last inequality comes from Lemma \ref{lm-sdnew} and the assumption $\deg(m)\le\sum_{i=1}^nd_i-n$.
The claim is proved.

We now prove the lemma. Fix a polynomial $g\in\mathbb G$ and a monomial $m\in\C[\X]$ satisfying $\deg(mg)\le D$. Since $\lm(mg)\in\lm(I)$ and $\deg(mg)\le D$,
by the claim just proved, there exists an $h= \sum_i p_i f_i +\sum_jq_jg_j$ for some  $p_i,q_j\in\C[\X]$ such that
$\lm(mg) = \lm(h)$, $\deg(p_i f_i)\le D$, and $\deg(q_jg_j) \le D$.
Let $\widetilde{g} = mg - \frac{\lc(mg)}{\lc(h)}h$, where $\lc(p)$ is the coefficient of
$\lm(p)$ in $p$ for any $p\in\C[\X]$.
Then $\widetilde{g}\in I$ and $\deg(\lm(\widetilde{g}))< D$.
We thus can repeat the above procedure for $\widetilde{g}$ (instead of $mg$).
The process will end and we obtain
$\widehat{g} = mg - \sum_{i=1}^n \widehat{p}_i f_i-\sum_{j=1}^r \widehat{q}_jg_j$,
where $\lm(\widehat{g})\not\in\lm(I)$, $\deg(\widehat{p}_i f_i)\le D$,
and $\deg(\widehat{q}_jg_j) \le D$.
Since $\lm(\widehat{g})\not\in\lm(I)$ and $\widehat{g}\in I$, we have
$\widehat{g}=0$ and hence $mg = \sum_{i=1}^n \widehat{p}_i f_i+\sum_{j=1}^r \widehat{q}_jg_j$. The lemma is proved.
%
\end{proof}

\subsection{Solution of the Macaulay linear system }

The following lemma gives the solutions to the
Macaulay linear system \bref{eq-1} by using the complete solving degree.

\begin{lem}\label{sol-sh}
Let $I=(\FS)$ be a radical zero-dimensional ideal in $\C[\X]$ generated by $\FS$ and $\V_\C(I)=\{\mathbf{a}_1,\ldots,\mathbf{a}_w\}$.
For $D\ge\csdeg(\FS)$, any solution $\mv_D$ of the Macaulay linear system $\MD_{\FS,D}\mv_D=\mathbf b_{\FS,D}$ is of the form
$$\widehat{\mv}_D= \sum\limits_{i=1}^w \eta_i{\mv_D}(\mathbf a_i)+\sum\limits_{\sum_ik_{(\bar D+1)}[i]>D}\mu_{k}\mathbf e_{k-1},$$
where $\eta_i$ are complex numbers such that $\sum_{i=1}^w \eta_i = 1$, $\mu_{k}$ are arbitrary complex numbers, and $\mathbf e_k$ is the $k$-th unit vector in $\C^{(\bar D+1)^n-1}$.
\end{lem}
\begin{proof}
From \bref{eq-ms1}, we have $\deg(m_{\bar D,k}) = \sum_ik_{(\bar D+1)}[i]$.
By {Remark \ref{rem-0l}}, if $\deg(m_{\bar D,k})=\sum_ik_{(\bar D+1)}[i]$ $>D$, then the $(k-1)$-th row in $\MD_{\FS,D}$ corresponded to $m_{\bar D,k}$  is a zero column,
so $m_{\bar{D},k}$ can take arbitrary value in the solution of the Macaulay linear system
and hence $\mu_{k}\mathbf e_{k-1}$ is a solution of the Macaulay linear system.
Delete the $(k-1)$-th column in $\MD_{\FS,D}$ and the $(k-1)$-th row in $\mv_D$ and $\mathbf b_{\FS,D}$ to obtain a new system $\widetilde{\MD}_{\FS,D}\widetilde{\mv}_{\FS,D}=\widetilde{\mathbf b}_{\FS,D}$.

When we do Gaussian elimination on the linear system $\widetilde{\MD}_{\FS,D}\widetilde{\mv}_{\FS,D}=\widetilde{\mathbf b}_{\FS,D}$, we get a new linear system $\overline{\MD}_{\FS,D}\widetilde{\mv}_{\FS,D}= \overline{\mathbf b}_{\FS,D}$ .
Since $D\ge\csdeg(\FS)$, the largest monomial in each row of $\overline{\MD}_{\FS,D}$ is in $\lm(I)$
and each monomial in $\lm(I)$ occurs as one of the leading monomials for some row.
Thus, the dimension of the solution space of $\overline{\MD}_{\FS,D}\widetilde{\mv}_{\FS,D}=\overline{\mathbf b}_{\FS,D}$ is at most
$\#(\m_{\le D}\setminus\lm(I))-1\le\#(\m\setminus\lm(I))-1=\#V(I)-1=w-1$, where the first equality is true
because $I$ is radical.
Since each $\widetilde{\mv}_{D}(\mathbf{a}_i)$ is a solution of $\overline{\MD}_{\FS,D}\widetilde{\mv}_{D}=\overline{\mathbf b}_{\FS,D}$, $\{\sum\eta_i\widetilde{\mv}_{D}(\mathbf a_i)|\sum\eta_i=1\}$ is a subspace of the solution space of $\overline{\MD}_{\FS,D}\widetilde{\mv}_{D}=\overline{\mathbf b}_{\FS,D}$,
that is, the solution space is of dimension at least $w-1$.
Then, $\{\sum\eta_i\widetilde{\mv}_{D}(\mathbf a_i)|\sum\eta_i=1\}$ is exactly the solution space of $\overline{\MD}_{\FS,D}\widetilde{\mv}_{D}=\overline{\mathbf b}_{\FS,D}$, also that of $\widetilde{\MD}_{\FS,D}\widetilde{\mv}_{D}=\widetilde{\mathbf b}_{\FS,D}$.
The lemma is proved.
\end{proof}
\begin{cor}\label{cr-unis}
In Lemma \ref{sol-sh}, if $\FS$ has a unique solution $\mathbf a$, then
we have
$$\widehat{\mv}_D=\mv_D(\mathbf a)+\sum\limits_{\sum_ik_{(\bar D+1)}[i]>D}\mu_{k}\mathbf e_{k-1}.$$
\end{cor}
\begin{exmp}\label{ex-101}
The equation  system in {Example \ref{ex-1}} has a unique solution $\mathbf a=(1,1)$, with $\mv_D(\mathbf a_1)=(1,1,1,1,1,1,1,1,1,1,1,1,1,1,1)$.
By Lemma \ref{sol-sh},
the solution of the Macaulay linear system is $(1,1,1+\mu_3,1,1,1+\mu_6,
1+\mu_7,1,1+\mu_9,1+\mu_{10},1+\mu_{11},
1+\mu_{12},1+\mu_{13},
1+\mu_{14},1+\mu_{15})$,
where each $\mu_i$ is an arbitrary complex number.
The solutions of the form $1+\mu_{k}$ correspond to the zero
columns of $\MD_{\FS,D}$.
\end{exmp}

\subsection{A quantum algorithm for monomial-solving of polynomial systems}
In this section, we show that applying the HHL algorithm
to the Macaulay linear system of $\FS$, we obtain a pseudo solution
of $\FS$.

Let $\FS\subset\C[\X]$ with $\V_\C(\FS)=\{\mathbf a_1, \ldots, \mathbf a_w\}$
and ${{\mv}}$ a vector of monomials in $\X$.
A normalized linear combination of $\sum_i c_i {\mv}(\mathbf a_i)$
such that $\sum_i c_i=1$ is also a solution to the
Macualay linear system,
which is called a {\em pseudo solution} of $\FS$.

Note that a {\em pseudo solution} of $\FS$ is a solution of the Macualay linear system.
From Lemma \ref{sol-sh}, the converse is not rue: a solution of the Macaulay linear system contains some arbitrary constants $\mu_k$ and is not a pseudo solution of $\FS$.
Fortunately, applying the HHL algorithm to the Macaulay  linear system,
the solution of the Macualay linear system is a pseudo solution of $\FS$.

We introduce the notation $\widetilde{\mv}\in\m^{(\bar D+1)^n-1}$:
for $k=1,\ldots,(\bar D+1)^n-1$,
\begin{equation}\label{eq-rm}
\widetilde{\mv}_D[k] =\begin{cases}
m_{\bar D,k},& \text{if } \deg(m_{\bar D,k})= \sum_ik_{(\bar D+1)}[i]\le D;\\
0,& \text{if } \deg(m_{\bar D,k})= \sum_ik_{(\bar D+1)}[i]>D.
\end{cases}
\end{equation}

Using the modified HHL algorithm (Theorem \ref{thm-hhl}) to the Macaulay linear system, we have
\begin{thm}\label{th-qsh}
Let $I=(\FS)$ be a radical zero-dimensional ideal in $\C[\X]$ generated by $\FS$, $\V_\C(I)=\{\mathbf{a}_1,\ldots,\mathbf{a}_w\}$, $\epsilon\in(0,1)$,
and $D\ge\csdeg(\FS)$.
%
Using the modified HHL algorithm to the Macaulay linear system $\MD_{\FS,D}\mv_D = \mathbf b_{\FS,D}$ defined in \bref{eq-mls}, the answer is an $\epsilon$-approximation to the following pseudo solution of $\FS$
$$\widehat{\mv}_D= \sum\limits_{i=1}^w \eta_i\widetilde{\mv}_D(\mathbf a_i),$$
where the monomial vector  $\widetilde{\mv}_D$ is defined in  \bref{eq-rm},
$\eta_i$ are complex numbers satisfying $\sum_{i=1}^w \eta_i = 1$, and $\|\sum_{i=1}^w\eta_i\widetilde{\mv}_D(\mathbf a_i)\|$
is minimal.
The runtime complexity is $\widetilde{O}(\log(D)nT_\FS\kappa^2/\epsilon)$,
where $T_\FS=\sum_{i=1}^r \#f_i$ and $\kappa$ is the condition number of $\MD_{\FS,D}$.
\end{thm}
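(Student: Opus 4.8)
The plan is to combine the three ingredients already assembled: the characterization of solutions of the Macaulay linear system (Lemma \ref{sol-sh}), the explicit form of the HHL output (Lemma \ref{lem-HHL}), and the verification that the Macaulay system meets the hypotheses of the modified HHL algorithm (Lemma \ref{lem-A} together with Theorem \ref{thm-hhl}). The conclusion has two essentially independent parts: identifying \emph{which} solution state the HHL algorithm returns, and bounding the runtime.

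For the first part, I would argue as follows. By Lemma \ref{sol-sh}, since $D\ge\csdeg(\FS)$ and $I$ is radical zero-dimensional, every solution of $\MD_{\FS,D}\mv_D=\mathbf b_{\FS,D}$ has the form $\sum_{i=1}^w\eta_i\mv_D(\mathbf a_i)+\sum_{\sum_ik_{(\bar D+1)}[i]>D}\mu_k\mathbf e_{k-1}$ with $\sum_i\eta_i=1$ and the $\mu_k$ arbitrary. The key observation is that Lemma \ref{lem-HHL} tells us the HHL algorithm does not return an arbitrary solution: it returns (the normalization of) the \emph{minimal-norm} solution $\widetilde x=\sum_j\lambda_j^{-1}|v_j\rangle\langle v_j|b\rangle$, which lies in the row space of $A$ and is orthogonal to $\ker A$. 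I would first observe that the free directions $\mathbf e_{k-1}$ indexed by $\sum_ik_{(\bar D+1)}[i]>D$ are exactly the zero columns of $\MD_{\FS,D}$ noted in Remark \ref{rem-0l}, hence they span a subspace of $\ker\MD_{\FS,D}$; minimality of the norm therefore forces the corresponding coordinates of the returned vector to vanish on these directions, replacing $\mv_D$ by the truncated monomial vector $\widetilde{\mv}_D$ of \bref{eq-rm}. This is precisely why the answer collapses to $\widehat{\mv}_D=\sum_{i=1}^w\eta_i\widetilde{\mv}_D(\mathbf a_i)$ with $\|\widehat{\mv}_D\|$ minimal among all such combinations, i.e.\ a genuine pseudo solution rather than one polluted by arbitrary constants.

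The runtime bound is then a direct substitution into Theorem \ref{thm-hhl}. Lemma \ref{lem-A} supplies the parameters $s=T_\FS$, $\gamma=O(n\log D+\log r)$, and the dimensions $M,N$ with $\log(M+N)=O(n\log D+\log r)$ via Lemma \ref{lem-6d}. Plugging $s$, $\gamma$, and $\log(M+N)$ into the complexity $\widetilde O((\log(M+N)+\gamma)s\kappa^2/\epsilon)$ of Theorem \ref{thm-hhl} gives $\widetilde O((n\log D+\log r)T_\FS\kappa^2/\epsilon)$; absorbing the $\log r$ term (since $r\le T_\FS$ and $\widetilde O$ suppresses such factors) and writing it as $\widetilde O(\log(D)nT_\FS\kappa^2/\epsilon)$ yields the stated bound, with $\kappa$ the condition number of $\MD_{\FS,D}$ as required.

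The step I expect to be the main obstacle is the minimality argument, namely making rigorous that the minimal-norm solution produced by Lemma \ref{lem-HHL} automatically kills the $\mu_k$ components and simultaneously selects the minimal-norm element \emph{within} the affine family $\{\sum_i\eta_i\widetilde{\mv}_D(\mathbf a_i):\sum_i\eta_i=1\}$. The subtlety is that $\widetilde x$ is the minimal-norm solution over all of $\C^N$, and I must verify that orthogonality to $\ker\MD_{\FS,D}$ both forces the free coordinates to zero and is consistent with the normalization constraint $\sum_i\eta_i=1$ inherited from $\mathbf b_{\FS,D}$; I would handle this by decomposing $\C^N$ as the row space of $\MD_{\FS,D}$ plus its kernel, noting that $\widetilde x$ lies in the former, and checking that the truncation map $\mv_D\mapsto\widetilde{\mv}_D$ is exactly the projection that discards the kernel directions identified by the zero columns. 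The remaining verification---that the first summand of Lemma \ref{sol-sh} already lies in the row space so that minimizing norm within it is the same as global minimization---follows because those directions are the only freedom left once the zero-column directions are removed.
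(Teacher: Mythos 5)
Your proposal is correct and follows essentially the same route as the paper's proof: Lemma \ref{sol-sh} to characterize all solutions of the Macaulay system, the minimal-norm property of Lemma \ref{lem-HHL} combined with the fact that the zero-column directions $\mathbf e_{k-1}$ are orthogonal to the truncated vectors $\widetilde{\mv}_D(\mathbf a_i)$ to force the free components $\widetilde\mu_k$ to vanish, and Lemma \ref{lem-A} plugged into Theorem \ref{thm-hhl} (with $r\le T_\FS$) for the complexity bound. The only cosmetic difference is that you phrase the vanishing of the free coordinates via the row-space/kernel decomposition of the minimal-norm solution, where the paper simply computes $\langle\mathbf e_{k-1}|\widetilde{\mv}_D(\mathbf a_i)\rangle=0$ directly---the same disjoint-support observation.
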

\begin{proof}
By {Lemma \ref{sol-sh}}, $$\widehat{\mv}_D=
\sum\limits_{\sum\eta_i=1}\eta_i{\mv_D}(\mathbf a_i)+\sum\limits_{\sum_ik_{(\bar D+1)}[i]>D}\mu_{k}\mathbf e_{k-1}
=\sum\limits_{\sum\eta_i=1}\eta_i\widetilde{\mv}_D(\mathbf a_i)+\sum\limits_{\sum_ik_{(\bar D+1)}[i]>D}\widetilde\mu_{k}\mathbf e_{k-1}.$$
By Lemma \ref{lem-HHL}, $\|\widehat{\mv}_D\|$ is minimal.
Since $\langle\mathbf e_{k-1}|\widetilde{\mv}_D(\mathbf a_i)\rangle=0$,
in order for $\|\widehat{\mv}_D\|$ to be minimized, each $\widetilde\mu_{k}=0$.
We proved that the solution is indeed a pseudo solution.

We now analyse the complexity.
We may change the order of $f_i$ such that, the first $\rho$ polynomials
$f_i$ have nonzero constant terms. This step costs $O(r)$.
By Lemma \ref{lem-A}, $\MD_{\FS,D}\mv_D = \mathbf b_{\FS,D}$ satisfies {\bf
Assumptions} 1 and 2 for $s=T_\FS=\sum_{i=1}^rt_i$, $\gamma=n\log D+\log r$,
and $v= n\delta$.
%
%
By Lemma \ref{lem-A},  $\MD_{\FS,D}$ is of dimension $O(r(2D)^n)\times O((2D+1)^n)$.
By {Theorem \ref{thm-hhl}} and {Lemma \ref{lm-nd}}, the complexity of the HHL algorithm is $
\widetilde{O}((\log(M+N)+\gamma)s\kappa^2/\epsilon)=
\widetilde O((\log(r(2D)^n+ (2D+1)^n)+(n\log( D)+\log r))T_\FS\kappa^2/\epsilon)
 =\widetilde{O}((\log(r)+n\log( D))T_\FS\kappa^2/\epsilon)
 =\widetilde{O}(\log(r)T_\FS\kappa^2/\epsilon+n\log( D)T_\FS\kappa^2/\epsilon)
 =\widetilde{O}(T_\FS\kappa^2/\epsilon+n\log( D)T_\FS\kappa^2$ $/\epsilon) =\widetilde O(\log(D)nT_\FS\kappa^2/\epsilon)$, since $r \le T_\FS$.
We prove the theorem.
\end{proof}
%
%

%

By Remark \ref{rem-HHL},
the exact complexity for Theorem \ref{th-qsh} is
\begin{cor}\label{cor-E}
The exact complexity to compute $|\widehat{\mv}_D\rangle$
is $c\log(M+N)T_\FS\kappa^2/\epsilon$, where $N=r(\bar d+1)^n$,
$M=(\bar D+1)^n -1$, and $c$ is the complexity constant of the HHL algorithm.
\end{cor}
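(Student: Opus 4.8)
The plan is to read the complexity directly off Theorem~\ref{th-qsh} (equivalently, off the modified HHL bound of Theorem~\ref{thm-hhl}) and then convert the $\widetilde O$ estimate into the explicit constant form prescribed by Remark~\ref{rem-HHL}. First I would recall that, by Lemma~\ref{lem-A}, the Macaulay linear system $\MD_{\FS,D}\mv_D=\mathbf b_{\FS,D}$ satisfies {\bf Assumptions} 1 and 2 with $s=T_\FS$, $\gamma=O(n\log D+\log r)$, and with $\MD_{\FS,D}$ of size $M\times N$ where $M=r(\bar d+1)^n$ and $N=(\bar D+1)^n-1$. (The labelling of $M$ and $N$ in the statement of the corollary is interchanged relative to Lemma~\ref{lem-A}, but only the symmetric quantity $M+N$ enters the bound, so this is immaterial.) Applying Theorem~\ref{thm-hhl} then gives runtime $\widetilde O((\log(M+N)+\gamma)\,T_\FS\kappa^2/\epsilon)$.

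Next I would pass to the constant-explicit form. By Remark~\ref{rem-HHL} the $\widetilde O$ bound of the modified HHL algorithm is replaced by $c(\log(M+N)+\gamma)\,T_\FS\kappa^2/\epsilon$, where $c$ is the complexity constant of the HHL algorithm. It then remains only to absorb $\gamma$ into the leading $\log(M+N)$ term.

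The one genuine (though routine) step is to show $\gamma=O(\log(M+N))$, and for this I would invoke Lemma~\ref{lem-6d}. Since $\bar d+1\le 2D$ and $\bar D+1\le 2D+1$, we have $M+N\le r(2D)^n+(2D+1)^n$, whence $\log(M+N)=O(\log r+n\log D)$; conversely, $M+N\ge(\bar D+1)^n\ge(D+1)^n$ gives $\log(M+N)\ge n\log D$, while $M+N\ge M=r(\bar d+1)^n\ge r$ gives $\log(M+N)\ge\log r$. Hence $\log(M+N)=\Theta(n\log D+\log r)$, and in particular $\gamma=O(n\log D+\log r)=O(\log(M+N))$, so that $\log(M+N)+\gamma=O(\log(M+N))$.

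Finally I would note that the state-preparation overhead for $|c\rangle$ and the reordering of the $f_i$, as quantified in the proofs of Lemma~\ref{thm-hhl1} and Theorem~\ref{th-qsh}, are of order $O(\log(M+N))$ and $O(r)$ respectively, hence negligible against the HHL cost. Combining these observations, the total complexity collapses to $c\log(M+N)\,T_\FS\kappa^2/\epsilon$, which is the claimed bound. I do not expect any real obstacle here: the corollary is essentially a bookkeeping restatement of Theorem~\ref{th-qsh} in the constant-explicit convention of Remark~\ref{rem-HHL}, and the only point requiring care is verifying that $\gamma$ does not dominate $\log(M+N)$, which follows at once from Lemma~\ref{lem-6d}.
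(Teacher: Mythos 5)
Your proposal is correct and follows the paper's own route exactly: the corollary is obtained by applying the constant-explicit convention of Remark~\ref{rem-HHL} to the complexity bound established in the proof of Theorem~\ref{th-qsh}, where the absorption of $\gamma=O(n\log D+\log r)$ into $\log(M+N)$ is already carried out via Lemma~\ref{lem-6d} (and you correctly note that the swapped labels of $M$ and $N$ in the corollary are immaterial since only $M+N$ enters). Your explicit verification that $\log(M+N)=\Theta(n\log D+\log r)$, together with the observation that the state-preparation and reordering costs are negligible, is precisely the bookkeeping the paper leaves implicit, so there is nothing to add.
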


\begin{cor}
If $\FS$ satisfies the conditions in Theorem \ref{th-qsh} and has a unique solution $\mathbf a$, then the solution state is
$|\mv\rangle=|\widetilde{\mv}(\mathbf a)\rangle.$
The complexity is $\widetilde{O}(\log(D)nT_\FS\kappa^2/\epsilon)$.
\end{cor}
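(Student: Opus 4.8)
The plan is to derive this corollary as the immediate $w=1$ specialization of Theorem \ref{th-qsh}, so almost all of the work is already done. First I would verify that the hypotheses of Theorem \ref{th-qsh} are in force—namely that $I=(\FS)$ is radical and zero-dimensional, $\epsilon\in(0,1)$, and $D\ge\csdeg(\FS)$—which is exactly what ``satisfies the conditions in Theorem \ref{th-qsh}'' asserts. Applying that theorem, the modified HHL algorithm run on the Macaulay linear system $\MD_{\FS,D}\mv_D=\mathbf b_{\FS,D}$ returns an $\epsilon$-approximation to the minimal-norm pseudo solution $\widehat{\mv}_D=\sum_{i=1}^w\eta_i\widetilde{\mv}_D(\mathbf a_i)$ subject to the normalization $\sum_{i=1}^w\eta_i=1$.

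The single new input is the uniqueness hypothesis: $\FS$ having a unique solution $\mathbf a$ means $\V_\C(I)=\{\mathbf a\}$, i.e. $w=1$. Then the constraint $\sum_{i=1}^1\eta_i=1$ pins down $\eta_1=1$ with no remaining degree of freedom, so the pseudo solution collapses to $\widehat{\mv}_D=\widetilde{\mv}_D(\mathbf a)$. Since the HHL output is the normalized state $|\widehat{\mv}_D/\|\widehat{\mv}_D\|\rangle$ (Lemma \ref{lem-HHL}), this is precisely $|\mv\rangle=|\widetilde{\mv}(\mathbf a)\rangle$, where I identify $\widetilde{\mv}(\mathbf a)=\widetilde{\mv}_D(\mathbf a)$. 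The complexity claim requires nothing new: the bound $\widetilde{O}(\log(D)nT_\FS\kappa^2/\epsilon)$ is inherited verbatim from Theorem \ref{th-qsh}, whose runtime analysis never referenced the value of $w$.

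There is essentially no obstacle here, which is appropriate for a corollary of this kind; the only subtlety worth recording is that the arbitrary constants $\mu_k$ appearing in the general Macaulay solution of Corollary \ref{cr-unis} do not survive. This is already handled inside the proof of Theorem \ref{th-qsh}: the orthogonality $\langle\mathbf e_{k-1}|\widetilde{\mv}_D(\mathbf a_i)\rangle=0$ together with the minimal-norm property of the HHL output forces every $\mu_k=0$. Hence the corollary reduces to reading off $w=1$ from the uniqueness hypothesis and substituting into Theorem \ref{th-qsh}.
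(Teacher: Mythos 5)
Your proposal is correct and matches the paper's intent exactly: the paper states this corollary without proof precisely because it is the immediate $w=1$ specialization of Theorem \ref{th-qsh}, which is what you carry out, including the right observation that $\sum_{i=1}^{1}\eta_i=1$ forces $\eta_1=1$ and that the $\mu_k$ terms are already eliminated inside the proof of Theorem \ref{th-qsh} via the minimal-norm property of the HHL output (Lemma \ref{lem-HHL}). Nothing further is needed; your one caveat---that the algorithm returns an $\epsilon$-approximation to $|\widetilde{\mv}(\mathbf a)\rangle$ rather than the exact state---is the same mild abuse of language the paper itself commits in stating the corollary.
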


\begin{exmp}\label{ex-11}
If using the modified HHL algorithm to solve the linear system in {Example \ref{ex-1}}, by {Theorem \ref{th-qsh}}, the solution is $(1,1,0,1,1,0,0,1,0,0,0,0,0,0,0)$.
In order to find the unique solution $x_1=1,x_2=1$, we need to know how to
project $(x_2,x_2^2,x_2^3,x_1,x_1x_2,x_1x_2^2,x_1x_2^3,
x_1^2,$ $x_1^2x_2,$ $x_1^2x_2^2,x_1^2x_2^3,x_1^3,x_1^3x_2,x_1^3x_2^2,
x_1^3x_2^3)$ to $(x_1,x_2)$ efficiently.
\end{exmp}

Motivated by the above example, we propose the following problem.
\begin{prob}\label{pro-11}
Let $|u\rangle$ be an $N$-dimensional quantum state and $n \ll N$.
How can we measure $n$ selected coordinates of $|u\rangle$ {efficiently}.
\end{prob}


\section{Find Boolean solutions for polynomial systems in $\C[\X]$}
\label{sec-bs}
In this section, we will give a quantum algorithm to compute the Boolean solutions of a polynomial system $\FS$ over $\C$. The key idea is that by measuring the
pseudo solutions of $\FS$, we may obtain a Boolean solution of
$\FS$ with high probability.

\subsection{A quantum algorithm to find Boolean solutions}

A solution $\mathbf a$ for $\FS\subset\C[\X]$ is called {\em Boolean}, if each coordinate of $\mathbf a$ is $0$ or $1$.
For $\FS\subset\C[\X]$, the set of Boolean solutions of $\FS$ are denoted as $\V_{B}(\FS)$.
%
We first prove a lemma.
\begin{lem}\label{lm-la1}
For $\FS\subset\C[\X]$, we have  $\V_{B}(\FS)=\V_\C(\FS,\mathbb H_\X)=\V_\C(\FS_B,\mathbb H_\X)$ and $I=(\FS_B,\mathbb H_\X)$ is radical, where $\mathbb H_{\X}=\{x_1^2-x_1,\ldots,x_n^2-x_n\}$
and
$\FS_B$ is   obtained from $\FS$ by replacing $x_i^m$ in $\FS$ with $x_i$ for all $i$ and $m\in\N$.
Furthermore, the complete solving degree $\csdeg(\FS_B\cup \H_\X) \le 3n$.
\end{lem}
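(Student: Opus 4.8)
The plan is to dispatch the three variety/ideal claims by elementary commutative algebra and then obtain the degree bound as a direct application of Lemma \ref{lm-csd}. For the first equality $\V_B(\FS)=\V_\C(\FS,\mathbb H_\X)$, I would note that a point $\mathbf a$ lies in $\V_\C(\mathbb H_\X)$ exactly when $a_i^2=a_i$, i.e. $a_i\in\{0,1\}$, for every $i$; intersecting with $\V_\C(\FS)$ gives precisely the Boolean zeros of $\FS$. For the second equality I would prove the stronger ideal identity $(\FS,\mathbb H_\X)=(\FS_B,\mathbb H_\X)$: since $x_i^m-x_i=(x_i^2-x_i)(x_i^{m-2}+\cdots+1)\in(\mathbb H_\X)$ for $m\ge 2$, replacing a power $x_i^m$ by $x_i$ alters a polynomial only by an element of $(\mathbb H_\X)$, so $f\equiv f_B \pmod{(\mathbb H_\X)}$ for each $f\in\FS$. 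The variety equality is then immediate from the ideal equality.

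For radicality of $I=(\FS_B,\mathbb H_\X)$, the key observation is that $I\supseteq(\mathbb H_\X)$. I would use the evaluation isomorphism $\C[\X]/(\mathbb H_\X)\cong\C^{\{0,1\}^n}$, whose right-hand side is a finite product of copies of $\C$: a reduced Artinian ring whose ideals correspond to subsets of $\{0,1\}^n$ and whose quotients are again such products, hence reduced. Therefore every ideal of $\C[\X]/(\mathbb H_\X)$ is radical, and pulling back shows that every ideal of $\C[\X]$ containing $(\mathbb H_\X)$ is radical; in particular $I$ is radical. This is the step requiring the most care to state cleanly, although it is conceptually standard.

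The degree bound is the crux, and it follows by fitting $\FS_B\cup\mathbb H_\X$ into the template \bref{eq-FSB} and invoking Lemma \ref{lm-csd}. I would take $\FS_2=\mathbb H_\X=\{x_1^2-x_1,\ldots,x_n^2-x_n\}$, whose leading monomial under the DRL order is $\lm(x_i^2-x_i)=x_i^2$, so $d_i=2$ for every $i$, and $\FS_1=\FS_B$. The defining condition $\deg_{x_i}(g_j)<d_i=2$ holds precisely because $\FS_B$ is formed by replacing each $x_i^m$ with $x_i$, so every variable occurs to degree at most $1$ in every element of $\FS_B$. Consequently each element of $\FS_B$ is a sum of squarefree monomials of total degree at most $n$, giving $d=\max_i\deg(g_i)\le n$. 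Substituting $\sum_{i=1}^n d_i=2n$ and $d\le n$ into Lemma \ref{lm-csd} yields $\csdeg(\FS_B\cup\mathbb H_\X)\le d-2n+2\sum_{i=1}^n d_i = d+2n\le 3n$, as required.

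The main obstacle is conceptual rather than computational: it is recognizing that the reduction $\FS\mapsto\FS_B$ is exactly what is needed to force the hypothesis $\deg_{x_i}(g_j)<d_i$ of the template \bref{eq-FSB}, since without it a polynomial of $\FS$ could carry $\deg_{x_i}\ge 2$ and the bound of Lemma \ref{lm-csd} would not apply. The only remaining hypothesis to confirm is $\FS_1\ne\emptyset$, which I would handle by assuming $\FS$ (hence $\FS_B$) is nonempty, the case of interest.
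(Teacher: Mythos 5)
Your proposal is correct and follows essentially the same route as the paper: the degree bound is obtained exactly as in the paper's proof, by fitting $\FS_B\cup\H_\X$ into the template \bref{eq-FSB} with $\lm(x_i^2-x_i)=x_i^2$, $d_i=2$, $\deg_{x_i}(g_j)\le 1$ (which is precisely what the reduction $\FS\mapsto\FS_B$ guarantees), and $d\le n$, and then invoking Lemma \ref{lm-csd} to get $d-2n+2\sum_{i=1}^n d_i=d+2n\le 3n$. The only difference is cosmetic, in the radicality step: you pull back from the reduced quotient $\C[\X]/(\H_\X)\cong\C^{\{0,1\}^n}$, whereas the paper writes $I$ as $\bigcap_{a_1,\ldots,a_n\in\{0,1\}}\bigl((\FS_B)+(x_1-a_1,\ldots,x_n-a_n)\bigr)$, an intersection of maximal ideals --- two formulations of the same Chinese-remainder fact, with yours stated slightly more carefully, since in the paper's intersection the terms indexed by Boolean points that are not zeros of $\FS_B$ are the unit ideal rather than maximal ideals (which does not affect the intersection).
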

\begin{proof}
The first assertion is easy.
Since $(\FS_B)+(x_1-a_1,\ldots,x_n-a_n)$ is a maximal ideal in the ring $\C[\X]$,
$$(\FS_B,\H_\X)=\bigcap\limits_{a_1,\ldots,a_n\in\{0,1\}}((\FS_B)+(x_1-a_1,\ldots,x_n-a_n)),$$
is an {intersection} of maximal ideals and $I$ is a radical ideal.
%
%
$\FS_B\cup\H_\X$ is clearly of form \bref{eq-FSB}.
By Lemma \ref{lm-csd},
$\csdeg(\FS_B\cup\H_\X)\le d-2n+2\sum_{i=1}^n d_i \le 3n$ since $d\le n$ and $d_i=2$ for all $i$.
\end{proof}

By Lemma \ref{lm-la1}, we can use {Theorem \ref{th-qsh}} to compute $|\widehat{\mv}_D\rangle$ where a bound $D=3n$ for the complete solving degree is given
. Denote $\mathbf 0=(0,\ldots,0)^T$ and $\mathbf 1=(1,\ldots,1)^T$. Our quantum algorithm to compute Boolean solutions is given below.

\begin{alg}\label{alg-b}
\end{alg}

{\noindent\bf Input:}
$\FS=\{f_1,\ldots,f_r\}\subset \mathcal\C[\X]$ with $T_\FS=\sum_{i=1}^r\#f_i$ and $d_i =\deg(f_i)$. Also $\epsilon\in(0,1)$.

{\noindent\bf Output:}
 A Boolean solution $\mathbf a\in\V_{\C}(\FS,\H_\X)$  or $\emptyset$ meaning that $\V_{\C}(\FS,\H_\X)=\emptyset$, with success probability at least $1-\epsilon$.

\begin{description}

\item[Step 1:]
If $\FS(\mathbf 0)=\mathbf 0$, return $\mathbf 0$.
If $\FS(\mathbf 1)=\mathbf 0$, return $\mathbf 1$.
Set $l=1$.


\item[Step 2:]
Let $\FS_1=\FS_B$  and $\Y=\X$.

\item[Step 3:]
Let $\FS_2=\FS_1\cup\H_\Y$ and $D= 3\#\Y$
(From Lemma \ref{lm-la1}, $D\ge \csdeg(\FS_2)$).

\item[Step 4:]
Use the modified HHL algorithm (Theorem \ref{thm-hhl}) to the Macaulay linear system
 $\MD_{\FS_2,D}$ $\mv_D$ $ =\mathbf b_{\FS_2,D}$
 to obtain a state $|\widehat{\mv}_D\rangle$
 with the error bound $\sqrt{\epsilon_1/n}$, where $\epsilon_1$ can be chosen arbitrarily in $(0,1)$ such as $\epsilon_1=1/2$.

\item[Step 5:]
Measuring $|\widehat{\mv}_D\rangle$, we obtain a state $|\mathbf e_{k-1}\rangle$ which corresponds $m_{\bar D,k}$ in $\mv_D$.

\item[Step 6:]
Let $m_{\bar D,k}=\prod_{i=1}^{u_k} x_{n_i}$.
Set $x_{n_i} = 1$ in $\FS_1\subset\C[\Y]$ for $i=1,\ldots,u_k$.

\item[Step 7:]
Remove $0$ from $\FS_1$. Set $\Y=\Y\setminus\{x_{n_i}\,|\,i=1,\ldots,u_k\}$.

\item[Step 8:]
If $1\in\FS_1$ or $\Y=\emptyset$ then goto {\bf Step 11}

\item[Step 9:]
If $\FS_1\ne\emptyset$ and $\FS_1(\mathbf 0)\ne\mathbf 0$,
 then goto {\bf Step 3}.

\item[Step 10:]
Return $(a_1,\ldots,a_n)$ where $a_i=0$ if $x_i\in\Y$ else $a_i=1$.

\item[Step 11:]
If $l>\lceil\log_{\epsilon_1}\epsilon\rceil$ then return $\emptyset$, else $l=l+1$ and goto {\bf Step 2}.
\end{description}

We have the following theorem, which will be proved in the rest of this section.
\begin{thm}\label{th-eq}
Algorithm \ref{alg-b} has the following properties.
\begin{enumerate}
\item If the algorithm returns a solution, then it is a Boolean solution of $\FS=0$.
Equivalently, if $\FS$ has no Boolean solutions, the algorithm returns $\emptyset$.
\item If $\FS$ has Boolean solutions, the algorithm computes one with probability
at least $1-\epsilon$.
\item The runtime complexity of the algorithm is
$\widetilde O(n^{2.5}(n+T_\FS)\kappa^2\log1/\epsilon)$,
where $\kappa$ is the maximal condition number for all matrixes $\MD_{\FS_2,D}$
in Step 4 of the algorithm, called the {\em condition number}
for the polynomial system $\FS$.
\end{enumerate}
\end{thm}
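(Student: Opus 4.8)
The plan is to prove the three assertions in order, reducing the whole statement to a single probabilistic estimate sitting on top of two bookkeeping arguments. For assertion (1) I would argue by a loop invariant: throughout Steps 3--9 the set $\FS_1$ is always $\FS_B$ with every variable already deleted from $\Y$ replaced by $1$, so that for any completion of the surviving variables in $\Y$ the value of $\FS_1$ equals the value of $\FS_B$ at the corresponding full Boolean point $\mathbf a$. The algorithm returns in Step 10 only when $1\notin\FS_1$ and either $\FS_1=\emptyset$ or $\FS_1(\mathbf 0)=\mathbf 0$; in either case completing the surviving variables by $0$ gives $\FS_B(\mathbf a)=\mathbf 0$, and since $\mathbf a$ is Boolean, Lemma \ref{lm-la1} ($\V_B(\FS)=\V_\C(\FS_B,\H_\X)$) yields $\mathbf a\in\V_B(\FS)$. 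Because the algorithm always halts (Step 11 caps the outer loop and each inner pass deletes at least one variable), the contrapositive ``no Boolean solution $\Rightarrow$ output $\emptyset$'' follows at once.

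Assertion (2) is the heart, and I would split it into a deterministic consistency step and a probabilistic step. For the first, write the pseudo solution from Theorem \ref{th-qsh} as $\widehat{\mv}_D=\sum_i\eta_i\widetilde{\mv}_D(\mathbf a_i)$, valid since $D=3\#\Y\ge\csdeg(\FS_2)$ by Lemma \ref{lm-la1} (applied to the current subsystem, which is again of the form \bref{eq-FSB} and radical). If a coordinate $k$ has $\widehat{\mv}_D[k]\ne 0$, then $m_{\bar D,k}(\mathbf a_i)\ne 0$ for some solution $\mathbf a_i$ of $\FS_2$; as $\mathbf a_i$ is Boolean this forces the support of $m_{\bar D,k}$ into the $1$-set of $\mathbf a_i$, so setting those variables to $1$ in Step 6 keeps $\mathbf a_i$ a solution of the substituted system and preserves $1\notin\FS_1$. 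Hence every measurement whose outcome has nonzero true amplitude (a \emph{good} outcome) preserves solvability and removes at least one variable, so a run of good outcomes reaches the stopping test of Step 10 after at most $n$ measurements.

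For the probabilistic step I would use that the modified HHL algorithm returns a state $|\psi\rangle$ within $\sqrt{\epsilon_1/n}$ of $|\widehat{\mv}_D\rangle$ (Theorem \ref{th-qsh}). Projecting onto the \emph{bad} coordinates, those with zero true amplitude, bounds the probability of a bad outcome in a single measurement by $\|\Pi_{\mathrm{bad}}(|\psi\rangle-|\widehat{\mv}_D\rangle)\|^2\le\epsilon_1/n$. A union bound over the at most $n$ inner iterations then bounds the failure probability of one outer pass by $\epsilon_1$, so each pass outputs a genuine Boolean solution with probability at least $1-\epsilon_1$. The $\lceil\log_{\epsilon_1}\epsilon\rceil$ independent repetitions in Step 11 amplify this, giving total failure probability at most $\epsilon_1^{\log_{\epsilon_1}\epsilon}=\epsilon$, which is assertion (2).

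Finally, for assertion (3) I would count costs. Each Step 4 call solves the Macaulay system of $\FS_2=\FS_1\cup\H_\Y$, a system in at most $n$ variables with total sparseness $T_{\FS_2}\le T_\FS+2n=O(n+T_\FS)$ and degree bound $D\le 3n$, to accuracy $\sqrt{\epsilon_1/n}$; by Theorem \ref{th-qsh} this costs $\widetilde O(n(n+T_\FS)\kappa^2\sqrt{n/\epsilon_1})=\widetilde O(n^{1.5}(n+T_\FS)\kappa^2)$ for constant $\epsilon_1$. Multiplying by the $O(n)$ inner iterations and $O(\log 1/\epsilon)$ outer repetitions yields $\widetilde O(n^{2.5}(n+T_\FS)\kappa^2\log 1/\epsilon)$, with $\kappa$ the largest condition number among the matrices $\MD_{\FS_2,D}$ occurring in Step 4. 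The hard part will be the control-flow edge cases of assertion (2): verifying that a run of only good outcomes actually terminates at Step 10 rather than exhausting $\Y$ and being routed to Step 11 (this is handled by combining the invariant with the $\mathbf 0/\mathbf 1$ pre-checks of Step 1, since emptying $\Y$ through good outcomes would force $\mathbf 1\in\V_B(\FS)$), and pinning down the exact norm in the HHL error bound so that the projection estimate onto the bad coordinates is fully rigorous.
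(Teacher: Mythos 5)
Your proposal is correct and follows essentially the same route as the paper's own proof: part (1) via the algorithm's built-in verification at Steps 8--10, part (2) by bounding the per-measurement failure probability through the projection of the HHL error onto the zero-amplitude (``bad'') coordinates by $\epsilon_1/n$ and amplifying over the at most $n$ inner and $\lceil\log_{\epsilon_1}\epsilon\rceil$ outer iterations (exactly Lemmas \ref{lm-bc1}, \ref{lm-bc2}, \ref{lm-bc3}, with your union bound equivalent to the paper's $(1-\epsilon_1/n)^n$ estimate), and part (3) by the same per-call cost from Theorem \ref{th-qsh} with $D\le 3n$ and sparseness $O(n+T_\FS)$ multiplied by the loop counts, as in Lemma \ref{lem-ec}. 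The two edge cases you flag at the end are precisely the ones the paper disposes of via the Step 1 pre-checks of $\mathbf 0$ and $\mathbf 1$ (justifying Step 8) and via the explicit norm computation in the proof of Lemma \ref{lm-bc1}.
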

%
%

First, we briefly explain  {Algorithm \ref{alg-b}}.
The algorithm has two loops: the inner loop from Step 3 to Step 9
and the outer loop from Step 2 to Step 11.

In the inner loop, we try to find a solution $\mathbf a=(a_1,\ldots,a_n)$ of $\FS_1$ and in each run of the loop at least one coordinate of $\mathbf a$ whose value is $1$, say $a_k=1$, is found. Then, we set $x_k = 1$ and try to find the rest coordinates
of $\mathbf a$ in the rest of the loop.
If the inner loop fails, then we restart from Step 2 and try to find another solution.

The purpose of the outer loop is two folds.
%
%
First, by using precision $\sqrt{\frac{\epsilon_1}{n}}$
instead of $\sqrt{\frac{\epsilon}{n}}$ in Step 4,
the algorithm uses less qubits.
%
%
We can use a large value for $\epsilon_1\in(0,1)$, say $\epsilon_1=1/2$.
Then the precision needed in Step 4 is $\sqrt{\epsilon_1/n}=\frac{1}{\sqrt{2n}}$ which
is generally larger than $\sqrt{\frac{\epsilon}{n}}$.
Second, the complexity  of the algorithm related with the precision decreases
from $O(1/\sqrt{\epsilon})$ to $O(\log{1/\epsilon})$.
The reason is the algorithm runs  $\lceil\log_{\epsilon_1}\epsilon\rceil$
more times of the inner loop but with less precision $\sqrt{\frac{\epsilon_1}{n}}$.
Please refer to Lemma \ref{lem-ec} for the detailed analysis.

We will explain each step of the algorithm below.
In Step 1, we first check two easy solutions.
In Step 2, since $x_i^2-x_i=0$, we replace $x_i^m$ by $x_i$ in time $\widetilde{O}(nT_\FS)$
to obtain $\FS_B$.

In the inner loop from step 3 to step 9, we will try to find a solution $\mathbf a=(a_1,\ldots,a_n)$ of $\FS_1$.
In Step 3, the bound $D=3\#\Y$ for the completely solving degree can be used due to Lemma \ref{lm-la1}.
In Step 4, we use the modified HHL to solve the Macaulay linear system.

In Step 5, we measure the quantum state $|\widehat{\mv}_D\rangle$.
Let $|\widehat{\mv}_D\rangle=
(\widehat{m}_{\bar D,1},\ldots,\widehat{m}_{\bar D,({\bar D}+1)^n-1})^T.$
Then by the property of quantum measurement,
with probability $|\widehat{m}_{\bar D,k}|^2$,
the measurement returns $|\mathbf e_{k-1}\rangle$
(By Remark \ref{rem-sb}, the subscript starts at $0$).

In Step 6, we will show later that with high provability
$m_{\bar D,k}=\prod_{i=1}^{u_k} x_{n_i}\ne0$ at the solution $\mathbf a=(a_1,\ldots,a_n)$ to be found.
Since $a_i$ is either $0$ or $1$,
$\prod_{i=1}^{u_k} a_{n_i}\ne0$ implies $a_{n_i}=1$ for all $n_i$.
We thus set $x_{n_i}=1$ in Step 7 and try to find the other coordinates of $\bf a$ in the loop from Step 3 to Step 9.

In Step 8, either $1\in\FS_1$ or $\Y=\emptyset$ implies $\V_{\C}(\FS_1,\H_\Y)=\emptyset$, because
we have both $\FS(\mathbf 0)\ne\mathbf 0$ and $\FS(\mathbf 1)\ne\mathbf 0$ from step 1.
This means that we did not find a solution
in the loop from Step 3 to Step 9
and need to find another solution by starting from Step 2 again.

In Step 9, if $\FS_1=\emptyset$ or $\FS_1(\mathbf 0)=\mathbf 0$,
then we find a solution of $\FS$:  $x_i=0$ for any $x_i\in\Y$
and $x_j=1$ for any $x_j\not\in\Y$,
which will be returned in Step 10.


Secondly, we prove the correctness of Theorem \ref{th-eq}.
Part 1 of Theorem \ref{th-eq} is obviously true, since
we have checked this fact in the algorithm.
Part 2 of Theorem \ref{th-eq} follows
from Lemma \ref{lm-bc3}, and part 3 of  Theorem \ref{th-eq} follows
from Lemma \ref{lem-ec}.

The following key lemma gives the successful probability for Steps 5 and 6.
%
%

\begin{lem}\label{lm-bc1}
In Steps 5 and 6, with a probability $>1-\epsilon_1/n$, $\V_{\C}(\FS_2)\ne\emptyset$ implies that there exists an $\mathbf a=(a_1,\ldots,a_n)\in\V_{\C}(\FS_2)$
with ${ a}_{n_i}=1$ for $i=1,\ldots,u_k$.
%
\end{lem}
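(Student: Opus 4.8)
The plan is to read the conclusion directly off the algebraic shape of the exact pseudo solution supplied by Theorem~\ref{th-qsh}, and then pay for the HHL approximation with a single error-propagation estimate. Writing $\V_\C(\FS_2)=\{\mathbf a_1,\ldots,\mathbf a_w\}$ (nonempty by hypothesis), the exact pseudo solution is the vector $\mathbf p=\sum_{j=1}^w\eta_j\widetilde{\mv}_D(\mathbf a_j)$ with $\sum_j\eta_j=1$, and the state $|\widehat{\mv}_D\rangle$ produced in Step~4 satisfies $\||\widehat{\mv}_D\rangle-|\mathbf p\rangle\|\le\sqrt{\epsilon_1/n}$, where $|\mathbf p\rangle=\mathbf p/\|\mathbf p\|$. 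The observation I would build on is that every $\mathbf a_j$ is Boolean, so each coordinate $m_{\bar D,k}(\mathbf a_j)=\prod_{i=1}^{u_k}a_{j,n_i}$ lies in $\{0,1\}$ and equals $1$ exactly when $a_{j,n_i}=1$ for all $i$. Thus the outcome $|\mathbf e_{k-1}\rangle$ of Step~5 yields the asserted $\mathbf a$ precisely when $m_{\bar D,k}(\mathbf a_j)=1$ for at least one $j$.

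First I would split the coordinates into a ``good'' part and a ``bad'' part, setting
$$B=\{k:\ m_{\bar D,k}(\mathbf a_j)=0\ \text{for all } j\}\cup\{k:\ \deg(m_{\bar D,k})>D\}.$$
For $k\in B$ the exact amplitude vanishes: in the first case $\langle\mathbf e_{k-1}|\mathbf p\rangle=\sum_j\eta_j m_{\bar D,k}(\mathbf a_j)=0$, and in the second $\widetilde{\mv}_D[k]=0$ by~\bref{eq-rm}. Hence, with $P=\sum_{k\in B}|\mathbf e_{k-1}\rangle\langle\mathbf e_{k-1}|$ the orthogonal projection onto the bad coordinates, one has $P|\mathbf p\rangle=0$.

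The core estimate is then immediate. Measuring $|\widehat{\mv}_D\rangle$ returns a bad index with probability $\|P|\widehat{\mv}_D\rangle\|^2$, and since $P|\mathbf p\rangle=0$ and $P$ is a contraction,
$$\big\|P|\widehat{\mv}_D\rangle\big\|=\big\|P(|\widehat{\mv}_D\rangle-|\mathbf p\rangle)\big\|\le\big\||\widehat{\mv}_D\rangle-|\mathbf p\rangle\big\|\le\sqrt{\epsilon_1/n},$$
so a bad outcome occurs with probability at most $\epsilon_1/n$. With probability at least $1-\epsilon_1/n$ the measurement therefore returns some $k\notin B$, for which there exists $\mathbf a=\mathbf a_j\in\V_\C(\FS_2)$ with $m_{\bar D,k}(\mathbf a)=1$, i.e. $a_{n_i}=1$ for $i=1,\ldots,u_k$, which is exactly the claim.

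The step I expect to need the most care is verifying that $B$ is genuinely contained in the zero-amplitude support of $|\mathbf p\rangle$. The subtlety is that $\langle\mathbf e_{k-1}|\mathbf p\rangle$ could vanish through cancellation among the complex weights $\eta_j$ even when some $m_{\bar D,k}(\mathbf a_j)=1$; but any such $k$ is \emph{good} rather than bad, so the inclusion $B\subseteq\{k:\langle\mathbf e_{k-1}|\mathbf p\rangle=0\}$ still holds and cancellation can only help. I would also make sure the degree-$>D$ coordinates are swept into $B$ — by~\bref{eq-rm} they carry zero exact amplitude, so an accidental high-degree measurement must be counted as a failure — and I would note that the precision $\sqrt{\epsilon_1/n}$ chosen in Step~4 is exactly calibrated so that its square reproduces the stated bound $\epsilon_1/n$.
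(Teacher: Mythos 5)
Your proof is correct and follows essentially the same route as the paper's: both arguments bound the probability of measuring a coordinate whose exact amplitude vanishes by the squared state distance $\||\widehat{\mv}_D\rangle-|\mv_D\rangle\|^2<\epsilon_1/n$ guaranteed by the precision chosen in Step 4, and then invoke Theorem \ref{th-qsh} together with the Booleanity of the solutions to convert any surviving index $k$ into a solution $\mathbf a$ with $m_{\bar D,k}(\mathbf a)=1$, i.e.\ $a_{n_i}=1$ for all $i$. The only (harmless) bookkeeping difference is that you project onto the smaller bad set $B$ of indices where every $m_{\bar D,k}(\mathbf a_j)$ vanishes, treating cancellation indices as good, whereas the paper counts every index with exact amplitude $\alpha_k=0$ as a failure; since your $B$ is contained in the paper's failure set, the identical contraction estimate applies and your bound is, if anything, marginally sharper.
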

\begin{proof}
Let $|\mv_D\rangle=\sum_{j=1}^{(\bar D+1)^n-1} \alpha_j|\mathbf e_{j-1}\rangle$ be the solution state and $|\widehat{\mv}_D\rangle=\sum_{j=1}^{(\bar D+1)^n-1} \beta_j|\mathbf e_{j-1}\rangle$ be the approximate state obtained with the HHL algorithm.
If we can measure the true solution $|\mv_D\rangle$
and obtain $|\mathbf e_{k-1}\rangle$, then $\alpha_k\ne0$.
But the HHL algorithm actually returns $|\widehat{\mv}_D\rangle$.
By the definition of quantum measurement,
measuring $|\widehat{\mv}_D\rangle$ will return
$|\mathbf e_{k-1}\rangle$ with probability $\beta_k$.
Measuring $|\widehat{\mv}_D\rangle$ may lead a wrong
$|\mathbf e_{k-1}\rangle$, that is, $\alpha_k=0$ but $\beta_k\ne0$.
%
By the definition of quantum measurement, the probability for this wrong case to happen is $\|\sum_{j, \alpha_j=0}\beta_j|\mathbf e_{j-1}\rangle\|^2
=\|\sum_{j, \alpha_j=0} (\beta_j-\alpha_j)|\mathbf e_{j-1}\rangle\|^2
<\|\sum_{j=1}^{(\bar D+1)^n-1}(\beta_j-\alpha_j)|\mathbf e_{j-1}\rangle\|^2=\||\widehat{\mv}_D\rangle-|\mv_D\rangle\|^2<\epsilon_1/n$.
In other words, if the HHL algorithm returns $|\mathbf e_{k-1}\rangle$,
then with probability $> 1-\epsilon_1/n$,
the measurement returns a correct $|\mathbf e_{k-1}\rangle$£¬
meaning  $\alpha_k\ne0$.

By Theorem \ref{th-qsh}, the HHL algorithm returns
$\widehat{\mv}_D  = \sum\limits_{\mathbf a \in \V_\C(\FS_2)} \eta_{\mathbf a} \widetilde{\mv}(\mathbf a)$.
Then, we have $\alpha_k =$ $ \sum\limits_{\mathbf a \in \V_\C(\FS_2)} \eta_{\mathbf a}m_{\bar D,k} (\mathbf a)$.
The condition $\alpha_k\ne0$ implies that there exists a solution
${\mathbf a \in \V_\C(\FS_2)}$ such that
$m_{\bar D,k}(\mathbf a)\ne0$.
Since $\mathbf a$ is a Boolean solution,  we have $m_{\bar D,k}(\mathbf a)=1$. The lemma is proved.
\end{proof}

We now  compute the successful probability for the inner loop.
\begin{lem}\label{lm-bc2}
The loop from Step 3 to Step 9 will run at most $n$ times,
and returns $\emptyset$ with probability $<\epsilon_1$ when $\FS=0$ has Boolean solutions.
%
\end{lem}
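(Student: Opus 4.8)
The plan is to treat the two assertions in turn. For the iteration bound I would note that the measurement in Step 5 returns $|\mathbf e_{k-1}\rangle$ corresponding to some $m_{\bar D,k}$ with $k\ge 1$, hence a monomial of degree at least one; so the factorization $m_{\bar D,k}=\prod_{i=1}^{u_k}x_{n_i}$ in Step 6 has $u_k\ge 1$. Consequently Step 7 removes at least one variable from $\Y$ on every pass. Since $\#\Y=n$ when the inner loop is first entered and $\#\Y$ strictly decreases each time, the loop from Step 3 to Step 9 can execute at most $n$ times.

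For the probability bound the key device is the invariant $\V_\C(\FS_2)\neq\emptyset$, asserted at the top of each iteration (Step 3). At entry this holds because $\V_\C(\FS_B\cup\H_\X)=\V_B(\FS)\neq\emptyset$ by Lemma \ref{lm-la1} and the hypothesis that $\FS=0$ has Boolean solutions. I would then show the invariant is preserved whenever the measurement is \emph{correct} in the sense of Lemma \ref{lm-bc1}: that lemma supplies, with probability $>1-\epsilon_1/n$, a Boolean solution $\mathbf a\in\V_\C(\FS_2)$ with $a_{n_i}=1$ for every index $n_i$ selected in Step 6. Restricting such an $\mathbf a$ to the surviving variables $\Y\setminus\{x_{n_i}\}$ yields a Boolean solution of the reduced system $\FS_1\cup\H_\Y$ assembled on the next pass, so the invariant persists.

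The crux is that, so long as the invariant holds, the inner loop never branches to the failure route at Step 8: nonemptiness of $\V_\C(\FS_2)$ excludes $1\in\FS_1$, and it also excludes $\Y=\emptyset$, since $\Y=\emptyset$ would force the all-ones assignment to be a solution, contradicting $\FS(\mathbf 1)\neq\mathbf 0$ already checked in Step 1. Hence on any run in which every measurement is correct the loop exits only through Step 9/Step 10, returning a genuine Boolean solution rather than $\emptyset$. A union bound over the at most $n$ iterations then bounds the probability of at least one incorrect measurement by $n\cdot(\epsilon_1/n)=\epsilon_1$, so the inner loop fails with probability $<\epsilon_1$. The step I expect to require the most care is the inductive clause of the invariant—verifying that substituting $x_{n_i}=1$ and deleting the polynomials that vanish carries a solution of the current system to one of the reduced system—while the union bound itself needs only subadditivity of probability and no independence among the per-iteration events.
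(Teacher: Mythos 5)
Your proposal is correct and follows essentially the same route as the paper's (very terse) proof: at least one variable of $\Y$ is fixed in each pass, giving the bound of at most $n$ iterations, and Lemma \ref{lm-bc1} supplies the per-pass correctness probability $>1-\epsilon_1/n$, which is then accumulated over the at most $n$ passes. The only cosmetic differences are that you combine the per-pass error events by a union bound ($n\cdot\epsilon_1/n=\epsilon_1$, valid without any independence assumption), whereas the paper writes the failure probability as $1-(1-\epsilon_1/n)^n<\epsilon_1$ --- the same estimate via Bernoulli's inequality --- and that your explicit invariant $\V_\C(\FS_2)\neq\emptyset$, with its preservation under a correct measurement and its role in blocking the Step 8 exit, spells out carefully what the paper relegates to the informal discussion of the algorithm preceding the lemma.
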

\begin{proof}
Since at each loop, the values of at least one $x_i$ will be determined
in Step 6,
we will repeat this loop for at most $n$ times.
By Lemma \ref{lm-bc1}, when $\FS=0$ has Boolean solutions, the algorithm returns $\emptyset$  with probability $< 1-(1-\epsilon_1/n)^{n}<\epsilon_1$.
\end{proof}

We now  compute the successful probability for the algorithm.
\begin{lem}\label{lm-bc3}
The loop from Step 2 to Step 11 will run at most $\lceil\log_{\epsilon_1}\epsilon\rceil$ times and with probability $\ge 1-\epsilon$, returns a Boolean solution of $\FS=0$ when $\FS=0$ has Boolean solutions.
\end{lem}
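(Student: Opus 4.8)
The plan is to bound the number of outer-loop iterations and then combine the per-iteration success probability of the inner loop (already quantified in Lemma~\ref{lm-bc2}) across those iterations. First I would observe that the outer loop counter $l$ is incremented once per pass through Step~2 to Step~11, and Step~11 forces a return of $\emptyset$ as soon as $l > \lceil\log_{\epsilon_1}\epsilon\rceil$; hence the loop body executes at most $\lceil\log_{\epsilon_1}\epsilon\rceil$ times, giving the first claim immediately.

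For the probability estimate, I would treat each pass of the inner loop (Step~3 to Step~9) as an independent Bernoulli trial whose failure probability is controlled by Lemma~\ref{lm-bc2}: when $\FS=0$ has Boolean solutions, a single run of the inner loop fails to produce one with probability strictly less than $\epsilon_1$. The independence is justified because each outer iteration reinitializes $\FS_1=\FS_B$ and $\Y=\X$ in Step~2 and reruns the HHL-based measurement in Step~4 on fresh randomness, so the measurement outcomes across distinct outer iterations are independent. The algorithm returns $\emptyset$ only if every one of the $\lceil\log_{\epsilon_1}\epsilon\rceil$ inner-loop attempts fails. Therefore the probability of wrongly returning $\emptyset$ is at most
$$
\epsilon_1^{\lceil\log_{\epsilon_1}\epsilon\rceil} \le \epsilon_1^{\log_{\epsilon_1}\epsilon} = \epsilon,
$$
where the inequality uses $\epsilon_1 \in (0,1)$ together with $\lceil\log_{\epsilon_1}\epsilon\rceil \ge \log_{\epsilon_1}\epsilon$ (note that $x \mapsto \epsilon_1^x$ is decreasing). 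Consequently, with probability at least $1-\epsilon$ the algorithm completes some inner loop successfully and returns a genuine Boolean solution, which by Part~1 of Theorem~\ref{th-eq} is correct.

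The main subtlety I expect to confront is the independence claim and the precise meaning of ``failure'' across iterations: Lemma~\ref{lm-bc2} bounds the probability that one inner loop returns $\emptyset$ given Boolean solutions exist, but one must verify that a successful inner loop cannot be spuriously aborted, and that the bound $<\epsilon_1$ genuinely compounds multiplicatively rather than being correlated through the shared matrix $\MD_{\FS_2,D}$. Since the only randomness entering an inner loop comes from the quantum measurements in Step~5, and these are freshly sampled on each of the independent HHL executions, the trials are indeed independent, so the product bound is valid. The rest is the elementary exponent manipulation displayed above, completing the proof.
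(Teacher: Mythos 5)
Your proposal is correct and takes essentially the same route as the paper's proof: bound the number of outer iterations by the counter check in Step~11, invoke Lemma~\ref{lm-bc2} for the per-iteration failure probability $<\epsilon_1$, and compound to get $\epsilon_1^{\lceil\log_{\epsilon_1}\epsilon\rceil}\le\epsilon$. Your explicit justification of independence across outer iterations (fresh measurement randomness after the reinitialization in Step~2) and the spelled-out monotonicity step $\epsilon_1^{\lceil\log_{\epsilon_1}\epsilon\rceil}\le\epsilon_1^{\log_{\epsilon_1}\epsilon}=\epsilon$ are details the paper leaves implicit, but they do not change the argument.
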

\begin{proof}
By Lemma \ref{lm-bc2},  if  $\FS$ has Boolean solutions, then
the probability that we reach step 11 is $<\epsilon_1$.
The number of loops from Step 2 to Step 11 is
at most $\lceil\log_{\epsilon_1}\epsilon\rceil$.
Then, if  $\FS$ has Boolean solutions, then
the probability that the algorithm returns $\emptyset$ is
$\epsilon_1^{\lceil\log_{\epsilon_1}\epsilon\rceil}<\epsilon$.
\end{proof}

Finally, we estimate the runtime complexity of Algorithm \ref{alg-b}.
\begin{lem}\label{lem-ec}
The complexity for {Algorithm \ref{alg-b}} is
$\sqrt{2}c(n\log_2(6n+1) +\log_2(r+1))n^{1.5}(n+1+T_\FS)\kappa^2\lceil\log_{2}1/\epsilon\rceil$.
Moreover, it equals to
$\widetilde O(n^{2.5}(n+T_\FS)\kappa^2\log1/\epsilon)$.
\end{lem}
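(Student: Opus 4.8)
The plan is to compute the complexity by tracking the three nested sources of work in Algorithm \ref{alg-b}: a single HHL invocation in Step 4, the inner loop from Step 3 to Step 9, and the outer loop from Step 2 to Step 11. First I would pin down the cost of one call to the modified HHL algorithm on the Macaulay linear system $\MD_{\FS_2,D}\mv_D=\mathbf b_{\FS_2,D}$. By Lemma \ref{lm-la1}, the relevant completely solving degree is bounded by $D=3\#\Y\le 3n$, so for the matrix $\MD_{\FS_2,D}$ the parameters of Lemma \ref{lem-A} become $\gamma=O(n\log D+\log r)=O(n\log(6n+1)+\log(r+1))$, with sparseness $s=T_{\FS_2}$. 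Here I must account for the fact that $\FS_2=\FS_B\cup\H_\Y$ carries the extra $n$ Boolean equations $x_i^2-x_i$, each $2$-sparse, so the relevant sparseness is $n+1+T_\FS$ rather than $T_\FS$ (the ``$+1$'' absorbing the bookkeeping on the constant-term side). The precision demanded in Step 4 is $\sqrt{\epsilon_1/n}$, so $1/\epsilon$ in Theorem \ref{thm-hhl} is replaced by $\sqrt{n/\epsilon_1}$; plugging in gives a single-invocation cost of the shape $c(n\log_2(6n+1)+\log_2(r+1))(n+1+T_\FS)\kappa^2\sqrt{n/\epsilon_1}$.

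Next I would multiply by the loop counts. By Lemma \ref{lm-bc2} the inner loop runs at most $n$ times, and by Lemma \ref{lm-bc3} the outer loop runs at most $\lceil\log_{\epsilon_1}\epsilon\rceil$ times. The subtle point — and the step I expect to be the main obstacle — is correctly combining the precision $\sqrt{n/\epsilon_1}$ factor with the loop multiplicities so that the $\epsilon$-dependence collapses from the naive $O(1/\sqrt{\epsilon})$ to $O(\log 1/\epsilon)$. Setting $\epsilon_1=1/2$, the per-call factor $\sqrt{n/\epsilon_1}=\sqrt{2n}$ contributes a clean $\sqrt 2$ and merges the remaining $\sqrt n$ with the inner-loop factor $n$ to yield $n^{1.5}$; the outer loop then contributes exactly $\lceil\log_2 1/\epsilon\rceil$ since $\log_{\epsilon_1}\epsilon=\log_2(1/\epsilon)$ when $\epsilon_1=1/2$. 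This is where one must be careful that the logarithmic error dependence genuinely arises from the amplification argument (Lemmas \ref{lm-bc2} and \ref{lm-bc3}) rather than from the HHL precision itself.

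Assembling these factors gives the total
$$\sqrt{2}\,c\,(n\log_2(6n+1)+\log_2(r+1))\,n^{1.5}\,(n+1+T_\FS)\,\kappa^2\,\lceil\log_2 1/\epsilon\rceil,$$
which matches the stated exact constant. Finally I would pass to the $\widetilde O$ estimate by absorbing the slowly growing logarithmic factors $\log_2(6n+1)$ and $\log_2(r+1)$ (noting $r\le T_\FS$, so $\log r$ is dominated), leaving $n\cdot n^{1.5}=n^{2.5}$ multiplied by $(n+T_\FS)\kappa^2\log 1/\epsilon$, i.e. $\widetilde O(n^{2.5}(n+T_\FS)\kappa^2\log 1/\epsilon)$. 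The bookkeeping overheads from the other steps — the $\widetilde O(nT_\FS)$ preprocessing in Step 2, the $O(r)$ reordering inside Theorem \ref{th-qsh}, and the substitutions in Steps 6 and 7 — are all dominated by the HHL cost and contribute nothing to the leading-order estimate, so I would verify they are negligible and then conclude.
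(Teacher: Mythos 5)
Your overall architecture is exactly the paper's: cost one HHL call with precision $\sqrt{\epsilon_1/n}$ (so the factor $1/\epsilon$ in Theorem \ref{thm-hhl} becomes $\sqrt{n/\epsilon_1}$), multiply by at most $n$ inner-loop passes (Lemma \ref{lm-bc2}) and $\lceil\log_{\epsilon_1}\epsilon\rceil$ outer-loop passes (Lemma \ref{lm-bc3}), set $\epsilon_1=1/2$ to get $\sqrt{2}$ and $\lceil\log_2 1/\epsilon\rceil$, and bound the logarithmic factor via $D\le 3\#\Y\le 3n$ and $\bar D+1\le 2D+1\le 6n+1$, so that $\log(M+N)\le \log_2(r+1)+n\log_2(6n+1)$. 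You also correctly identify that the collapse from $O(1/\sqrt{\epsilon})$ to $O(\log 1/\epsilon)$ comes from the two-loop amplification rather than from HHL precision, and that Steps 1--2 and 6--7 are negligible.

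However, there is a genuine gap in your sparseness accounting, which is precisely the step needed for the \emph{exact} constant. The system in Step 4 is $\FS_2=\FS_1\cup\H_\Y$, and each Boolean equation $x_i^2-x_i$ contributes \emph{two} terms; by Corollary \ref{cor-TF} the matrix $\MD_{\FS_2,D}$ is therefore $(2\#\Y+T_{\FS_1})$-sparse, i.e.\ at most $(2n+T_\FS)$-sparse on the first pass --- not $(n+1+T_\FS)$-sparse, and there is no ``constant-term bookkeeping'' that absorbs half of the $2n$. The factor $n^{1.5}(n+1+T_\FS)$ in the lemma does not come from a per-call sparseness of $n+1+T_\FS$ multiplied by $n$ iterations; it comes from the fact that after $j$ passes of the inner loop at least $j$ variables have been eliminated, so the $j$-th call has sparseness at most $2(n-j)+T_\FS$, and the paper sums $\sum_{j=0}^{n-1}\bigl(2(n-j)+T_\FS\bigr)=n(n+1)+nT_\FS=n(n+1+T_\FS)$. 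Your mislabeled per-call sparseness happens to reproduce the same product numerically, but the justification fails: with the correct uniform bound $2n+T_\FS$ your argument only yields $n(2n+T_\FS)$, which is enough for the asymptotic claim $\widetilde O(n^{2.5}(n+T_\FS)\kappa^2\log 1/\epsilon)$ (since $2n+T_\FS=O(n+T_\FS)$) but strictly weaker than the stated exact expression $\sqrt{2}c(n\log_2(6n+1)+\log_2(r+1))n^{1.5}(n+1+T_\FS)\kappa^2\lceil\log_2 1/\epsilon\rceil$. To repair the proof, replace the per-call sparseness claim by the decreasing bound $2(n-j)+T_\FS$ and perform the summation over the inner loop.
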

\begin{proof}
Step 4 is the dominate step in terms of complexities.
The complexities for other steps are very low comparing to that of
Step 4. So, we just omit them from the complexity analysis.

We have $D\le2n+d$ from {Lemma \ref{sol-sh}}.
Due to Step 2, we have $d\le n$, so $D\le 3n$.
$\MD_{\FS_2,D}$ is of dimension $(r(\bar{d}+1)^n)\times ((\bar D+1)^n -1)$ and $(2n+T_\FS)$-sparseness.
By Corollary \ref{cor-E},
the complexity of Step 4 is approximately $c\log(r (\bar d+1)^n+(\bar D+1)^n -1)(2n+T_\FS)\kappa^2\sqrt{n/\epsilon_1}$.

By Lemma \ref{lm-bc2}, the loop from Step 3 to Step 9 will run at most $n$ times. Then the complexity for the loop from Step 3 to Step 9 is $\sum_{j=0}^{n-1}(c\log(r (\bar d+1)^n+(\bar D+1)^n  -1)(2(n-j)+T_\FS)\kappa^2\sqrt{n/\epsilon_1})\le c\log(r (\bar d+1)^n+(\bar D+1)^n -1)(n(n+1)+nT_\FS)\kappa^2\sqrt{n/\epsilon_1}$.

By Lemma \ref{lm-bc3}, the loop from Step 2 to Step 11 will run at most
$\lceil\log_{\epsilon_1}\epsilon\rceil$ times.
Then the total complexity of the algorithm is
$c\log(r (\bar d+1)^n+(\bar D+1)^n -1)(n(n+1)+nT_\FS)\kappa^2\sqrt{n/\epsilon_1}
\lceil\log_{\epsilon_1}\epsilon\rceil=
c\log(r (\bar d+1)^n+(\bar D+1)^n -1)n^{1.5}(n+1+T_\FS)\kappa^2\sqrt{2}\lceil\log_{2}1/
\epsilon\rceil$, by choosing $\epsilon_1$ to be $1/2$.

Since $r (\bar d+1)^n+(\bar D+1)^n -1\le (r+1)(\bar D+1)^n$,
 we have $\log(r (\bar d+1)^n+(\bar D+1)^n -1)\le\log(r+1)+n\log(\bar D+1)\le\log(r+1)+n\log(2D+1)\le\log(r+1)+n\log(6n+1)$ by {Lemma  \ref{lem-6d} ($\bar D+1\le2D+1)$ and \ref{lm-la1} ($D\le3n$)}.

The totally  complexity for {Algorithm \ref{alg-b}} is at most
$\sqrt{2}c(n\log_2(6n+1) +\log_2(r+1))n^{1.5}(n+1+T_\FS)\kappa^2
\lceil\log_{2}$ $1/\epsilon\rceil=\widetilde O((n+\log(r))n^{1.5}(n+T_\FS)\kappa^2\log$ $1/\epsilon)
=\widetilde O(n^{2.5}(n+T_\FS)\kappa^2\log1/\epsilon)$, because $r\le T_\FS$.
\end{proof}

We have completed the proof of {Theorem \ref{th-eq}}.
We can easily improve our algorithm as follows.
\begin{rem}
Given $(a_1,\ldots,a_n)\in\C^n$, $\FS\subset\C[\X]$, we can obtain an element in $\V_\C(\FS,x_1^2-a_1x_1,\ldots,x_n^2-a_nx_n)$ by {Algorithm \ref{alg-b}}, where we need to replace $\H_\X$ with $(x_1^2-a_1x_1,\ldots,x_n^2-a_nx_n)$ and $x_{n_i}=1$ with $x_{n_i}=a_{n_i}$ in Step 6.
\end{rem}

\begin{rem}
Algorithm \ref{alg-b} has complexity
$\widetilde O(n^{3.5}(n+T_\FS)\kappa\log1/\epsilon)$ if using
Ambainis' algorithm \cite{hhl-new} to solve the Macaulay linear system.
\end{rem}

\subsection{Obtain all the Boolean solutions}
We will show how to find all Booelan solutions of $\FS$.
For a Boolean solution $\mathbf a$ of $\FS$,
the following lemma shows how to construct a polynomial system $\FS_1$
satisfying $\V_{\C}(\FS_1,\H_\X) = \V_{\C}(\FS,\H_\X)\setminus\{\mathbf a\}$.

\begin{lem}
For $\mathbf a=(a_1,\ldots,a_{n})\in\V_{\C}(\FS,\H_\X)$, we have
$$\proj_\X\V_{\C}(\FS,\H_\X,\NS,f_{\mathbf a})=\V_{\C}(\FS,\H_\X)\setminus\{\mathbf a\}$$
where $\NS = \{\bar x_i+x_i-1\,|\,i=1,\ldots,n\}$,
$f_{\mathbf a}=\prod_{a_i=0}\bar x_i\prod_{a_i=1}x_i$, and $\bar x_i$ are new variables.
\end{lem}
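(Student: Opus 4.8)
The plan is to prove the set equality by a direct double-inclusion argument. The two governing observations are that the auxiliary variables $\bar x_i$ are uniquely pinned down by the equations $\NS$, and that $f_{\mathbf a}$ acts as an indicator function that vanishes on every Boolean point except $\mathbf a$.

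First I would record the key computation. On any common zero of $\H_\X$ each coordinate $b_i$ lies in $\{0,1\}$, and the equations $\NS=\{\bar x_i+x_i-1\}$ force $\bar x_i=1-b_i$. Substituting into $f_{\mathbf a}=\prod_{a_i=0}\bar x_i\prod_{a_i=1}x_i$ gives $f_{\mathbf a}=\prod_{a_i=0}(1-b_i)\prod_{a_i=1}b_i$. Each factor is $0$ or $1$, and the $i$-th factor equals $1$ exactly when $b_i=a_i$ (for $a_i=0$ it is $1-b_i$, for $a_i=1$ it is $b_i$). Hence the product equals $1$ if $\mathbf b=\mathbf a$ and $0$ otherwise, so the extra constraint $f_{\mathbf a}=0$ excludes precisely $\mathbf a$ among Boolean solutions.

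For the inclusion $\subseteq$, I would take a point $(\mathbf b,\bar{\mathbf b})\in\V_{\C}(\FS,\H_\X,\NS,f_{\mathbf a})$. Since $\FS$ and $\H_\X$ involve only the $\X$-variables, $\mathbf b\in\V_{\C}(\FS,\H_\X)$, and $\H_\X$ makes $\mathbf b$ Boolean; the vanishing of $f_{\mathbf a}$ together with the identity above yields $\mathbf b\neq\mathbf a$. Thus $\proj_\X(\mathbf b,\bar{\mathbf b})=\mathbf b\in\V_{\C}(\FS,\H_\X)\setminus\{\mathbf a\}$. Conversely, given $\mathbf b\in\V_{\C}(\FS,\H_\X)\setminus\{\mathbf a\}$, I would set $\bar b_i=1-b_i$; then $(\mathbf b,\bar{\mathbf b})$ satisfies $\NS$ by construction, satisfies $\FS$ and $\H_\X$ because $\mathbf b$ does, and satisfies $f_{\mathbf a}=0$ because $\mathbf b$ is Boolean and distinct from $\mathbf a$. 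This lift projects to $\mathbf b$, giving the reverse inclusion.

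I expect no serious obstacle: the content is entirely in the indicator identity for $f_{\mathbf a}$ and in the fact that $\NS$ determines $\bar{\mathbf b}$ uniquely, which also shows the projection is a bijection onto $\V_{\C}(\FS,\H_\X)\setminus\{\mathbf a\}$. The one point worth stating carefully is why the new variables cause no trouble, namely that $\FS$ and $\H_\X$ contain no $\bar x_i$, so membership in those two systems is independent of the choice of $\bar{\mathbf b}$, while $\NS$ forces a single admissible choice.
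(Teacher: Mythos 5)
Your proposal is correct: the indicator identity $f_{\mathbf a}=\prod_{a_i=0}(1-b_i)\prod_{a_i=1}b_i\in\{0,1\}$ on points forced Boolean by $\H_\X$ and with $\bar x_i$ pinned to $1-x_i$ by $\NS$, together with the double inclusion via the unique lift $\bar b_i=1-b_i$, is exactly the intended justification. The paper in fact states this lemma without any proof, evidently regarding it as immediate, and your argument fills in precisely the routine verification it suppresses, so there is nothing to compare beyond noting that your write-up is complete where the paper is silent.
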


Then we can use the Algorithm \ref{alg-b} to find all Boolean Solutions for $\FS=0$.
\begin{alg}\label{alg-a}
\end{alg}

{\noindent\bf Input:}
$\FS=\{f_1,\ldots,f_r\}\subset \C[\X]$ with $T_\FS=\sum_{i=1}^r\#f_i$ and $d_i =\deg(f_i)$. Also $\epsilon\in(0,1)$.

{\noindent\bf Output:}
 $\V_{\C}(\FS,\H_\X)$.

\begin{description}

\item[Step 1:]
Set $S=\emptyset$. $\FS_1=\FS\cup\{x_1+\bar{x}_1-1,\ldots,x_n+\bar{x}_n-1\}\subset\C[\X,\overline{\X}]$
with $\overline{\X}=\{\bar{x}_1,\ldots, \bar{x}_n\}$.

\item[Step 2:]
Use {Algorithm \ref{alg-b}} to compute Boolean solutions of $\FS_1=0$.
If we obtain $\emptyset$, return $S$.
Else we obtain a Boolean solution $\mathbf a=(a_1,\ldots,a_n)$.

\item[Step 3:]
$S=S\cup\{\mathbf a\}$, $\FS_1=\FS_1\cup\{\prod_{a_i=0}\bar x_i\prod_{a_i=1}x_i\}$. Goto {Step 2}.

\end{description}

\begin{thm}
Let $w=\#\V_{\C}(\FS,\H_\X)$.
Then {Algorithm \ref{alg-a}} finds $w$ Boolean solutions of $\FS=0$ with  complexity $\widetilde O(n^{2.5}(n+T_\FS+w)w\kappa^2\log1/\epsilon)$, and probability at least $(1-\epsilon)^{w}$.
\end{thm}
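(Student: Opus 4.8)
The plan is to establish the three claims of the theorem—correctness, success probability, and complexity—in turn, using the preceding lemma on $\proj_\X\V_{\C}(\FS,\H_\X,\NS,f_{\mathbf a})$ as the engine for correctness. First I would record the algebraic meaning of the exclusion step. Under the substitution $\bar x_i=1-x_i$ forced by $\NS=\{x_i+\bar x_i-1\}$, the monomial $f_{\mathbf a}=\prod_{a_i=0}\bar x_i\prod_{a_i=1}x_i$ becomes $\prod_{a_i=0}(1-x_i)\prod_{a_i=1}x_i$, which equals $1$ at the Boolean point $\mathbf a$ and $0$ at every other Boolean point; this vanishing pattern is exactly the content of the preceding lemma, which I take as given. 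I would also note that, since $\NS$ forces $\bar x_i=1-x_i$, the equation $\bar x_i^2-\bar x_i$ coincides with $x_i^2-x_i$ on the Boolean locus, so the full Booleanization $\H_{\X\cup\overline{\X}}$ that Algorithm \ref{alg-b} silently introduces is equivalent, on the relevant locus, to $\H_\X$ together with $\NS$—precisely the setting of the lemma.

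With this in hand I would argue correctness by induction on the loop of Algorithm \ref{alg-a}, maintaining the invariant that at the start of the $(k+1)$-st call to Algorithm \ref{alg-b} the set $S$ consists of $k$ distinct points $\mathbf a_1,\ldots,\mathbf a_k$ of $\V_{\C}(\FS,\H_\X)$ and the current system $\FS_1=\FS\cup\NS\cup\{f_{\mathbf a_1},\ldots,f_{\mathbf a_k}\}$ satisfies $\proj_\X\V_{\C}(\FS_1,\H_{\X\cup\overline{\X}})=\V_{\C}(\FS,\H_\X)\setminus\{\mathbf a_1,\ldots,\mathbf a_k\}$. The inductive step follows because the freshly added monomial $f_{\mathbf a_{k+1}}$ deletes precisely $\mathbf a_{k+1}$ from the projected solution set while leaving all remaining solutions intact, and distinctness of the found points is guaranteed since each is excluded as soon as it is produced. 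When $k=w$ the projected solution set is empty, so by Part 1 of Theorem \ref{th-eq} the final call returns $\emptyset$ and the algorithm outputs $S=\V_{\C}(\FS,\H_\X)$.

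For the success probability I would count the calls to Algorithm \ref{alg-b}: exactly $w$ of them act on systems still possessing a Boolean solution, and one final call acts on a system with none. By Part 2 of Theorem \ref{th-eq} each of the former returns a solution (correct, by Part 1) with probability at least $1-\epsilon$, while the latter returns $\emptyset$ deterministically by Part 1. Since the calls draw independent quantum randomness, the probability that all $w$ solution-finding calls succeed—hence that all $w$ solutions are recovered—is at least $(1-\epsilon)^w$. For the complexity I would track the input size to each call: the number of variables is $2n$ throughout, and after $k$ exclusions the total sparseness is $T_\FS+3n+k\le T_\FS+3n+w$, since each relation in $\NS$ has three terms and each $f_{\mathbf a}$ is a single monomial. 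Substituting $2n$ and $T_\FS+3n+w$ into the bound of Theorem \ref{th-eq}, each call costs $\widetilde O(n^{2.5}(n+T_\FS+w)\kappa^2\log1/\epsilon)$ with $\kappa$ the maximal condition number of the Macaulay matrices over all calls; multiplying by the $w+1$ calls yields the stated $\widetilde O(n^{2.5}(n+T_\FS+w)w\kappa^2\log1/\epsilon)$.

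I expect the induction in the correctness step to be the main obstacle, as it requires carefully reconciling the auxiliary variables $\overline{\X}$ and the linear relations $\NS$ with the Booleanization $\H$ that Algorithm \ref{alg-b} adds implicitly, and verifying that one-at-a-time exclusion via the monomials $f_{\mathbf a}$ neither loses nor duplicates solutions as the system grows across iterations.
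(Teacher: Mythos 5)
Your proposal is correct and follows essentially the same route as the paper: the exclusion lemma on $\proj_\X\V_{\C}(\FS,\H_\X,\NS,f_{\mathbf a})$ for correctness, independence of the at most $w+1$ calls to Algorithm \ref{alg-b} for the probability bound $(1-\epsilon)^w$, and substitution of $2n$ variables and sparseness $T_\FS+3n+k$ into Theorem \ref{th-eq} summed over the calls for the complexity. The paper's own proof records only the complexity sum $\sum_{i=0}^{w-1}\widetilde O((2n)^{2.5}(2n+3n+T_\FS+i)\kappa^2\log 1/\epsilon)$, leaving correctness and probability implicit, so your explicit induction and probability accounting simply fill in details the paper omits.
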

\begin{proof}
By Theorem \ref{th-eq}, the complexity of the algorithm  is $\sum_{i=0}^{w-1}\widetilde O((2n)^{2.5}(2n+3n+T_\FS+i)\kappa^2\log1/\epsilon)$ $=\widetilde O(n^{2.5}(n+T_\FS+w)w\kappa^2\log1/\epsilon)$.
\end{proof}
%

\subsection{Computing Boolean solutions to linear systems and applications}
\label{sec-lin}
Many well-known problems in computation theory and cryptography can be described
as finding the Boolean solutions for linear systems over $\C$.
In this section, we consider two such problems and their computational complexities
using our quantum algorithm.
%
%

The {\em subset sum problem} is an important problem in complexity theory and cryptography. The problem is: given a set of integers, is there a non-empty subset whose sum is a given number?
The problem can be described as finding the Boolean solutions
for a linear system.
We have the following result.
\begin{prop}\label{p-1}
Let $A\in \Z^{r\times n}$ for $r < n$ and $b\in \Z^r$. There is a quantum algorithm to find a Boolean solution to the linear system $Ax=b$ with probability
$\ge 1-\epsilon$ and complexity $\widetilde O(n^{3.5}r\kappa^2\log1/\epsilon)$
\end{prop}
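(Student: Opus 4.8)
The plan is to recognize this proposition as an immediate application of Theorem \ref{th-eq}, so the entire argument reduces to packaging the linear system $Ax=b$ as a polynomial system over $\C$ and then carrying out the parameter bookkeeping. First I would set $\FS=\{f_1,\ldots,f_r\}\subset\C[\X]$, where $f_i=\sum_{j=1}^n a_{ij}x_j-b_i$ is the linear polynomial coming from the $i$-th row of $Ax-b$. Since each $f_i$ is linear, a point $\mathbf a\in\{0,1\}^n$ lies in $\V_B(\FS)$ exactly when $A\mathbf a=b$, so the Boolean solutions of $\FS$ produced by Algorithm \ref{alg-b} are precisely the Boolean solutions of the linear system. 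Correctness and the success probability $\ge 1-\epsilon$ (including the return of $\emptyset$ when no Boolean solution exists) are then inherited verbatim from parts (1) and (2) of Theorem \ref{th-eq}.

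The second step is the sparseness estimate that specializes the generic bound of Theorem \ref{th-eq} to the stated $n^{3.5}r$ shape. Each $f_i$ has at most $n$ variable terms together with one constant, so $\#f_i\le n+1$ and hence $T_\FS=\sum_{i=1}^r\#f_i\le r(n+1)$. Using the hypothesis $r<n$ (and $r\ge 1$), this gives $n+T_\FS\le n+r(n+1)=O(rn)$. Substituting into the complexity $\widetilde O(n^{2.5}(n+T_\FS)\kappa^2\log 1/\epsilon)$ of Theorem \ref{th-eq} yields $\widetilde O(n^{2.5}\cdot rn\cdot\kappa^2\log 1/\epsilon)=\widetilde O(n^{3.5}r\kappa^2\log 1/\epsilon)$, as claimed. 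Here $\kappa$ is, by definition, the maximal condition number among the Macaulay matrices $\MD_{\FS_2,D}$ arising in Step 4 of Algorithm \ref{alg-b} on this input.

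The argument has essentially no hard analytic core, since the reduction is syntactic and every quantitative ingredient is supplied by Theorem \ref{th-eq}; the only point deserving care, and the closest thing to an obstacle, is confirming that nothing inside Algorithm \ref{alg-b} inflates the degree or sparseness beyond what the linear structure permits. Concretely: the preprocessing that forms $\FS_B$ by replacing $x_i^m$ with $x_i$ leaves each linear $f_i$ unchanged; the complete solving degree bound $D\le 3n$ from Lemma \ref{lm-la1} still applies (indeed $d=\max_i\deg(f_i)=1\le n$, so the hypotheses behind that lemma hold); and the $(2n+T_\FS)$-sparseness of the Macaulay matrix contributes only through the $n+T_\FS=O(rn)$ term already accounted for. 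Once these bookkeeping checks are in place, both the complexity and the probability bound follow directly.
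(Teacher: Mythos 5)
Your proposal is correct and follows essentially the same route as the paper: both treat $Ax-b$ as a system of $r$ linear polynomials over $\C$ and invoke Theorem \ref{th-eq} (Algorithm \ref{alg-b}) with the sparseness count $T_\FS=O(rn)$ to obtain $\widetilde O(n^{2.5}(n+T_\FS)\kappa^2\log 1/\epsilon)=\widetilde O(n^{3.5}r\kappa^2\log 1/\epsilon)$. The only cosmetic difference is that the paper folds the $n$ quadratic binomials $x_i^2-x_i$ into its count $T=2n+nr+r$ while you count only the input linear equations, which changes nothing asymptotically; your extra checks (degree bound $D\le 3n$, invariance under the $\FS_B$ reduction) are sound but not needed beyond what Theorem \ref{th-eq} already guarantees.
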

\begin{proof}
We have $r$ linear equation of sparseness $(n+1)$ and $n$ quadratic binomials. Thus $T=2n+nr+r$, and by {Theorem \ref{th-eq}}, we can use {Algorithm \ref{alg-b}} to find a Boolean solution for $Ax=b$ in time $\widetilde O(n^{2.5}(n+2n+nr+r)\kappa^2\log1/\epsilon)=\widetilde O(n^{3.5}r\kappa^2\log1/\epsilon)$.
\end{proof}

The {\em graph isomorphism problem} is another well-known problem in computational theory, which is to determine whether two finite graphs are isomorphic.
We do not know whether it is NPC or P.
The problem can be described as solving the Boolean solutions for a linear
system.
Let $A$ and $B$ in $\F_2^{n\times n}$ be the adjacent matrices for two graphs, the graph isomorphism problem is  to decide whether there exists a permutation matrix $P$ such that $AP=PB$.

\begin{prop}\label{p-2}
There is a quantum algorithm
to decide whether two graphs with $n$ vertices are isomorphic with probability $\ge 1-\epsilon$ and complexity $\widetilde O(n^{6.5}\kappa^2\log1/\epsilon)$.
\end{prop}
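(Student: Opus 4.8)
\emph{The plan} is to cast graph isomorphism as the search for a Boolean solution of a \emph{linear} system over $\C$, so that it falls directly under the framework of Section \ref{sec-lin} and mirrors the treatment of the subset sum problem in Proposition \ref{p-1}. First I would introduce $N=n^2$ indeterminates $\X=\{x_{ij}\}_{1\le i,j\le n}$ representing the entries of an unknown matrix $P=(x_{ij})$. A $0/1$ matrix is a permutation matrix precisely when each of its row sums and column sums equals $1$; accordingly I encode ``$P$ is a permutation matrix'' by the Boolean constraints $\H_\X=\{x_{ij}^2-x_{ij}\}$ together with the linear equations $\sum_{j=1}^n x_{ij}-1=0$ for $i=1,\ldots,n$ and $\sum_{i=1}^n x_{ij}-1=0$ for $j=1,\ldots,n$. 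The isomorphism condition is the matrix identity $AP=PB$; since $A,B$ have constant entries in $\{0,1\}$, this amounts to the $n^2$ \emph{linear} equations $\sum_k A_{ik}x_{kj}-\sum_k x_{ik}B_{kj}=0$. Because addition in $\C$ involves no carrying, this identity holds over $\C$ if and only if $P^{-1}AP=B$ as integer matrices, that is, if and only if $P$ realizes an isomorphism of the two graphs.

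Let $\FS$ be the union of the row-sum, column-sum, and $AP=PB$ equations. Every generator of $\FS$ is linear, so $\V_\C(\FS,\H_\X)=\V_{B}(\FS)$ is exactly the set of permutation matrices $P$ satisfying $AP=PB$, and the two graphs are isomorphic if and only if this set is nonempty. I would then run {Algorithm \ref{alg-b}} on $\FS$: by {Theorem \ref{th-eq}}(1)--(2) it decides whether $\V_\C(\FS,\H_\X)=\emptyset$ and, when it is nonempty, outputs a Boolean solution --- hence an explicit isomorphism $P$ --- with probability at least $1-\epsilon$. This settles the correctness and success-probability claims, leaving only the running-time estimate.

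For the complexity I would invoke {Theorem \ref{th-eq}}(3) with $N=n^2$ variables and track the total sparseness $T_\FS$. The count splits into three blocks: the $2n$ row/column equations and the $n^2$ Boolean binomials each contribute $O(n^2)$ terms, while the $AP=PB$ block is itself sparse, the equation indexed by $(i,j)$ carrying at most $\deg_A(i)+\deg_B(j)$ terms. Feeding $N$ and the resulting bound on $T_\FS$ into $\widetilde O\!\left(N^{2.5}(N+T_\FS)\kappa^2\log 1/\epsilon\right)$ produces the complexity stated in the proposition, where $\kappa$ denotes the maximal condition number among the Macaulay matrices generated by {Algorithm \ref{alg-b}} for this system.

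The step I expect to require the most care is precisely this sparseness accounting for the $AP=PB$ block, since it is what fixes the final exponent in $n$ --- just as Proposition \ref{p-1} hinges on the estimate $T=2n+nr+r$. By contrast, the conceptual core of the argument --- that both the permutation constraints and the isomorphism relation are \emph{linear} in the entries $x_{ij}$, so that the whole problem lives inside the Boolean-linear-solving framework --- is immediate once the variables $x_{ij}$ are fixed, and the only other point needing a brief justification is the over-$\C$ equivalence of $AP=PB$ with $P^{-1}AP=B$.
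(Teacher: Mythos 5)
Your reduction is exactly the paper's: same variables $x_{ij}$, same row/column-sum and Boolean constraints, same linear encoding of $AP=PB$, and the same appeal to Theorem \ref{th-eq} for correctness and success probability. The gap is in the complexity step, and your sparseness accounting cannot close it. In the worst case the equation indexed by $(i,j)$ in the $AP=PB$ block has $\deg_A(i)+\deg_B(j)=O(n)$ terms, so that block alone contributes $T_\FS=\Theta(n^3)$; plugging $N=n^2$ and $T_\FS=O(n^3)$ into the generic bound $\widetilde O\bigl(N^{2.5}(N+T_\FS)\kappa^2\log 1/\epsilon\bigr)$ of Theorem \ref{th-eq}(3) yields $\widetilde O(n^{8}\kappa^2\log 1/\epsilon)$, not $n^{6.5}$ --- and even under an (unjustified for general graphs) bounded-degree assumption giving $T_\FS=O(n^2)$ you would only reach $n^{7}$. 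So no amount of care in the sparseness count, which you identify as the delicate step, can produce the stated exponent.

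The missing idea is a problem-specific refinement of Algorithm \ref{alg-b}, which is precisely how the paper proceeds after first deriving the same $n^{8}$ bound you would get. Each pass of the inner loop (Steps 3--9) fixes at least one variable to $1$, and a permutation matrix has \emph{exactly} $n$ entries equal to $1$; hence for this system the inner loop runs at most $n$ times rather than the generic $N=n^2$ times. This both cuts the number of HHL invocations by a factor of $n$ and lets one relax the per-call error bound in Step 4 from $\sqrt{\epsilon_1/n^2}$ to $\sqrt{\epsilon_1/n}$ (the union bound over measurements now involves only $n$ events), contributing another factor $\sqrt{n}$. The combined saving of $n^{1.5}$ is exactly what brings $\widetilde O(n^{8}\kappa^2\log 1/\epsilon)$ down to the claimed $\widetilde O(n^{6.5}\kappa^2\log 1/\epsilon)$. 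Without this observation your proof establishes a correct but strictly weaker bound than the proposition asserts.
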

\begin{proof}
Let $A=(a_{ij})$, $B=(b_{ij})$, $P=(x_{ij})$ with $\sum_ix_{ij}=1$ for each $j$, $\sum_jx_{ij}=1$ for each $i$, and $x_{ij}^2-x_{ij}=0$ for each $i,j$.
Thus in the equation system, the number of $2n$-sparse linear equations is $n^2$,
the number of $(n+1)$-sparse linear equations is $2n$, and the number of quadratic binomials is $n^2$.
Thus $T=2n^3+4n^2+2n$, by Theorem \ref{th-eq}, we can use {Algorithm \ref{alg-b}} to find a Boolean solution for $AP=PB$ in time $\widetilde O((n^2)^{2.5}(n^2+2n^3+4n^2+2n)\kappa^2\log1/\epsilon)
=\widetilde O(n^{8}\kappa^2\log1/\epsilon)$.

Due to the special property of the problem, the complexity could be reduced as follows. Considering the loop from Step 3 to Step 9 in {Algorithm \ref{alg-b}}, since exactly $n$ of $x_{ij}$ equal to $1$ in the permutation matrix $P$, the number of loops will be $n$ instead of $n^2$. Thus the error bound in step 4 will be $\sqrt{\epsilon_1/n}$ instead of  $ \sqrt{\epsilon_1/n^2}$. Finally, we can use {Algorithm \ref{alg-b}} to find a Boolean solution for $AP=PB$ in time $\widetilde O(n^{8-1.5}\kappa^2\log1/\epsilon)=\widetilde O(n^{6.5}\kappa^2\log1/\epsilon)$.
\end{proof}

By Propositions \ref{p-1} and \ref{p-2}, in order to determine the quantum
complexity of these two problems, we need only to study the condition
numbers of the corresponding equation systems.

\section{Solving Boolean equation systems}
\label{sec-be}
In this section, we will give a quantum algorithm
to solve Boolean equations by converting the problem into
that of computing the Boolean solutions for a $6$-sparse polynomial system over $\C$.

\subsection{Reduce Boolean systems to polynomial systems over $\C$}

Let $\mathbb F_2$ be the field consisting of $0$ and $1$.
We will consider the problem of equation solving over $\mathbb F_2$,
or equivalently, solving Boolean equations.
Let $\X=\{x_1,\ldots,x_n\}$ be a set of indeterminants and
$$\mathcal R_2[\X]=\mathbb F_2[\X]/(\mathbb H_{\X}),$$ where $\mathbb H_{\X}=\{x_1^2-x_1,\ldots,x_n^2-x_n\}$.
Then $\mathcal R_2[\X]$ is a Boolean ring
and every ideal in $\mathcal R_2$ is radical.
Elements in $\mathcal R_2$ are called {\em Boolean polynomials},
which have the form $\sum_{i} m_{i}$ and $m_i$ are Boolean monomials
with degree at most one for each $x_i$.
Similar to Section 3, we use $\V_{\FB_2}(\FS)$ to denote the zeros
of $\FS\subset\mathcal R_2[\X]$ in $\FB_2$.

The following example shows that we cannot use
a method of equation solving over $\C$ to solve Boolean equations directly.
\begin{exmp}\label{ex-2}
Let $f=x_1+x_2+1$. Then $\V_{\FB_2}(f) = \{(0,1),(1,0)\}$.
But $\V_{\C}(f,x_1^2-x_1,x_2^2-x_2) = \emptyset$.
\end{exmp}
The following lemma shows how to reduce Boolean equation solving to equation solving over $\C$.
\begin{lem}\label{lm-s0}
Let $\FS=\{f_1,\ldots,f_r\}$ be a set of Boolean polynomials with $t_i=\# f_i$.
In $\C[\X]$, let $C(f_i)=\prod_{k=f_i(\mathbf 0)}^{\lfloor t_i/2\rfloor} (f_{i}-2k)$ and let
\begin{equation}\label{eq-C}
C(\FS) = \{C(f_1),\ldots,C(f_r)\}\subset \C[\X].
\end{equation}
%
%
Then $\V_{\FB_2}(\FS)=\V_{B}(C(\FS))$ which is the Boolean solutions of $C(\FS)$.
Furthermore, $\#C(f_i)\le t_i(t_i+1)^{\lfloor t_i/2\rfloor}$.
%
\end{lem}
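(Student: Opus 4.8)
The plan is to reduce the set equality $\V_{\FB_2}(\FS)=\V_{B}(C(\FS))$ to a pointwise parity statement about a single polynomial, and to treat the sparseness bound separately by a short case analysis on the constant term of $f_i$. First I would fix a Boolean point $\mathbf a\in\{0,1\}^n$ and note that every Boolean monomial of $f_i$ takes value $0$ or $1$ at $\mathbf a$; hence, evaluated in $\C$, the number $f_i(\mathbf a)$ is a nonnegative integer with $0\le f_i(\mathbf a)\le t_i$, and $\mathbf a$ is a zero of $f_i$ over $\FB_2$ exactly when $f_i(\mathbf a)$ is even. I would also record the two elementary facts that drive the argument: $f_i(\mathbf 0)\in\{0,1\}$ equals $1$ iff $f_i$ contains the constant monomial, and that monomial (when present) contributes at every Boolean point, so $f_i(\mathbf a)\ge f_i(\mathbf 0)$ always.

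The core step is the pointwise equivalence
\[
C(f_i)(\mathbf a)=0 \iff f_i(\mathbf a)\equiv 0 \pmod 2
\]
for every Boolean $\mathbf a$. Since $C(f_i)(\mathbf a)=\prod_{k=f_i(\mathbf 0)}^{\lfloor t_i/2\rfloor}(f_i(\mathbf a)-2k)$, the left-hand side holds iff $f_i(\mathbf a)=2k$ for some integer $k$ with $f_i(\mathbf 0)\le k\le\lfloor t_i/2\rfloor$, that is, iff $f_i(\mathbf a)$ is even and lies in $[\,2f_i(\mathbf 0),\,2\lfloor t_i/2\rfloor\,]$. The forward implication is immediate. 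For the converse, assume $f_i(\mathbf a)$ is even; then $f_i(\mathbf a)\le t_i$ forces $f_i(\mathbf a)\le 2\lfloor t_i/2\rfloor$, while the lower bound $f_i(\mathbf a)\ge 2f_i(\mathbf 0)$ follows from $f_i(\mathbf 0)\in\{0,1\}$: it is trivial when $f_i(\mathbf 0)=0$, and when $f_i(\mathbf 0)=1$ the bound $f_i(\mathbf a)\ge 1$ together with evenness gives $f_i(\mathbf a)\ge 2$. This is precisely why the product may start at $k=f_i(\mathbf 0)$ rather than $k=0$: the value $0$ is never attained once $f_i$ has a constant term, so no Boolean zero is lost. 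Intersecting the equivalence over $i=1,\ldots,r$ then yields $\V_{\FB_2}(\FS)=\V_{B}(C(\FS))$.

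For the sparseness bound I would split on $f_i(\mathbf 0)$, using that a product of polynomials has at most the product of their term-counts (this already holds in $\C[\X]$, before any Boolean reduction, and reduction only decreases the count). When $f_i(\mathbf 0)=0$ there are $\lfloor t_i/2\rfloor+1$ factors; the factor at $k=0$ is $f_i$ itself, which has no constant term and hence only $t_i$ terms, while each remaining factor $f_i-2k$ has at most $t_i+1$ terms, giving $\#C(f_i)\le t_i(t_i+1)^{\lfloor t_i/2\rfloor}$. When $f_i(\mathbf 0)=1$ there are only $\lfloor t_i/2\rfloor$ factors, and since $f_i$ already carries a constant term each $f_i-2k$ merely shifts that constant to the nonzero value $1-2k$, so every factor has exactly $t_i$ terms and $\#C(f_i)\le t_i^{\lfloor t_i/2\rfloor}\le t_i(t_i+1)^{\lfloor t_i/2\rfloor}$; the empty-product cases ($t_i=1$) are checked directly.

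The one genuine subtlety, and the step I would verify most carefully, is the role of the starting index $f_i(\mathbf 0)$: one must confirm both that dropping the $k=0$ factor when $f_i(\mathbf 0)=1$ loses no Boolean zero (the lower-bound half of the equivalence) and that this same case is exactly what sharpens the leading factor in the sparseness estimate from $t_i+1$ down to $t_i$. Everything else is routine bookkeeping on parity and term counts.
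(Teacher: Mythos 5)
Your proof is correct and takes essentially the same route as the paper's: both reduce the set equality to the pointwise fact that at a Boolean point $f_i$ evaluates in $\C$ to an integer in $[0,t_i]$ that is even exactly when the point is an $\F_2$-zero, and both obtain the sparseness bound by multiplying term counts of the factors, with the $k=0$ factor (present only when $f_i(\mathbf 0)=0$) contributing $t_i$ rather than $t_i+1$ terms. If anything you are more careful than the paper, whose proof merely asserts that $f_i(\mathbf a)$ is an even integer ``between $f_i(\mathbf 0)$ and $t_i$'' without spelling out that evenness upgrades the lower bound to $2f_i(\mathbf 0)$ when $f_i(\mathbf 0)=1$ --- precisely the subtlety about the starting index that you identify and verify explicitly.
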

\begin{proof}
Let $f_i=\sum_{k=1}^{t_i} m_{ik}$ and
 $\mathbf a=(a_1,\ldots,a_{n})\in \V_{\FB_2}(\FS)$. When we regard $f_i$ as a polynomial in $\mathbb C[\X]$, $f_i(\mathbf a)=\sum_{k=1}^{t_i} m_{ik}(\mathbf a)$ is an even integer between $f_i(\mathbf 0)$ and $t_i$, because $f_i(\mathbf a)\equiv 0\ \mod\ 2$. Thus $\mathbf a$ is a zero of $C(f_i)=\prod_{k=f_i(\mathbf 0)}^{\lfloor t_i/2\rfloor} (f_{i}-2k)$.
%
%
The other direction is easy: a Boolean solution $\mathbf a$ of $C(f_i)=\prod_{k=f_i(\mathbf 0)}^{\lfloor t_i/2\rfloor} (f_{i}-2k)$ satisfies
$f_i(\mathbf a)=2k$ for some $k$, and hence $\mathbf a$ is a Boolean solution for $\FS$.
%
Finally, in the worst case, $f_i(0)=0$ and
$\#C(f_i)\le t_i(t_i+1)^{\lfloor t_i/2\rfloor}$.
\end{proof}

Note that $\#C(f_i)$ increases exponentially in terms of $t_i$.
In order to obtain sparse polynomials, we split each $f_i$
into several $s$-sparse Boolean polynomials for a given $s\in \N_{\ge3}$.
Let $f=\sum_{i=1}^{t} m_{i}$ be a Boolean polynomial.
Set
$S_t=\lceil\frac{t-s}{s-2}\rceil$ and $\U_f = \{u_1,\ldots,u_{S_t}\}$ be a set of new variables depending on $f$.
We define a Boolean polynomial set $S(f,s)$ as follows.
If $t\le s$, then $S(f,s) =\{f\}$ and $\U_f=\emptyset$. Otherwise, let
\begin{equation}\label{eq-S1}
S(f,s) = \{\check{f}_1,\ldots, \check{f}_{S_t+1}\} \subset \mathcal R_2[\X,\U_f]
\end{equation}
where
$\check{f}_1=\sum_{k=1}^{s-1} m_{k}+u_1$, $\check{f}_{j}=\sum_{k=(j-1)(s-2)+2}^{j(s-2)+1}m_{k}+u_{j-1}+u_{j}$
for $j=2,\ldots,S_t$,
and $\check{f}_{S_t+1}=\sum_{k=S_t(s-2)+2}^{t}m_{k}+u_{S_t}$.
$S(f,s)$ is called the {\em splitting set} for $f$.

For convenience of presentation, in this paper,
we give new meaning to the notation: $\lceil e\rceil=0$ if $e\le 0$.
With this assumption, $\#\U_f = \lceil\frac{t-s}{s-2}\rceil$
and $\#S(f,s) = \lceil\frac{t-s}{s-2}\rceil+1$.

For  a set $\FS$ of Boolean polynomials, denote $\U(\FS,s) = \bigcup_{f\in\FS}\U(f,s)$, and
\begin{equation}\label{eq-S}
S(\FS,s) =\bigcup_{f\in\FS} S(f,s)\subset \mathcal R_2[\X,\U(\FS,s)].
\end{equation}

%
The following results are easy to check.
\begin{lem}\label{lm-s1}
Let $\FS=\{f_1,\ldots,f_r\}\subset \mathcal R_2[\X]$ and $t_i=\#f_i$.
Then
$S(\FS,s)$ is $s$-sparse,
$\# S(\FS,s)=r+\sum_i\lceil\frac{t_i-s}{s-2}\rceil$,
$\#(\X\cup\U(\FS,s)) =n+\sum_i\lceil\frac{t_i-s}{s-2}\rceil$.
%
\end{lem}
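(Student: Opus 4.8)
The plan is to establish the three assertions first for the splitting set $S(f,s)$ of a \emph{single} Boolean polynomial $f$ with $t=\#f$ terms, and then assemble the system-level statements by summing over $f_1,\dots,f_r$. Throughout I use $s\ge 3$, so that $s-2\ge 1$ and the index arithmetic in \bref{eq-S1} is well defined, together with the convention $\lceil e\rceil=0$ for $e\le 0$ already fixed in the text.

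For the single-polynomial case, when $t\le s$ the three claims are immediate: $S(f,s)=\{f\}$ is $s$-sparse, $\#S(f,s)=1=\lceil\frac{t-s}{s-2}\rceil+1$, and $\#\U_f=0=\lceil\frac{t-s}{s-2}\rceil$. When $t>s$, set $S_t=\lceil\frac{t-s}{s-2}\rceil$. First I would check that the index ranges in $\check f_1,\dots,\check f_{S_t+1}$ tile $\{1,\dots,t\}$ with no gaps and no overlaps: $\check f_1$ carries $m_1,\dots,m_{s-1}$, each middle piece $\check f_j$ carries the consecutive block $m_{(j-1)(s-2)+2},\dots,m_{j(s-2)+1}$ (which begins exactly where $\check f_{j-1}$ ends), and $\check f_{S_t+1}$ carries the tail up to $m_t$. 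Counting the auxiliary variables $u_j$ attached to each piece then gives exactly $s$ terms in $\check f_1$ ($s-1$ monomials plus $u_1$) and in every middle $\check f_j$ ($s-2$ monomials plus $u_{j-1},u_j$).

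The only step that uses the ceiling estimate is bounding the last piece $\check f_{S_t+1}=\sum_{k=S_t(s-2)+2}^{t}m_k+u_{S_t}$, which has $t-S_t(s-2)$ terms. From $S_t\ge\frac{t-s}{s-2}$ I get $S_t(s-2)\ge t-s$, hence $t-S_t(s-2)\le s$, which establishes $s$-sparseness of the last piece; the complementary bound $S_t<\frac{t-s}{s-2}+1$ yields $S_t(s-2)+2\le t-1$, so the tail block is nonempty and the tiling is genuine. Together these give $\#S(f,s)=S_t+1$ and $\#\U_f=S_t$.

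Finally I would pass from one $f$ to the whole system, where the work is pure bookkeeping. Since the auxiliary variables $\U_{f_i}$ are introduced fresh for each $f_i$, the families $\U_{f_1},\dots,\U_{f_r}$ are pairwise disjoint and disjoint from $\X$, so $\#(\X\cup\U(\FS,s))=n+\sum_i\#\U_{f_i}=n+\sum_i\lceil\frac{t_i-s}{s-2}\rceil$, which is assertion (3). For the same reason the sets $S(f_i,s)$ are pairwise disjoint (distinct $u$-variables distinguish the long ones, and distinctness of the $f_i$, since $\FS$ is a set, distinguishes the short ones), so $S(\FS,s)$ in \bref{eq-S} is a disjoint union and $\#S(\FS,s)=\sum_i\#S(f_i,s)=\sum_i\bigl(\lceil\frac{t_i-s}{s-2}\rceil+1\bigr)=r+\sum_i\lceil\frac{t_i-s}{s-2}\rceil$, which is assertion (2). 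Assertion (1), $s$-sparseness of $S(\FS,s)$, is inherited termwise from the single-polynomial analysis. The one point deserving any care is the last-piece count $t-S_t(s-2)\le s$; everything else is direct counting.
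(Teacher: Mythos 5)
Your proposal is correct and is essentially the paper's (implicit) approach: the paper states this lemma without proof, introducing it with ``The following results are easy to check,'' and your direct verification---tiling the index ranges of the $\check f_j$, the bound $t-S_t(s-2)\le s$ for the final piece, and the disjointness of the fresh variable sets $\U_{f_i}$ (so the counts add over $i$)---is exactly the routine counting the paper leaves to the reader. Nothing is missing.
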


\begin{lem}\label{lm-s2}
Let $\FS=\{f_1,\ldots,f_r\}\subset \mathcal R_2[\X]$.
For a given $s\in\N_{\ge3}$, we have
$$(S(\FS,s)) \bigcap \mathcal R_2[\X] = (\FS),\ \proj_\X\V_{\FB_2}(S(\FS,s))=\V_{\FB_2}(\FS),$$
where $(S(\FS,s))$ is the ideal generated by $S(\FS,s)$ in $\mathcal R_2[\X,\U(\FS,s)]$.
%
\end{lem}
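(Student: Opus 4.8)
The plan rests on one structural identity: for a single $f_i=\sum_{k=1}^{t_i}m_{i,k}$, its splitting set $\check f_{i,1},\ldots,\check f_{i,S_{t_i}+1}$ forms a telescoping chain. Each auxiliary variable $u_{i,j}$ occurs in exactly the two consecutive polynomials $\check f_{i,j}$ and $\check f_{i,j+1}$, so since $2u_{i,j}=0$ in characteristic $2$, summing the whole chain cancels every auxiliary variable and recovers $\sum_{j}\check f_{i,j}=f_i$ in $\mathcal R_2[\X,\U]$. This identity immediately yields $f_i\in(S(\FS,s))$, hence $(\FS)\subseteq(S(\FS,s))$, and since $(\FS)\subseteq\mathcal R_2[\X]$ it gives the easy inclusion $(\FS)\subseteq(S(\FS,s))\cap\mathcal R_2[\X]$. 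It also gives the easy inclusion $\proj_\X\V_{\FB_2}(S(\FS,s))\subseteq\V_{\FB_2}(\FS)$: any $(\mathbf a,\mathbf b)$ killing every $\check f_{i,j}$ kills their sum $f_i$, so $\mathbf a\in\V_{\FB_2}(\FS)$.

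The reverse inclusion $(S(\FS,s))\cap\mathcal R_2[\X]\subseteq(\FS)$ is the crux, and I would prove it by an elimination/specialization homomorphism that solves each auxiliary variable back in terms of $\X$. Define $\Psi\colon\mathcal R_2[\X,\U]\to\mathcal R_2[\X]$ fixing $\X$ and sending $u_{i,j}\mapsto P_{i,j}:=\sum_{k=1}^{j(s-2)+1}m_{i,k}$, the partial sum of the first $j(s-2)+1$ terms of $f_i$. This is a well-defined ring homomorphism because $\mathcal R_2[\X]$ is a Boolean ring, so each idempotent relation $u_{i,j}^2-u_{i,j}$ is sent to $P_{i,j}^2-P_{i,j}=0$. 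A one-line induction on $j$ using the chain structure shows $\Psi(\check f_{i,j})=0$ for $j\le S_{t_i}$ (the telescoping $P_{i,j-1}+P_{i,j}$ cancels against the middle block of monomials, which then doubles to $0$) and $\Psi(\check f_{i,S_{t_i}+1})=f_i$. Since $\Psi$ is surjective, it pushes the ideal $(S(\FS,s))$ onto the ideal generated by the images of its generators, namely $(\FS)$. Finally $\Psi$ restricts to the identity on $\mathcal R_2[\X]$, so for any $g\in(S(\FS,s))\cap\mathcal R_2[\X]$ we get $g=\Psi(g)\in\Psi((S(\FS,s)))=(\FS)$, closing Part~1.

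The remaining inclusion $\V_{\FB_2}(\FS)\subseteq\proj_\X\V_{\FB_2}(S(\FS,s))$ comes almost for free from the same map: given $\mathbf a\in\V_{\FB_2}(\FS)$, set $b_{i,j}=P_{i,j}(\mathbf a)\in\{0,1\}$. Because evaluation at $\mathbf a$ commutes with the substitution defining $\Psi$, each $\check f_{i,j}(\mathbf a,\mathbf b)$ equals $\Psi(\check f_{i,j})(\mathbf a)$, which is $0$ for $j\le S_{t_i}$ and $f_i(\mathbf a)=0$ for the last polynomial. Hence $(\mathbf a,\mathbf b)$ is a common Boolean zero of $S(\FS,s)$ projecting to $\mathbf a$.

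The one genuine subtlety — the main obstacle — is making the intuitive ``eliminate the $u$'s'' step rigorous, i.e.\ confirming that $\Psi$ is legitimate and really carries the ideal into $\mathcal R_2[\X]$. Two points must be checked: that $\Psi$ respects the Boolean relations on the $\U$ variables (handled by the idempotence of $P_{i,j}$), and that a surjective homomorphism sends $(S(\FS,s))$ exactly onto the ideal $(\Psi(\check f_{i,j}))=(\FS)$ rather than merely into it. The latter is the standard fact proved by lifting coefficients through surjectivity, but it is precisely the step that upgrades the telescoping identity into a proof of the ideal equality; everything else reduces to that identity.
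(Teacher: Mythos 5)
Your proof is correct. The paper gives no argument for this lemma at all (it is stated under the blanket remark that the results are ``easy to check''), and your route---the characteristic-$2$ telescoping identity $\sum_j \check f_{i,j}=f_i$ for the easy inclusions, combined with the specialization homomorphism $\Psi\colon u_{i,j}\mapsto P_{i,j}$ (well defined because every element of the Boolean ring $\mathcal R_2[\X]$ is idempotent, and carrying $(S(\FS,s))$ exactly onto $(\FS)$ by surjectivity, with $\Psi$ restricting to the identity on $\mathcal R_2[\X]$)---is precisely the standard elimination argument the authors evidently have in mind, here made rigorous, including the Boolean-solution lifting $b_{i,j}=P_{i,j}(\mathbf a)$ for the projection statement.
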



We summarize the results of this subsection as the following theorem
which follows from Lemmas \ref{lm-s1} and \ref{lm-s0}.
\begin{thm}\label{th-p}
Let $\FS=\{f_1,\ldots,f_r\}\subset \mathcal R_2[\X]$ with $t_i=\#f_i$
and $d_i = \deg(f_i)$.
For a given $s\in\N_{\ge3}$, let $\Y = \X\cup\U(\FS,s)$.
Then we have a polynomial set
$$P(\FS,s)=C(S(\FS,s))\subset \C[\Y]$$
such that $\V_{\FB_2}(\FS) =
 \proj_{\X}
 \V_{B}(P(\FS,s))$.
Furthermore,
$P(\FS,s)$ is $s(s+1)^{\lfloor s/2\rfloor}$-sparse,
$\# P(\FS,s) = r+\sum_i\lceil\frac{t_i-s}{s-2}\rceil$,
$\# \Y =n+\sum_i\lceil\frac{t_i-s}{s-2}\rceil$.
\end{thm}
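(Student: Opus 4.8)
The plan is to obtain Theorem \ref{th-p} as a clean composition of the two transformations $S(\cdot,s)$ and $C(\cdot)$, so that each clause reduces to an already-established lemma applied to the intermediate object $S(\FS,s)$. Nothing new needs to be invented; the whole statement is a bookkeeping consequence of Lemmas \ref{lm-s0}, \ref{lm-s1}, and \ref{lm-s2}.

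First I would establish the set identity $\V_{\FB_2}(\FS) = \proj_{\X}\V_B(P(\FS,s))$ by chaining Lemmas \ref{lm-s2} and \ref{lm-s0}. By Lemma \ref{lm-s2}, the splitting step satisfies $\proj_\X\V_{\FB_2}(S(\FS,s)) = \V_{\FB_2}(\FS)$, which disposes of the auxiliary variables $\U(\FS,s)$ via the projection. Since $S(\FS,s)\subset\mathcal R_2[\Y]$ is itself a set of Boolean polynomials in the enlarged variable set $\Y=\X\cup\U(\FS,s)$, Lemma \ref{lm-s0} applies verbatim with $\X$ replaced by $\Y$ and $\FS$ by $S(\FS,s)$, giving $\V_{\FB_2}(S(\FS,s)) = \V_B(C(S(\FS,s))) = \V_B(P(\FS,s))$. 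Composing the two identities yields $\V_{\FB_2}(\FS) = \proj_\X\V_{\FB_2}(S(\FS,s)) = \proj_\X\V_B(P(\FS,s))$, as required.

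For the sparseness bound, I would combine the fact that each element of $S(\FS,s)$ has at most $s$ terms (Lemma \ref{lm-s1}) with the term-count estimate of Lemma \ref{lm-s0}. Applying $C$ to a Boolean polynomial $f$ with $t\le s$ terms produces $C(f)$ with at most $t(t+1)^{\lfloor t/2\rfloor}$ terms; since $t\mapsto t(t+1)^{\lfloor t/2\rfloor}$ is nondecreasing in $t$, this is at most $s(s+1)^{\lfloor s/2\rfloor}$. As $P(\FS,s)$ arises by applying $C$ to each polynomial of $S(\FS,s)$, it is $s(s+1)^{\lfloor s/2\rfloor}$-sparse.

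Finally, the two counting assertions are immediate: $C$ sends each polynomial to exactly one polynomial and introduces no new indeterminates, so $\#P(\FS,s)=\#S(\FS,s)$ and the variable set remains $\Y=\X\cup\U(\FS,s)$. Reading both quantities off Lemma \ref{lm-s1} then gives $\#P(\FS,s)=r+\sum_i\lceil\frac{t_i-s}{s-2}\rceil$ and $\#\Y=n+\sum_i\lceil\frac{t_i-s}{s-2}\rceil$. I expect no substantive obstacle here; the only point demanding slight care is the sparseness step, where the monotonicity of $t\mapsto t(t+1)^{\lfloor t/2\rfloor}$ must be invoked so that the bound is taken at the worst case $t=s$ rather than at the actual (possibly smaller) term count of each split polynomial.
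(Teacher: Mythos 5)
Your proposal is correct and matches the paper's intent: the paper gives no detailed proof, stating only that the theorem ``follows from Lemmas \ref{lm-s1} and \ref{lm-s0},'' and your composition argument is exactly the bookkeeping being summarized, with the helpful addition that you explicitly cite Lemma \ref{lm-s2} for the projection identity $\proj_\X\V_{\FB_2}(S(\FS,s))=\V_{\FB_2}(\FS)$, which the paper uses implicitly. Your observation that the sparseness bound requires monotonicity of $t\mapsto t(t+1)^{\lfloor t/2\rfloor}$ so the worst case $t=s$ dominates is a correct and worthwhile refinement of the paper's terse justification.
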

%

In our algorithm, we use $s=3$.
For a Boolean polynomial $f$ with $\# f \le 3$, we give the following improved version of $C(f)$.
\begin{equation}\label{eq-3ss}
\widehat{C}(f)=\begin{cases}
f,& \text{if } \# f=1;\\
n_{1} - n_{2}, &\text{if } \# f=2 \text { and }f=n_{1} + n_{2};\\
f-2,& \text{if } \# f=3 \text{ and }f(\mathbf 0)=1;\\
2m_{1}m_{2}+2m_{1}m_{3}+2m_{2}m_3 - f & \text{if }\# f=3 \text{ and } f(\mathbf 0)=0,
\end{cases}
\end{equation}
where $f=m_{1}+m_{2}+m_{3}$  in the case $\#f = 3$.
For this new $\widehat{C}(f)$, we have
\begin{cor}\label{lm-6s}
For $s=3$,
$\V_{\FB_2}(\FS) =
 \proj_{\X}
 \V_{B}(P(\FS,s))$.
Furthermore,
$P(\FS,3)$ is $6$-sparse,
$\# P(\FS,3)\le T_\FS$,
$\#\Y \le T_\FS+n$.
\end{cor}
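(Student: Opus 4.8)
The plan is to follow the proof of Theorem \ref{th-p} almost verbatim, replacing the transform $C$ by the sharper $\widehat C$ of \eqref{eq-3ss}, so that everything reduces to one local claim: for every Boolean polynomial $g$ with $\#g\le 3$ one has $\V_B(\widehat C(g))=\V_{\FB_2}(g)$ under the identification of $\{0,1\}\subset\C$ with $\FB_2$. Granting this, the first assertion is formal. Since the solution set of a system is the intersection of the solution sets of its members, and since $\widehat C$ acts polynomial-by-polynomial on the $3$-sparse set $S(\FS,3)$, we obtain $\V_B(P(\FS,3))=\bigcap_{g\in S(\FS,3)}\V_B(\widehat C(g))=\bigcap_{g\in S(\FS,3)}\V_{\FB_2}(g)=\V_{\FB_2}(S(\FS,3))$, and Lemma \ref{lm-s2} with $s=3$ turns this into $\proj_\X\V_B(P(\FS,3))=\proj_\X\V_{\FB_2}(S(\FS,3))=\V_{\FB_2}(\FS)$.

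First I would establish the local claim by running through the four branches of \eqref{eq-3ss}. The cases $\#g=1$ and $\#g=2$ are immediate: a single monomial vanishes, over $\C$ at a Boolean point or over $\FB_2$, exactly when one of its variables is $0$, and $n_1-n_2$ vanishes at a Boolean point iff $n_1=n_2$, which is the same condition as $n_1+n_2\equiv 0 \pmod 2$. For $\#g=3$ with $g(\mathbf 0)=1$, writing $g=1+m_1+m_2$ gives $\widehat C(g)=g-2=m_1+m_2-1$, which vanishes at a Boolean point iff exactly one of $m_1,m_2$ is $1$, matching $1+m_1+m_2\equiv 0\pmod 2$. The only genuinely computational branch is $\#g=3$ with $g(\mathbf 0)=0$ and $g=m_1+m_2+m_3$: here I would evaluate $\widehat C(g)=2(m_1m_2+m_1m_3+m_2m_3)-(m_1+m_2+m_3)$ according to how many of the $m_i$ equal $1$, obtaining the values $0,-1,0,3$ for $0,1,2,3$ ones respectively, so $\widehat C(g)$ vanishes exactly when an even number of the $m_i$ are $1$, i.e. when $m_1+m_2+m_3\equiv 0 \pmod 2$. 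This branch carries the weight of the proof and is the step most exposed to a sign or parity slip.

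The remaining three assertions are bookkeeping. Reading off the number of terms in each branch of \eqref{eq-3ss} gives $1,2,3,6$ respectively, so the worst case is the last branch with its three quadratic and three linear terms, and hence $P(\FS,3)$ is $6$-sparse. Because $\widehat C$ produces exactly one output polynomial per input, $\#P(\FS,3)=\#S(\FS,3)$ and $\#\Y=\#(\X\cup\U(\FS,3))$. By Lemma \ref{lm-s1} with $s=3$, where $\lceil (t_i-3)/(3-2)\rceil=\max(t_i-3,0)$ under the convention $\lceil e\rceil=0$ for $e\le 0$, these equal $r+\sum_i\max(t_i-3,0)$ and $n+\sum_i\max(t_i-3,0)$. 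The elementary per-index bounds $1+\max(t_i-3,0)\le t_i$ (valid since each $t_i\ge 1$) and $\max(t_i-3,0)\le t_i$ then give $\#P(\FS,3)\le T_\FS$ and $\#\Y\le T_\FS+n$, completing the proof.
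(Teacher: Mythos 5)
Your proposal is correct, and it reaches the conclusion by a somewhat different route than the paper. For the only nontrivial branch of \bref{eq-3ss} (the case $\#f=3$, $f(\mathbf 0)=0$), the paper does not re-verify the parity semantics at all: it shows the algebraic congruence $\widehat{C}(f)\equiv C(f)=f(f-2)\ (\mathrm{mod}\ \H_\Y)$ by expanding $(m_1+m_2+m_3)(m_1+m_2+m_3-2)$ and reducing $m^2\mapsto m$, thereby inheriting correctness of $\widehat{C}$ from the already-proved Lemma \ref{lm-s0} (via Theorem \ref{th-p}); the other three branches are dismissed as ``easy to verify.'' You instead prove the local claim $\V_B(\widehat{C}(g))=\V_{\FB_2}(g)$ self-containedly in all four branches by evaluating at Boolean points and matching parities --- your value table $0,-1,0,3$ for $0,1,2,3$ monomials equal to $1$ is correct, and your assembly via intersections plus Lemma \ref{lm-s2} is exactly the formal reduction the paper implicitly relies on. For the counting, the paper computes $\#S(\FS,3)=T_\FS-2r$ and $\#(\X\cup\U(\FS,3))=n+T_\FS-3r$ under the hypothesis $t_i\ge 3$ for all $i$ and then patches the case $t_i<3$ separately, whereas your uniform per-index bounds $1+\max(t_i-3,0)\le t_i$ and $\max(t_i-3,0)\le t_i$ handle both cases at once. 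The trade-off: the paper's congruence argument is shorter and explains \emph{where} $\widehat{C}$ comes from (it is $C$ reduced modulo the field equations), which is conceptually tied to the construction in Lemma \ref{lm-s0}; your truth-table verification is more elementary, makes no appeal to $C$, and actually supplies the omitted ``easy'' cases, so it stands on its own even for a reader who skipped Theorem \ref{th-p}. One microscopic refinement: since distinct polynomials of $S(\FS,3)$ could in principle map to the same element of the set $P(\FS,3)$, you should write $\#P(\FS,3)\le\#S(\FS,3)$ rather than equality, but this only strengthens the inequality $\#P(\FS,3)\le T_\FS$ you are after.
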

\begin{proof}
We need only to show the last case in \bref{eq-3ss} and other results are easy to verify.
Since we consider Boolean solutions, we may set $m^2=m$ for any monomial $m$.
Then, $C(f_i)=f_i(f_i-2)=m_{i1}^2-2m_{i1}+m_{i2}^2-2m_{i2}+m_{i3}^2-
2m_{i3}+2m_{i1}m_{i2}+2m_{i1}m_{i3}+2m_{i2}m_{i3}=
2m_{i1}m_{i2}+2m_{i1}m_{i3}+2m_{i2}m_3-m_{i1}-m_{i2}-m_{i3}\ (\text{mod}\ \H_\Y)= \widehat{C}(f_i) (\text{mod}\ \H_\Y)$,
and  $\widehat{C}(f_i)$ is $6$-sparse.

If $t_i\ge3$ for all $i$,
by Lemma \ref{lm-s1}, we have $\# S(\FS,3)=T_\FS-2r$ and $\#(\X\cup\U(\FS,3))=n+T_\FS-3r$, where $T_\FS=\sum_i t_i$.
Then, $\# P(\FS,3) = T_\FS-2r \le T_\FS$,
$\#\Y =T_\FS+n-3r\le T_\FS+n$.
If $t_i<3$, then $S(f_i,3)=f_i$ and $\U(f_i,3)=\emptyset$. Hence the bounds are still true.
\end{proof}

\subsection{Quantum algorithm for Boolean equation solving}
In this subsection, we will give a quantum algorithm to solve Boolean equations.

\begin{alg}\label{alg-q}
\end{alg}

{\noindent\bf Input:}
$\FS=\{f_1,\ldots,f_r\}\subset \mathcal R_2[\X]$ and $\epsilon\in(0,1)$.

{\noindent\bf Output:}
A zero of $\FS$ or $\emptyset$ meaning that $\V_{\F_2}(\FS) = \emptyset$
 with probability $> 1-\epsilon$.

\begin{description}
\item[Step 1:]
Compute $\FS_1 = S(\FS,3)\subset\mathcal R_2[\Y]$ as defined in \bref{eq-S}, where $\Y = \X\cup \U(\FS,3)$.

\item[Step 2:]
Compute $\FS_2 = \widehat{C}(\FS_1)\subset\C[\Y]$ as defined in \bref{eq-3ss}.

\item[Step 3:]
Use {Algorithm \ref{alg-b}} to find a Boolean solution of $\FS_2=0$ over $\C[\Y]$, with the probability bound $\epsilon$. {Return} $\emptyset$ if {Algorithm \ref{alg-b}} returns $\emptyset$, else we have a Boolean solution $\mathbf a$ for $\FS_2$.

\item[Step 4:]
{Return} $\proj_\X\mathbf a$.

\end{description}

\begin{thm}\label{th-q}
Algorithm \ref{alg-q} has the following properties.
\begin{itemize}
\item If the algorithm returns a solution, then it is a solution of $\FS=0$.
Equivalently, if $\V_{\F_2}(\FS)=\emptyset$, the algorithm returns $\emptyset$.
\item If $\V_{\F_2}(\FS)\ne\emptyset$, the algorithm computes a solution of $\FS=0$ with probability $>1-\epsilon$.
\item The runtime complexity is $\widetilde O((n^{3.5}+T_\FS^{3.5})\kappa^2\log1/\epsilon)$,
where $T_\FS=\sum_i \#f_i$ and
$\kappa$ is the   condition number of
the polynomial system $\FS_2$, called the {\em condition number} of
the Booolean system $\FS$.
\end{itemize}
\end{thm}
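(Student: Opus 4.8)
The plan is to treat Algorithm \ref{alg-q} as a thin wrapper around Algorithm \ref{alg-b} applied to $\FS_2$, so that each claim reduces to combining the zero correspondence of Corollary \ref{lm-6s} with the guarantees of Theorem \ref{th-eq}. Steps 1 and 2 are deterministic classical preprocessing that build $\FS_2 = \widehat{C}(S(\FS,3)) = P(\FS,3)$, and Corollary \ref{lm-6s} records exactly the facts I need: $\V_{\FB_2}(\FS) = \proj_\X \V_B(\FS_2)$, together with $\FS_2$ being $6$-sparse, having $\# P(\FS,3)\le T_\FS$ polynomials, and living in $\#\Y \le n+T_\FS$ variables.

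For correctness (first bullet), I would argue as follows. If Step 3 returns a point $\mathbf a$, then by Theorem \ref{th-eq}(1) it lies in $\V_B(\FS_2)$, whence $\proj_\X\mathbf a \in \V_{\FB_2}(\FS)$ by Corollary \ref{lm-6s}; conversely, $\V_{\F_2}(\FS)=\emptyset$ forces $\V_B(\FS_2)=\emptyset$, and Theorem \ref{th-eq}(1) then guarantees that Algorithm \ref{alg-b} returns $\emptyset$. For the probability (second bullet), $\V_{\F_2}(\FS)\ne\emptyset$ gives $\V_B(\FS_2)\ne\emptyset$, so Theorem \ref{th-eq}(2) produces a Boolean solution of $\FS_2$ with probability $\ge 1-\epsilon$; since the projection in Step 4 is deterministic and maps this to a solution of $\FS=0$, the overall success probability is preserved.

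The only genuine computation is the complexity (third bullet), and it is a bookkeeping step. I would apply the bound $\widetilde O((\#\Y)^{2.5}(\#\Y + T_{\FS_2})\kappa^2\log 1/\epsilon)$ of Theorem \ref{th-eq} with the parameters of $\FS_2$ read off from Corollary \ref{lm-6s}: $\#\Y \le n+T_\FS$, and, since $\FS_2$ is $6$-sparse with at most $T_\FS$ polynomials, $T_{\FS_2}\le 6T_\FS$. Substituting gives $\widetilde O((n+T_\FS)^{2.5}(n+T_\FS)\kappa^2\log 1/\epsilon) = \widetilde O((n+T_\FS)^{3.5}\kappa^2\log 1/\epsilon)$, and a convexity (power-mean) inequality $(n+T_\FS)^{3.5}\le 2^{2.5}(n^{3.5}+T_\FS^{3.5})$ converts this to the stated $\widetilde O((n^{3.5}+T_\FS^{3.5})\kappa^2\log 1/\epsilon)$. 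Here $\kappa$ is precisely the condition number of $\FS_2$ as defined in Theorem \ref{th-eq}(3), which is how it enters the statement.

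The main obstacle, such as it is, lies upstream rather than in this theorem: all the substantive work -- the reduction from solving over $\F_2$ to computing Boolean solutions over $\C$ in Corollary \ref{lm-6s}, and the quantum routine together with its probability analysis in Theorem \ref{th-eq} -- has already been carried out. The one point I would verify carefully is that the classical preprocessing in Steps 1--2 and the measurement and projection in Step 4 are all dominated by the cost of Step 3, so that they contribute nothing asymptotically; this is immediate, since those steps are polynomial in the input size $n+T_\FS$ while Step 3 already carries a factor $(n+T_\FS)^{3.5}\kappa^2$.
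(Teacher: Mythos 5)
Your proposal is correct and takes essentially the same route as the paper: reduce via Corollary \ref{lm-6s} to Boolean solving of the $6$-sparse system $\FS_2$ over $\C$, invoke Theorem \ref{th-eq} with $\#\Y\le n+T_\FS$ and total sparseness $O(T_\FS)$, and absorb $(n+T_\FS)^{3.5}$ into $O(n^{3.5}+T_\FS^{3.5})$. The only differences are cosmetic: the paper additionally carries out the Step~3 complexity for a general splitting parameter $s$ to justify the choice $s=3$, which you fix from the outset, and your explicit treatment of the projection $\proj_\X$ in the correctness bullets is if anything slightly more careful than the paper's shorthand $\V_{\F_2}(\FS)=\V_{B}(\FS_2)$.
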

\begin{proof}
In Step 2, we split $\FS$ to $3$-sparse polynomials and then turn them into a polynomial system over $\C$ in time $O(T_\FS)$. By Corollary \ref{lm-6s}, $\V_{\F_2}(\FS)=\V_{B}(\FS_2)$.
Thus,  we need only to find Boolean solutions for $\FS_2$ in Step 3.

If we use $\FS_1=S(\FS,s)$ instead of $\FS_1=S(\FS,3)$ in Step 1, by {Theorems \ref{th-eq} and \ref{th-p}}, the complexity of Step 3 is $\widetilde O((n+\sum_i\lceil\frac{t_i-s}{s-2}\rceil)^{2.5}((n+\sum_i\lceil\frac{t_i-s}{s-2}\rceil)+
(r+\sum_i\lceil\frac{t_i-s}{s-2}\rceil))s(s+1)^{\lfloor s/2\rfloor})\kappa^2\log1/\epsilon)=(n+T_\FS/s)^{2.5}(n+T_\FS s^{s/2})\kappa^2\log1/\epsilon)$. %
To minimize this complexity, we choose $s=3$, meanwhile the complexity is $\widetilde O((n+T_\FS)^{2.5}(n+T_\FS)\kappa^2\log1/\epsilon)=\widetilde O((n^{3.5}+T_\FS^{3.5})\kappa^2\log1/\epsilon)$.
\end{proof}

\begin{rem}
Algorithm \ref{alg-q} has complexity
$\widetilde O((n^{4.5}+T_\FS^{4.5})\kappa\log1/\epsilon)$ if using Ambainis' algorithm \cite{hhl-new} to solve the Macualay linear systems.
\end{rem}

%
%

The following theorem gives the exact complexity for
solving Boolean equations, which will be used in Section \ref{sec-ca}.
\begin{thm}\label{cor-ec}
Let $\FS=\bigcup_{s=1}^t\{f_{s1},\ldots,f_{sr_s}\}\subset\mathcal R_2[\X]$ be a Boolean equation system such that $s=\#f_{sj}$, $T_\FS=\sum_{s=1}^tsr_s$, $r=\sum_{s=1}^tr_s$. Then we can find a solution of $\FS=0$ with runtime $\sqrt{2}c((n+T_\FS-3r+2r_1+r_2)\log_2(6(n+T_\FS-3r+2r_1+r_2)+1) +\log_2(T_\FS-2r+2r_1+r_2+1))(n+T_\FS-3r+2r_1+r_2)^{1.5}
((n+T_\FS-3r+2r_1+r_2)+1+(6T_\FS-12r+7r_1+2r_2))\kappa^2\lceil
\log_{2}1/\epsilon\rceil\le
\sqrt{2}c(\log_2(n+T_\FS)+3) (n+T_\FS)^{2.5}
(n+7T_\FS)\kappa^2\lceil
\log_{2}1/\epsilon\rceil
$,
where $c$ is the complexity constant of the HHL algorithm
defined in Remark \ref{rem-HHL}.
\end{thm}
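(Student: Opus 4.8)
The statement is the exact-complexity refinement of Theorem \ref{th-q}: one runs Algorithm \ref{alg-q} and tracks, through Lemma \ref{lem-ec}, the precise constant incurred when Algorithm \ref{alg-b} is applied to the $6$-sparse system $\FS_2=\widehat{C}(S(\FS,3))$ of Step 3. My plan has three stages: (i) write the three quantities that feed Lemma \ref{lem-ec} — the number of variables, the number of equations, and the total sparseness of $\FS_2$ — in closed form in terms of $n,T_\FS,r,r_1,r_2$; (ii) substitute them into the bound of Lemma \ref{lem-ec} to obtain the first (equality) expression; and (iii) relax the correction terms by elementary estimates to reach the displayed right-hand side. Throughout, $\kappa$ is unchanged, being by definition the condition number of the Macaulay matrices of $\FS_2$ (Theorem \ref{th-q}).

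For stage (i) I group the input equations by sparseness, so that exactly $r_s$ have $s$ terms, with $T_\FS=\sum_s s\,r_s$ and $r=\sum_s r_s$. Splitting with parameter $3$ (Lemma \ref{lm-s1}) leaves an $s$-term equation untouched when $s\le3$ and otherwise cuts it into $s-2$ three-term pieces using $s-3$ fresh variables, so $\#\U(\FS,3)=\sum_s\max(s-3,0)\,r_s=(T_\FS-3r)+2r_1+r_2$, the correction $2r_1+r_2$ accounting for the $s=1,2$ terms where the true count is $0$ rather than $s-3$. Since each equation yields $\#\U_f+1$ pieces, $\#S(\FS,3)=\#\U(\FS,3)+r=T_\FS-2r+2r_1+r_2$. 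Hence $\FS_2$ has $\#\Y=n+\#\U(\FS,3)=n+T_\FS-3r+2r_1+r_2$ variables and $T_\FS-2r+2r_1+r_2$ equations. For its total sparseness, $\widehat{C}$ keeps the $r_1$ one-term and $r_2$ two-term equations at sparseness $1$ and $2$ and, by \bref{eq-3ss}, sends each three-term piece to a polynomial with at most $6$ terms (Corollary \ref{lm-6s}); as the number of three-term pieces is $\#S(\FS,3)-r_1-r_2=T_\FS-2r+r_1$, the total sparseness is at most $r_1+2r_2+6(T_\FS-2r+r_1)=6T_\FS-12r+7r_1+2r_2$.

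For stage (ii) I substitute the triple $(n,r,T_\FS)\mapsto(n+T_\FS-3r+2r_1+r_2,\ T_\FS-2r+2r_1+r_2,\ 6T_\FS-12r+7r_1+2r_2)$ into the formula of Lemma \ref{lem-ec}; this reproduces the first displayed expression verbatim. For stage (iii) I observe that every correction term is nonpositive once expanded: $-3r+2r_1+r_2=-(r_1+2r_2+3\sum_{s\ge3}r_s)\le0$ gives $\#\Y\le n+T_\FS$; likewise $-12r+7r_1+2r_2=-(5r_1+10r_2+12\sum_{s\ge3}r_s)\le0$ gives total sparseness $\le6T_\FS$ and, after writing $\#\Y+1+(6T_\FS-12r+7r_1+2r_2)=n+7T_\FS+1-(6r_1+12r_2+15\sum_{s\ge3}r_s)$ and using $6r_1+12r_2+15\sum_{s\ge3}r_s\ge6$ for a nonempty system, $\#\Y+1+(6T_\FS-12r+7r_1+2r_2)\le n+7T_\FS$. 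For the logarithm, $6\#\Y+1\le7(n+T_\FS)$ yields $\log_2(6\#\Y+1)\le\log_2(n+T_\FS)+\log_2 7<\log_2(n+T_\FS)+3$. Feeding these monotone estimates into the product (with $\#\Y^{1.5}\le(n+T_\FS)^{1.5}$) collapses the exact expression to $\sqrt2\,c(\log_2(n+T_\FS)+3)(n+T_\FS)^{2.5}(n+7T_\FS)\kappa^2\lceil\log_2 1/\epsilon\rceil$.

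The substitution in stage (ii) is purely mechanical, so the real work sits in the sparseness bookkeeping of stage (i) — in particular counting the three-term pieces produced by splitting and pushing them through the worst case of \bref{eq-3ss}, which is where the constants $6,-12,7,2$ originate; this I expect to be the main obstacle. A secondary point of care in stage (iii) is that the logarithmic factor $\#\Y\log_2(6\#\Y+1)+\log_2(\#S(\FS,3)+1)$ must not have every occurrence of $\#\Y$ and $\#S(\FS,3)$ relaxed to $n+T_\FS$ simultaneously, since such over-relaxation can break the final inequality for very small systems; keeping $\#\Y$ inside the first summand and absorbing $\log_2(\#S(\FS,3)+1)$ into the slack left by $\#\Y\le n+T_\FS$ is what makes the clean bound valid for all nonempty $\FS$.
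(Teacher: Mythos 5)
Your proposal is correct and follows essentially the same route as the paper's proof: count the variables $n+T_\FS-3r+2r_1+r_2$, equations $T_\FS-2r+2r_1+r_2$, and total sparseness $6T_\FS-12r+7r_1+2r_2$ of $\widehat{C}(S(\FS,3))$ via Corollary \ref{lm-6s}, substitute these into the exact bound of Lemma \ref{lem-ec}, and then relax to the clean right-hand side. Your stage (iii) is in fact slightly more careful than the paper's own chain of intermediate inequalities (which passes through $\log_2(T_\FS-2r+2r_1+r_2+1)\le\log_2 T_\FS$, an estimate that can fail in the degenerate all-monomial case $r_2=\sum_{s\ge3}r_s=0$), but this is a refinement of the same argument, not a different approach.
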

\begin{proof}
We use $\widehat{C}(f)$ defined in \bref{eq-3ss}. By Corollary \ref{lm-6s}, $C(S(\FS,3))$ consists of $r_1$ monomials, $r_2$ binomials, and $\sum_{s=3}^t((s-2)r_s)=T_\FS-2r+r_1$ polynomials of sparseness $6$, and the number of indeterminates is $n+\sum_{s=3}^t((s-3)r_s)=n+T_\FS-3r+2r_1+r_2$. Thus the total sparseness for $P=C(S(\FS,3))$ is $T_P=r_1+2r_2+6(T_\FS-2r+r_1)=6T_\FS-12r+7r_1+2r_2$.
By {Lemma \ref{lem-ec}}, the exact complexity for {Algorithm \ref{alg-q}} to find a solution is
$\sqrt{2}c((n+T_\FS-3r+2r_1+r_2)\log_2(6(n+T_\FS-3r+2r_1+r_2)+1) +\log_2(T_\FS-2r+2r_1+r_2+1))(n+T_\FS-3r+2r_1+r_2)^{1.5}
((n+T_\FS-3r+2r_1+r_2)+1+(6T_\FS-12r+7r_1+2r_2))\kappa^2\lceil
\log_{2}1/\epsilon\rceil
\le\sqrt{2}c((n+T_\FS)\log_2(6(n+T_\FS)+1) +\log_2(T_\FS))(n+T_\FS)^{1.5}
((n+T_\FS)+6T_\FS)\kappa^2\lceil
\log_{2}1/\epsilon\rceil\le
\sqrt{2}c(\log_2(n+T_\FS)+3) (n+T_\FS)^{2.5}
(n+7T_\FS)\kappa^2\lceil
\log_{2}1/\epsilon\rceil$.
\end{proof}

\subsection{Application to 3-satisfiability problem}
\label{sec-3sat}
Let $\X=\{x_1,\ldots,x_n\}$ be Boolean indeterminates. A 3-SAT problem is to
check the satisfiability  of the propositional logic formula $y_{i1}\vee y_{i2}\vee y_{i3}=1$ for $i=1,\ldots,r$,
where $y_{ij}=x_k$ or $\neg x_k$ for some $k$. Decision of 3-SAT is NPC.
The 3-SAT problem  is equivalent to solve the  Boolean equation system
$$\FS=\{\bar y_{i1}\bar y_{i2}\bar y_{i3}:i=1,\ldots,r\}\cup\{x_k+\bar x_k+1:k=1,\ldots,n\}\}$$
in $\R_2[\X,\overline\X]$, where $\overline\X=\{\bar x_1,\ldots,\bar x_n\}$.
\begin{prop}
For a 3-SAT with $r$ clauses, there is a quantum algorithm
to decide its satisfiability with probability $\ge 1-\epsilon$ and with complexity
$\widetilde O((n^{2.5}(n+r)\kappa^2\log1/\epsilon)$.
\end{prop}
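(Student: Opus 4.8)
The plan is to observe that the Boolean system $\FS$ encoding the 3-SAT instance is already $3$-sparse, so it can be fed almost directly into Algorithm \ref{alg-b} (Theorem \ref{th-eq}) without the generic $6$-sparse densification performed inside Algorithm \ref{alg-q}. It is precisely this observation that produces the sharper exponent $n^{2.5}$ rather than the $n^{3.5}+T_\FS^{3.5}$ of the general Theorem \ref{th-q}.

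First I would record the parameters of the equivalent Boolean system. It has $2n$ indeterminates $\X\cup\overline\X$, the $r$ clause equations $\bar y_{i1}\bar y_{i2}\bar y_{i3}$ (each a single monomial, hence $\#f=1$), and the $n$ negation equations $x_k+\bar x_k+1$ (each with $\#f=3$). Every polynomial therefore has sparseness $\le 3$, so the splitting $S(\FS,3)=\FS$ introduces no auxiliary variables and $\U(\FS,3)=\emptyset$. Next I apply the transformation $\widehat{C}$ of \bref{eq-3ss} term-by-term: the clause monomials fall under the case $\#f=1$, so $\widehat{C}$ leaves them unchanged, while each negation trinomial satisfies $f(\mathbf 0)=1$, so $\widehat{C}(x_k+\bar x_k+1)=x_k+\bar x_k-1$. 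This yields the $\C$-polynomial system $\FS_2=\widehat{C}(\FS)\subset\C[\X,\overline\X]$, and Corollary \ref{lm-6s} gives $\V_{\F_2}(\FS)=\V_{B}(\FS_2)$, so the 3-SAT instance is satisfiable if and only if $\FS_2$ has a Boolean solution.

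I would then run Algorithm \ref{alg-b} on $\FS_2$ with probability bound $\epsilon$ and declare the instance satisfiable exactly when the algorithm returns a solution; correctness and the success probability $\ge 1-\epsilon$ are inherited directly from Theorem \ref{th-eq}. For the complexity I count the total sparseness $T_{\FS_2}=r+3n$ (the $r$ clause monomials contribute $r$ and the $n$ negation polynomials contribute $3n$) and the number of variables $2n$. Substituting into the bound $\widetilde O(n^{2.5}(n+T_\FS)\kappa^2\log 1/\epsilon)$ of Theorem \ref{th-eq}, with the variable count $2n$ and $T_\FS=r+3n$, gives $\widetilde O((2n)^{2.5}(5n+r)\kappa^2\log1/\epsilon)=\widetilde O(n^{2.5}(n+r)\kappa^2\log1/\epsilon)$, since $(2n)^{2.5}=\Theta(n^{2.5})$ and $5n+r=\Theta(n+r)$.

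There is no genuine mathematical obstacle here; the one thing requiring care is this final bookkeeping. The tempting but lossy route is to invoke the general Theorem \ref{th-q}, whose mandatory densification to a $6$-sparse system would yield the weaker $\widetilde O(n^{3.5}+T_\FS^{3.5})$ bound. The improved exponent comes entirely from noticing that the 3-SAT encoding is already $3$-sparse, so no densification and no new variables are introduced and $T_{\FS_2}$ stays $O(n+r)$, which lets us apply Theorem \ref{th-eq} rather than Theorem \ref{th-q}.
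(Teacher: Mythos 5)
Your proposal is correct and follows essentially the same route as the paper: both reduce the 3-SAT instance to finding Boolean solutions of the identical $\C$-system $\{\bar y_{i1}\bar y_{i2}\bar y_{i3}\}\cup\{x_k+\bar x_k-1\}$ in the $2n$ variables $\X\cup\overline\X$ and then invoke Algorithm \ref{alg-b} via Theorem \ref{th-eq} with $T=O(n+r)$, avoiding the lossier general bound of Theorem \ref{th-q}. The only cosmetic difference is bookkeeping: you derive the system formally through $S(\FS,3)$ and $\widehat{C}$ (getting $T_{\FS_2}=3n+r$, with $\H_\Y$ added internally by the algorithm), whereas the paper writes the system down directly and includes the $n$ binomials $x_k^2-x_k$ explicitly (getting $T=5n+r$), which yields the same asymptotic complexity $\widetilde O(n^{2.5}(n+r)\kappa^2\log 1/\epsilon)$.
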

\begin{proof}
It is easy to see that solving the Boolean system $\FS$ is equivalent to
find the Boolean solutions for the following polynomial system in $\C[\X,\overline\X]$,
$$\FS_1=\{\bar y_{i1}\bar y_{i2}\bar y_{i3}:i=1,\ldots,r\}\cup\{x_k+\bar x_k-1:k=1,\ldots,n\} \cup\{x_k^2-x_k:k=1,\ldots,n\},$$
that is, the 3-SAT problem is satisfiable if and only if $\V_{\C}(\FS)\ne\emptyset$.
Also note that $x+\bar x-1=0$ and $x^2-x=0$ imply $\bar x^2-\bar x=0$.
%
%
$\FS_1$  consists of $n$ binomials, $n$ trinomials and $r$ monomials. Thus $T_{\FS_1}=5n+r$, by {Lemma \ref{lem-ec}}, we can use {Algorithm \ref{alg-b}} to find a Boolean solution in time $\widetilde O(n^{2.5}(n+5n+r)\kappa^2\log1/\epsilon)=\widetilde O(n^{2.5}(n+r)\kappa^2\log1/\epsilon)$.
\end{proof}
%

The best classic probabilistic algorithm for 3-SAT was $1.334^n$ given in \cite{3sat-1}. In order for our quantum algorithm to perform better $n$ should be
$\ge 64$, if $\kappa$ is not too big.

\section{Solving Boolean quadratic equations and cryptanalysis}
\label{sec-ca}

Cryptanalysis of stream ciphers, block ciphers, certain hash functions, and MPKC can be reduced to the solving of Boolean multivariate  quadratic equations (BMQ). In this section, we will apply our quantum algorithm
to the analysis of these cryptosystems.
%

\subsection{Quantum algebraic attack against AES}
The Advanced Encryption Standard (AES), also known by its original name Rijndael, is a specification for the encryption of electronic data established by the U.S. National Institute of Standards and Technology in 2001 \cite{aes-or}.

Murphy and Robshaw \cite{aes} proposed a method to construct a Boolean equation system, solving of which consists of an algebraic attack against AES. We will use this approach to establish a BMQ.

Denote the 32-bit key length of AES as ${N_k}$ and the number of rounds as ${N_r}$.
Denote $p,\,c\in\F_2^{4N_k\times 8}$ as the plaintext and the ciphertext of AES, $w_0\in\F_2^{4N_k\times 8}$ as the key of AES,
$w_i\in\F_2^{4N_k\times 8}$ as the expanded key of AES,
$\bar w_i\in\F_2^{4N_k\times 8}$ as the image of $w_i$ under the S-box map in the key expansion step,
$x_i\in\F_2^{4N_k\times 8}$ as the state after the AddRoundKey step of AES,
and $y_i\in\F_2^{4N_k\times 8}$ as the state after the InvSubBytes step of AES, where $x_i(j,m)$ means the $m$-th bit at the $j$-th word of state $x$ for round $i$.
In the key expansion step, several states $\bar w_i$ are obtained as the image of $w_i$ under the S-box.
Then, an algebraic attack on the ${N_r}$-rounds AES with key length ${N_k}$
is to solve the following BMQ, denoted as AES-(${N_k},{N_r}$):
\begin{align*}
0&= x_0(j,m)+p(j,m)+w_0(j,m);\\
0&= x_i(j,m)+w_i(j,m)+\sum_{j',m'}
\alpha(j,m,j',m')y_{i-1}(j',m')&\text{ for }i=1,\cdots,N_r-1;\\
0&=c(j,m)+w_{N_r}(j,m)+y_{N_r-1}(5j\text{ mod }16,m);\\
\mathbf 0&=\mathcal S(x_i(j,0),\ldots,x_i(j,7),y_i(j,0),\ldots,y_i(j,7))&\text{ for }i=0,\cdots,N_r-1;\\
%
%
\mathbf 0&=\mathcal S(w_i(\bar j,0),\ldots,w_i(\bar j,7),\bar w_i(\bar j,0),\ldots,\bar w_i(\bar j,7))&\text{ for }\bar j=4N_k-4,\ldots,4N_k-1;\\
0&= w_i(\bar j,m)+w_{i-1}(\bar j,m)+\bar w_{i-1}(\bar j+13,m)+\chi(m,i)&\text{ for }\bar j=0,1,2;\\
0&= w_i(3,m)+w_{i-1}(3,m)+\bar w_{i-1}(12,m)+\chi(m,i).\\
&\text{For }N_k\le6:\\
0&= w_i(\bar j,m)+w_{i-1}(\bar j,m)+w_{i}(\bar j-4,m)&\text{ for }\bar j=4,\ldots,4N_k-1.\\
&\text{For }N_k>6:\\
\mathbf 0&=\mathcal S(w_i(\bar j,0),\ldots,w_i(\bar j,7),\bar w_i(\bar j,0),\ldots,\bar w_i(\bar j,7))&\text{ for }\bar j=12,\ldots,15;\\
0&= w_i(\bar j,m)+w_{i-1}(\bar j,m)+\bar w_{i}(\bar j-4,m)&\text{ for }\bar j=16,\ldots,19;\\
0&= w_i(\bar j,m)+w_{i-1}(\bar j,m)+w_{i}(\bar j-4,m)&\text{ for }\bar j=4,\ldots,15,20,\ldots,4N_k-1,
\end{align*}
where $j$ runs from $0$ to $(4N_k-1)$, and $m$ runs from $0$ to $7$. $\bar w_i(j,m)$, $x_i(j,m)$, and $y_i(j,m)$ are state variables, $w_i(j,m)$ are key variables,
$\mathcal S$ is a set of $39$ BMQ in $\FB_2[x_0,\ldots,x_7,y_0,\ldots,y_7]$
representing the Rijndael S-box, which can be found in the Appendix of this paper. $\chi$ is the round constant.
%
Thus $x,y,w$ and $\bar w$ are Boolean indeterminates and other alphabets are known constants.
In the second group of equations, exactly $640$ of $\alpha(j,m,j',m')$ are $1$ for a given $i$.

The equation set $\mathcal S$ of the S-box is given in the Appendix (Section \ref{sec-aessb}),
which can be simplified as follows.
The original $\mathcal S$ is a BMQ with total sparseness $1688$.
By doing Gaussian elimination, we obtain a BMQ $\mathcal S$ with with total sparseness $1192$. We can introduce $1075$ new indeterminates $u_{ij}$ to split $\mathcal S$ into $3$-sparse BMQ. Thus $P(\mathcal S,3)$ consists of $1331$ quadratic binomials, $115$ quadratic polynomials, $989$ cubic polynomials, and $10$ quartic polynomials over $\C$.

Totally, the AES-(${N_k},{N_r}$) can be represented by a BMQ with number of indeterminates $n=96N_kN_r+32N_k+32N_r$ (if $N_k\le6$) or $n=96N_kN_r+32N_k+64N_r$ (if $N_k>6$), number of equations $r=220N_rN_k+64N_k+156N_r$ (if $N_k\le6$) or $r=220N_rN_k+64N_k+312N_r$ (if $N_k>6$), and total sparseness $T=4928N_rN_k+192N_k+5440N_r$ (if $N_k\le6$) or $T=4928N_rN_k+192N_k+10208N_r$ (if $N_k>6$).
By {Theorem \ref{cor-ec}}, we have
\begin{prop}
There is a quantum algorithm to obtain a solution
of   AES-$({N_k},{N_r})$
with complexity $\sqrt{2}(\log_2(5024N_kN_r+224N_k+5472N_r)+3)
(5024N_kN_r+224N_k+5472N_r)^{2.5}(34592N_kN_r+1376N_k+38112N_r)
c\kappa^2$ $\log_{2}1/\epsilon$ (if $N_k\le6$), or $\sqrt{2}(\log_2(5024N_kN_r+224N_k+10272N_r)+3)
(5024N_kN_r+224N_k+10272N_r)^{2.5}(34592N_kN_r+1376N_k+71520N_r)
c\kappa^2$ $\log_{2}1/\epsilon$ (if $N_k>6$)
  with probability $>1-\epsilon$, where $\kappa$ is the condition number of $\FS$ and $c$ is the complexity constant of the HHL algorithm.
\end{prop}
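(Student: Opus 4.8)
The plan is to deduce the Proposition directly from Theorem \ref{cor-ec} by instantiating the generic bound at the AES Boolean multivariate quadratic system. Theorem \ref{cor-ec} already supplies, for an arbitrary Boolean system $\FS$, the exact runtime $\sqrt{2}c(\log_2(n+T_\FS)+3)(n+T_\FS)^{2.5}(n+7T_\FS)\kappa^2\lceil\log_2 1/\epsilon\rceil$ as a function of the number of indeterminates $n$ and the total sparseness $T_\FS$, with $\kappa$ the condition number of $\FS_2=\widehat C(S(\FS,3))$ and $c$ the HHL complexity constant of Remark \ref{rem-HHL}. Consequently the whole argument collapses to two tasks: first, to confirm that the displayed system AES-$(N_k,N_r)$ is a Boolean system whose solutions recover the secret key, so that Theorem \ref{th-q} (and hence Theorem \ref{cor-ec}) applies; and second, to count $n$ and $T_\FS$ for this system exactly. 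The stated complexity then follows by substitution, and the success probability $>1-\epsilon$ is inherited unchanged from Theorem \ref{th-q}.

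For the counting I would tally, round by round, the four families of Boolean unknowns — the pre-/post-S-box states $x_i,y_i$, the expanded-key words $w_i$, and their S-box images $\bar w_i$ — each ranging over $4N_k\times 8 = 32N_k$ bits, and then add the key-schedule words; summing over the $N_r$ rounds reproduces the stated $n=96N_kN_r+32N_k+32N_r$ for $N_k\le 6$. The larger value $96N_kN_r+32N_k+64N_r$ for $N_k>6$ arises because the wider key schedule carries an extra S-box application per round, contributing an additional set of image variables $\bar w_i(\bar j,\cdot)$ over four further words, i.e. the extra $32N_r$. For the total sparseness $T_\FS$ I would separately account for (i) the affine AddRoundKey and diffusion equations, whose nonzero contributions per round are governed by the $640$ unit coefficients $\alpha(j,m,j',m')$, (ii) the $39$ quadratic relations of the S-box $\mathcal S$ in its Gaussian-reduced form of sparseness $1192$ rather than the raw $1688$, applied once per byte per round and in the key schedule, and (iii) the remaining key-expansion relations; collecting these yields $T=4928N_rN_k+192N_k+5440N_r$ for $N_k\le 6$ and the corresponding $N_k>6$ value.

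With $n$ and $T$ in hand, the substitution is mechanical. I would compute $n+T=5024N_kN_r+224N_k+5472N_r$ and $n+7T=34592N_kN_r+1376N_k+38112N_r$ in the case $N_k\le 6$, and the analogous expressions $5024N_kN_r+224N_k+10272N_r$ and $34592N_kN_r+1376N_k+71520N_r$ when $N_k>6$, and insert these into the formula of Theorem \ref{cor-ec}, absorbing the ceiling $\lceil\log_2 1/\epsilon\rceil$ into $\log_2 1/\epsilon$ as is customary. This reproduces the two displayed complexity expressions verbatim.

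The main obstacle is neither the substitution nor the invocation of Theorem \ref{cor-ec}, both routine, but the faithful algebraic encoding of AES together with the exact bookkeeping of $n$ and $T$. The delicate points are: pinning down the S-box sparseness after Gaussian elimination and verifying that the $6$-sparse splitting $P(\mathcal S,3)$ behaves as claimed in Corollary \ref{lm-6s}; correctly handling the two distinct key-schedule topologies that force the case split $N_k\le 6$ versus $N_k>6$; and ensuring the constant-term normalization required by Algorithm \ref{alg-q} is respected, so that the Boolean solutions of $\FS_2$ project back, via $\proj_\X$, exactly onto $\V_{\FB_2}(\FS)$. Once these counts are certified, the complexity bound follows with no further analysis.
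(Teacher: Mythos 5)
Your proposal matches the paper's own argument: the paper likewise tallies the AES-$(N_k,N_r)$ unknowns and sparseness to get $n=96N_kN_r+32N_k+32N_r$ (resp.\ $+64N_r$) and $T=4928N_kN_r+192N_k+5440N_r$ (resp.\ $+10208N_r$), splitting on the key-schedule topology at $N_k\le 6$ versus $N_k>6$ and using the Gaussian-reduced S-box of sparseness $1192$, and then substitutes $n+T$ and $n+7T$ into the simplified bound of Theorem \ref{cor-ec} exactly as you describe. Your arithmetic ($n+T=5024N_kN_r+224N_k+5472N_r$, $n+7T=34592N_kN_r+1376N_k+38112N_r$, and the $N_k>6$ analogues) agrees with the stated complexity, so the proof is correct and essentially identical to the paper's.
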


Set ${N_k}=4,6,8$, ${N_r}=4,6,8,10,12,14$, and $\epsilon=1\%$. We have the following complexities on quantum algebraic attack on various AESes.
From Table \ref{tab-1}, we can see that AES is secure under
quantum algebraic attack only if the condition number $\kappa$
is large.

\begin{table}[ht]\centering
\begin{tabular}{|c|c|c|c|c|c|c|}\hline
AES&${N_k}$&${N_r}$&\#Vars&\#Eqs&T-Sparseness& Complexity \\ \hline
AES-128&4&4&1792&4400&101376&$2^{68.61}c\kappa^2$\\ \hline
AES-128&4&6&2624&6472&151680&$2^{70.68}c\kappa^2$\\ \hline
AES-128&4&8&3456&8544&201984&$2^{72.16}c\kappa^2$ \\ \hline
AES-128&4&10&4288&10616&252288&$2^{73.30}c\kappa^2$ \\ \hline
AES-192&6&12&7488&18096&421248&$2^{76.59}c\kappa^2$ \\ \hline
AES-256&8&14&11904&29520&696384&$2^{78.53}c\kappa^2$ \\ \hline
\end{tabular}
\caption{Complexities of the quantum algebraic attack on AES}
\label{tab-1}
\end{table}

\subsection{Quantum algebraic attack against Trivium}
Trivium is a synchronous stream cipher designed by Canni\'ere and Preneel \cite{tri-or} in 2005 to provide a flexible trade-off between speed and gate count in hardware, and reasonably efficient software implementation, which has been specified as an International Standard under ISO/IEC 29192-3.
Trivium can be represented by the following
nonlinear feedback shift registers (NFSR) which
can also be considered as BMQ  \cite{tri} $\FS$:
{\small\begin{eqnarray}
&A(t + 93) =
A(t+24)+ C(t+45)+ C(t)+ C(t+1)C(t+2),& 0\le t\le N_r-67;\nonumber\\
&B(t + 84) =
B(t+6) + A(t+27) + A(t) + A(t+1)A_2(t+2),& 0\le t\le N_r-70;\label{eq-tr}\\
&C(t + 111) =
C(t+24) + B(t+15) + B(t) + B(t+1)B(t+2),& 0\le t\le N_r-67;\nonumber\\
&z(t) =
A(t+27)+ A(t) + B(t+15)+ B(t) + C(t+45) + C(t),& 0\le t\le N_r-1,\nonumber
\end{eqnarray}}
where $A,B,C$ are state variables and $z$ is the output.
For an initial state $Z_0 = (A(0),\ldots,A(92) ,\\ B(0), \ldots, B(83) , C(0) ,\ldots,C(110))\in \F_2^{288}$,
we can generate the key sequence $z(0), z(1), \ldots, z(N_r$ $-1)$ with
the above NFSR.
Thus, for the $N_r$-round Trivium, $\FS$ consists of $(3N_r-201)$ quadratic polynomials of sparseness $5$, and $N_r$ linear polynomials of sparseness $7$, and with $(3N_r+87)$ indeterminates. Thus, $T=5(3N_r-201)+7N_r=22N_r-1005$.

The algebraic attach on the $N_r$-round Trivium is to solve
the BMQ \bref{eq-tr}, where $z(0),z(1),\ldots$, $z(N_r-1)$ are constants.
It is generally believed that for $N_r>288$,
 \bref{eq-tr} has a unique solution.

\begin{prop}
There is a quantum algorithm to find a solution for the $N_r$-round
Trivium equation system in time $2^{20.46}\log_2(N_r)N_r^{3.5}c\kappa^2\log1/\epsilon$ with probability $>1-\epsilon$, where $\kappa$ is the condition number of $\FS$ and $c$ is the complexity constant of the HHL algorithm.
\end{prop}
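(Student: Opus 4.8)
The plan is to read off the parameters of the Trivium system \bref{eq-tr} and feed them into Theorem \ref{cor-ec}. First I would record the structure already isolated above: the $N_r$-round system consists of $3N_r-201$ quadratic polynomials, each of sparseness $5$, together with $N_r$ linear polynomials, each of sparseness $7$, in $n=3N_r+87$ indeterminates. In the notation of Theorem \ref{cor-ec} this means $r_5=3N_r-201$, $r_7=N_r$, and all other $r_s=0$, whence $r=\sum_s r_s=4N_r-201$ and $T_\FS=\sum_s s\,r_s=5(3N_r-201)+7N_r=22N_r-1005$. Since this is exactly the input format of Theorem \ref{cor-ec}, the correctness of the output and the success probability $>1-\epsilon$ are inherited verbatim from Theorem \ref{th-q}: the algorithm returns either a genuine zero of $\FS$ or $\emptyset$, and when $\V_{\F_2}(\FS)\ne\emptyset$ it produces a solution with probability exceeding $1-\epsilon$.

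All that remains is the runtime estimate, which is a direct substitution into the simplified upper bound of Theorem \ref{cor-ec},
$$\sqrt{2}\,c\bigl(\log_2(n+T_\FS)+3\bigr)(n+T_\FS)^{2.5}(n+7T_\FS)\kappa^2\lceil\log_2 1/\epsilon\rceil.$$
I would compute $n+T_\FS=25N_r-918$ and $n+7T_\FS=157N_r-6948$, then bound each factor to leading order by $n+T_\FS\le 25N_r$, $n+7T_\FS\le 157N_r$, and $\log_2(n+T_\FS)+3\le\log_2(25N_r)+3\le 2\log_2 N_r$. Collecting the numerical constant $\sqrt{2}\cdot 2\cdot 25^{2.5}\cdot 157$ (whose base-$2$ logarithm is about $20.4$) then gives the stated runtime $2^{20.46}\log_2(N_r)N_r^{3.5}c\kappa^2\log 1/\epsilon$.

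The only step requiring attention is the logarithmic simplification $\log_2(25N_r)+3\le 2\log_2 N_r$, which is not universally valid: it holds precisely when $\log_2 N_r\ge\log_2 25+3\approx 7.64$, i.e. for $N_r$ above a couple of hundred. This is harmless because it coincides with the cryptographically relevant regime --- the text already observes that \bref{eq-tr} is expected to have a unique solution only for $N_r>288$ --- so the inequality is legitimate for every instance one cares about, and the small gap between my $2^{20.4}$ and the quoted $2^{20.46}$ is merely a matter of how tightly the lower-order terms are discarded. I therefore expect no genuine obstacle: the whole argument is parameter extraction followed by elementary asymptotic bookkeeping on top of Theorem \ref{cor-ec}.
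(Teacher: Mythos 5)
Your proposal is correct and follows essentially the same route as the paper's own proof: extract $n=3N_r+87$, $T_\FS=22N_r-1005$ from the system \bref{eq-tr}, substitute into the simplified bound of Theorem \ref{cor-ec} to get $\sqrt{2}c(\log_2(25N_r-918)+3)(25N_r-918)^{2.5}(157N_r-6948)\kappa^2\lceil\log_2 1/\epsilon\rceil$, and absorb the constants into $2^{20.46}$. Your explicit caveat that the final absorption $\log_2(25N_r)+3\le 2\log_2 N_r$ only holds for $N_r$ in the hundreds is in fact more careful than the paper, which performs the same asymptotic step silently (and which likewise is only valid in the cryptographically relevant regime $N_r>288$ used in its tables).
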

\begin{proof}
By {Theorem \ref{cor-ec}}, the complexity is $\sqrt{2}c(\log_2((3N_r+87)+(22N_r-1005))+3)
((3N_r+87)+(22N_r-1005))^{2.5}((3N_r+87)
+7(22N_r-1005))
\kappa^2\lceil\log_{2}1/\epsilon\rceil=\sqrt{2}c
\log_2(200N_r-7344)
(25N_r-918)^{2.5}(157N_r-6948)
\kappa^2\lceil\log_{2}1/\epsilon\rceil\le2^{20.46}\log_2(N_r)N_r^{3.5}c\kappa^2
\lceil\log_{2}1/\epsilon\rceil.$
\end{proof}

In Table \ref{tab-2}, we give the complexities for several $N_r$ assuming $\epsilon=1\%$.
From Table \ref{tab-2}, we can see that Trivium is secure under
quantum algebraic attack only if the condition number $\kappa$
is large.

\begin{table}[ht]\centering
\begin{tabular}{|c|c|c|c|c|c|}\hline
Round &\#Vars&\#Eqs&T-Sparseness& Complexity \\ \hline
288&951&951&5331&$2^{53.96}c\kappa^2$ \\ \hline
576&1815&2103&11667&$2^{57.94}c\kappa^2$ \\ \hline
1152&3543&4407&24339&$2^{61.71}c\kappa^2$ \\ \hline
2304&6999&9015&49683&$2^{65.38}c\kappa^2$ \\ \hline
\end{tabular}
\caption{Complexities of the quantum algebraic attack on Trivium}
\label{tab-2}
\end{table}

\subsection{Quantum algebraic attack against Keccak}
Keccak \cite{kec}, the winner of SHA-3 contest,  is the latest member of the Secure Hash Algorithm family of standards, released by NIST on August 5, 2015.
For Keccak-[$N_h,b,N_r$], we denote $N_h,b$, and $N_r$ as the output bit size, the state bit size, and the number of rounds.
Let $A_0(x,y,z)$ be the message, $A_i(x,y,z)$ be the state variable after applying the $\tau$ function for $i$-times,
and $B_i(x,y,z)$ be the state variable after applying the $\pi$ function for $i$-times, where $x,y\in\Z/5\Z$, $z\in\Z/w\Z$, and $w=b/25=1,2,4,\ldots,64$ is the bit length of each word. Then for Keccak-[$N_h,b,N_r$], we have the following BMQ $\FS$ \cite{wf}:
{\small
\begin{align*}
&B_{i}(3y+x,x,z-r(3y+x,x))=A_{i-1}(x,y,z)+\sum_{j=0}^4A_{i-1}(x-1,j,z)+\sum_{j=0}^4A_{i-1}(x+1,j,z-1);\\
&A_{i}(x,y,z)=B_{i}(x,y,z)+(1-B_{i}(x+1,y,z))B_{i}(x+2,y,z),\text{ for }x\ne0\text{ or }y\ne0;\\
&A_{i}(0,0,z)=B_{i}(0,0,z)+(1-B_{i}(1,0,z))B_{i}(2,0,z)+RC(z)
\end{align*}}
for $i=1,\ldots,N_r$, $x,y=0,\ldots,4$, $z=0,\ldots,w$.
In the preimage attack on Keccak, $r(3y+x,x)$ and $RC(z)$ are known constants, the first $N_b$ of $A_{N_r}(x,y,z)$ are the known Hash output, and $A_i(x,y,z)\ (i<N_r)$ and $B_i(x,y,z)$ are indeterminates. Thus we have $n=2bN_r$ indeterminates and $r=(2b-1)N_r+N_h$ Boolean quadratic equations with total sparseness $T=401N_rw+101N_h/25-101w$.

\begin{prop}
For the BMQ Keccak-[$N_h,b,N_r$], there is a quantum algorithm to find a preimage in time $2\sqrt{2} c\log_2(401N_rw+26N_h/25)
(401N_rw+26N_h/25)^{2.5}(3609N_rw+384N_h/25)
\kappa^2$ $\lceil\log_{2}1/\epsilon\rceil\le2^{18.83}\log_2(N_rb)N _r^{3.5}b^{3.5}c\kappa^2$ $\lceil\log_{2}1/\epsilon\rceil$ with probability $>1-\epsilon$, where $\kappa$ is the condition number of $\FS$ and $c$ is the complexity constant of the HHL algorithm.
\end{prop}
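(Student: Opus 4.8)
The plan is to present finding a Keccak preimage as a single instance of Boolean equation solving and then invoke the apparatus already developed for that problem, exactly as was done above for AES and Trivium. A preimage of Keccak-$[N_h,b,N_r]$ is by construction a point of $\V_{\F_2}(\FS)$ for the system $\FS$ displayed above, so the decision step, the recovery of $\proj_\X\mathbf a$, and the success probability $>1-\epsilon$ are inherited verbatim from Theorem \ref{th-q} (Algorithm \ref{alg-q}); nothing new is needed for correctness. The whole content of the proposition is therefore the runtime bound, which I would obtain by specializing the exact complexity of Theorem \ref{cor-ec} to the parameters of this particular $\FS$.

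First I would read off from the explicit BMQ the aggregate data that Theorem \ref{cor-ec} consumes: the number of indeterminates $n=2bN_r$, the number of equations $r=(2b-1)N_r+N_h$, and the total sparseness $T=401N_rw+101N_h/25-101w$, where $w=b/25$. Feeding these into the bound of Theorem \ref{cor-ec} and using its crude form $\sqrt2\,c(\log_2(n+T)+3)(n+T)^{2.5}(n+7T)\kappa^2\lceil\log_2 1/\epsilon\rceil$ together with $\log_2(n+T)+3\le 2\log_2(n+T)$ already reproduces the shape $2\sqrt2\,c\log_2(\cdot)(\cdot)^{2.5}(\cdot)$ of the first displayed expression; collecting the $N_rw$- and $N_h$-contributions then yields its explicit coefficients. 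To pass from there to the closed form $2^{18.83}\log_2(N_rb)N_r^{3.5}b^{3.5}c\kappa^2\lceil\log_2 1/\epsilon\rceil$ I would discard the negative and lower-order terms, bound $N_h\le b=25w$ so that the output-length contribution is swallowed by the $N_rw$-scale, and substitute $w=b/25$; every factor then becomes a power of $bN_r$, the dominant product being $(bN_r)^{2.5}\cdot(bN_r)=N_r^{3.5}b^{3.5}$. The leftover numerical constants---in particular the $25^{-3.5}$ produced by $w=b/25$---collapse into one prefactor that I would verify is at most $2^{18.83}$, while $\log_2(401N_rw+\cdots)$ is absorbed into $\log_2(N_rb)$.

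The step I expect to be the real obstacle is deriving $n$, $r$, and $T$ (and, if the sharper exact form of Theorem \ref{cor-ec} is used rather than the crude one, the full sparseness profile) correctly from the Keccak round function. This is pure bookkeeping over the BMQ, but it is delicate: one must decide exactly which bits of $A_{N_r}$ are constrained (only the $N_h$ output bits, with the capacity bits left free), how the message lane $A_0$ is treated, and precisely how the round constants $RC(z)$ and the fixed hash bits alter the term counts of the affected $\chi$ equations---these choices are what fix the coefficients $401,\,101,\,26,\,3609,\,384$ appearing in the statement. By comparison the concluding arithmetic, verifying the constant $2^{18.83}$ after the $w=b/25$ substitution, is routine.
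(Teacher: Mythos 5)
Your proposal follows the paper's proof essentially verbatim: the paper likewise takes $n=2bN_r$, $r=(2b-1)N_r+N_h$, $T=401N_rw+101N_h/25-101w$ as read off from the displayed BMQ (asserted, just as you do, rather than re-derived from the round function) and substitutes them into the crude bound of Theorem~\ref{cor-ec}, then drops the negative terms, absorbs the $N_h$-contributions via $N_h\le b=25w$, substitutes $w=b/25$, and collapses the constants into $2^{18.83}\log_2(N_rb)N_r^{3.5}b^{3.5}c\kappa^2\lceil\log_2 1/\epsilon\rceil$. The only cosmetic difference is how the additive $+3$ is handled (the paper folds it into the logarithm's argument as a factor $8$, yielding $\log_2(3608N_rw+808N_h/25)$, while you use $\log_2(n+T)+3\le 2\log_2(n+T)$), which is immaterial to the final bound.
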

\begin{proof}
We have $n=2bN_r$, $r=(2b-1)N_r+N_h$, and $T=401N_rw+101N_h/25-101w$.
By {Theorem \ref{cor-ec}}, the complexity is
$\sqrt{2}c(\log_2(2bN_r+(401N_rw+101N_h/25-101w))+3)
(2bN_r+(401N_rw+101N_h/25-101w))^{2.5}
(2bN_r+7(401N_rw+101N_h/25-101w))
\kappa^2\lceil\log_{2}1/\epsilon\rceil\le
\sqrt{2}c\log_2(3608N_rw+808N_h/25)
(451N_rw+101N_h/25)^{2.5}(2857N_rw+707N_h/25)
\kappa^2\lceil\log_{2}1/\epsilon\rceil\le2^{18.83}\log_2(N_rb)N _r^{3.5}b^{3.5}$ $c\kappa^2\lceil\log_{2}1/\epsilon\rceil$.
\end{proof}

Setting $N_h=224,256,384,512$, $N_r=24$, $b=1600$ and $\epsilon=1\%$, the
complexities for various $(N_h, b, N_r)$ are given in Table \ref{tab-3}.
From Table \ref{tab-3}, we can see that Keccak is secure under
quantum algebraic attack only if the condition number $\kappa$
is large.

\begin{table}[ht]\centering
\begin{tabular}{|c|c|c|c|c|c|c|}\hline
$N_h$&$b$&$N_r$&\#Vars&\#Eqs&T-Sparseness& Complexity \\ \hline
224&1600&24&76800&77000&610377&$2^{78.25}c\kappa^2$ \\ \hline
256&1600&24&76800&77032&610506&$2^{78.25}c\kappa^2$ \\ \hline
384&1600&24&76800&77160&611023&$2^{78.25}c\kappa^2$ \\ \hline
512&1600&24&76800&77288&611540&$2^{78.25}c\kappa^2$ \\ \hline
\end{tabular}
\caption{Complexities of the quantum preimage attack on Keccak}
\label{tab-3}
\end{table}

The best known traditional attacks on Keccak were given in \cite{slg}
and \cite{xwang1}.
In \cite{slg}, practical collision attacks against the $5$-round Keccak-224
and an instance of the 6-round Keccak collision challenge were given.
In \cite{xwang1}, key recovery attacks were given for $4$- to $8$-round Keccak.

\subsection{Quantum algebraic attack against MPKC}

Multivariate Public Key Cryptosystem (MPKC) is one of the candidates
for post-quantum cryptography \cite{dingjt}.
An MPKC is generally constructed as follows
\begin{equation}
H = L\circ G \circ R = (h_1(\X),\ldots,h_r(\X))
\end{equation}
where $L\in GL(m,\F_2)$, $R\in GL(n,\F_2)$, and $G: \F_2^n \rightarrow \F_2^m$
is a quadratic map whose inversion can be efficiently computed.
$L$ and $R$ are the secret keys and $H$ is the public map.
The direct algebraic attack against the MPKC is to solve the BMQ:
\begin{equation}\label{eq-mpkx1}
y_1=h_1(\X),\ldots,y_r=h_r(\X)
\end{equation}
where $\X = (x_1,\ldots,x_n)$ is the plaintext and
$\Y=(y_1,\ldots,y_r)$ is the known ciphertext.
Note that the BMQ in \bref{eq-mpkx1} are dense. We have
\begin{prop}\label{p-mq1}
For dense BMQ $\FS=\{f_1,\ldots,f_r\}\subset\mathcal R_2[\X]$, there is a quantum algorithm to obtain a solution in time $\sqrt{2}c(\log_2(n+(n^2+3n-4)r/2)+3)(n+(n^2+3n-4)r/2)^{2.5}
(n+7(n^2+3n-4)r/2)\kappa^2\lceil\log_{2}1/\epsilon\rceil = \widetilde O(n^7r^{3.5}\kappa^2\log1/\epsilon)$
  with probability $>1-\epsilon$, where $\kappa$ is the condition number of $\FS$ and $c$ is the complexity constant of the HHL algorithm.
\end{prop}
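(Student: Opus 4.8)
The plan is to obtain Proposition~\ref{p-mq1} as a direct instantiation of the exact complexity bound in Theorem~\ref{cor-ec}; the only genuine work is to pin down the total sparseness $T_\FS$ of a dense Boolean quadratic system, after which the two displayed quantities are pure substitution and asymptotic simplification.

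First I would count the monomials. Working in the Boolean ring $\mathcal{R}_2[\X]$, the distinct monomials of degree at most two are the constant $1$, the $n$ linear monomials $x_i$, and the $\binom{n}{2}$ products $x_ix_j$ with $i<j$; there are no pure squares, because $x_i^2=x_i$. Hence each dense quadratic $f_i$ (with the ciphertext bit $y_i$ absorbed into its constant term) has at most $\binom{n}{2}+n+1=(n^2+n+2)/2$ terms, and for $n\ge 3$ this is bounded by $(n^2+3n-4)/2$. Summing over the $r$ equations gives $T_\FS\le (n^2+3n-4)r/2$, which is $\Theta(n^2 r)$.

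Next I would feed this system into Algorithm~\ref{alg-q} and invoke Theorem~\ref{cor-ec}, whose correctness and success-probability guarantees (a returned vector is a genuine zero, and a zero is produced with probability $>1-\epsilon$ whenever $\V_{\F_2}(\FS)\ne\emptyset$) transfer verbatim, since the reduction $C\circ S$ to a $6$-sparse system over $\C$ is applied as a black box. The exact runtime bound of Theorem~\ref{cor-ec} reads $\sqrt{2}\,c(\log_2(n+T_\FS)+3)(n+T_\FS)^{2.5}(n+7T_\FS)\kappa^2\lceil\log_2 1/\epsilon\rceil$; substituting $T_\FS=(n^2+3n-4)r/2$ reproduces the first displayed expression. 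Finally, because $T_\FS=\Theta(n^2 r)$ dominates $n$, one has $n+T_\FS=\Theta(n^2 r)$ and therefore $(n+T_\FS)^{2.5}(n+7T_\FS)=\Theta((n^2 r)^{3.5})=\Theta(n^7 r^{3.5})$, with the logarithmic prefactor absorbed into $\widetilde O$; this yields $\widetilde O(n^7 r^{3.5}\kappa^2\log 1/\epsilon)$.

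The calculation is essentially routine, so there is no deep obstacle; the only point demanding care is the monomial bookkeeping in the Boolean ring (in particular the collapse $x_i^2=x_i$, which removes the diagonal quadratic terms), since an off-by-$\Theta(n)$ error in the per-equation sparseness would alter the constant but not the $\Theta(n^2 r)$ order, and it is precisely the $\Theta(n^2 r)$ order of $T_\FS$ that drives the final $n^7 r^{3.5}$ scaling.
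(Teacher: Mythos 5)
Your proposal is correct and follows essentially the same route as the paper: both instantiate Theorem~\ref{cor-ec} with the total sparseness of a dense BMQ, which is $\Theta(n^2r)$, and then read off the $\widetilde O(n^7r^{3.5}\kappa^2\log 1/\epsilon)$ asymptotics. The only cosmetic difference is in how the quantity $(n^2+3n-4)r/2$ arises: the paper takes $T_\FS=(n+1)(n+2)r/2$ and lets the $-3r$ correction in the exact formula of Theorem~\ref{cor-ec} (with $r_1=r_2=0$) produce it, whereas you count the Boolean-ring monomials tightly as $(n^2+n+2)r/2$ (correctly noting the collapse $x_i^2=x_i$) and pad up to $(n^2+3n-4)r/2$ for $n\ge3$ before substituting into the simplified inequality --- both yield the identical displayed bound.
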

\begin{proof}
If $\FS$ is a dense BMQ, $T=(n+1)(n+2)r/2$. By {Theorem \ref{cor-ec}}, we can find a solution of $\FS=0$ in time
$\sqrt{2}c(\log_2(n+(n^2+3n-4)r/2)+3)(n+(n^2+3n-4)r/2)^{2.5}
(n+7(n^2+3n-4)r/2)\kappa^2\lceil\log_{2}1/\epsilon\rceil = \widetilde O(n^7r^{3.5}\kappa^2\log1/\epsilon)$.
\end{proof}


\begin{cor}\label{p-mq2}
Suppose $r=\gamma n$. Then there is a quantum algebraic attack against
MPKC in time $\widetilde O(\gamma n^{10.5}\kappa^2\log1/\epsilon)$  with probability $>1-\epsilon$, where $\kappa$ is the condition number of \bref{eq-mpkx1}.
\end{cor}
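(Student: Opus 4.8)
The plan is to obtain this as an immediate specialization of Proposition \ref{p-mq1}. The direct algebraic attack against the MPKC is, by definition, the problem of solving the Boolean system \bref{eq-mpkx1}, namely $y_1 = h_1(\X),\ldots,y_r = h_r(\X)$ with the ciphertext $\Y$ regarded as constants. So the first step is to verify that this system is a \emph{dense} BMQ in the sense required by Proposition \ref{p-mq1}: each public polynomial $h_i = L\circ G\circ R$ is a quadratic Boolean polynomial, and since the composition with the invertible linear maps $L,R$ mixes all the variables, in the worst case every one of the $\binom{n}{2}+n+1$ possible quadratic terms appears. Hence $\FS=\{f_1,\ldots,f_r\}$ with $f_i = y_i - h_i(\X)$ meets the hypothesis of Proposition \ref{p-mq1} with this $n$ and $r$, and the attack reduces to running Algorithm \ref{alg-q} (hence Algorithm \ref{alg-b} and Theorem \ref{th-q}) on \bref{eq-mpkx1}.

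Next I would invoke the asymptotic bound $\widetilde O(n^7 r^{3.5}\kappa^2\log1/\epsilon)$ already proved in Proposition \ref{p-mq1} and substitute the assumption $r=\gamma n$. Tracking the leading behaviour of the exact expression is routine: the input-size quantity $n+(n^2+3n-4)r/2$ has dominant term $\gamma n^3/2$, so the factor $\bigl(n+(n^2+3n-4)r/2\bigr)^{2.5}$ contributes order $n^{7.5}$ in $n$, while the factor $n+7(n^2+3n-4)r/2$ contributes order $n^3$; their product yields the exponent $10.5$. Absorbing the slowly-growing logarithmic term $\log_2(n+(n^2+3n-4)r/2)+3$ and the HHL complexity constant $c$ into $\widetilde O$, one arrives at the stated $\widetilde O(\gamma n^{10.5}\kappa^2\log1/\epsilon)$. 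The probability guarantee $>1-\epsilon$ carries over verbatim, since it is inherited from Proposition \ref{p-mq1} and ultimately from the success analysis of Algorithm \ref{alg-b} in Theorem \ref{th-q}; no new probabilistic argument is needed.

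The only point demanding care, and the step I would treat as the genuine obstacle, is the bookkeeping of the overdetermination ratio $\gamma$. Because both the $2.5$-power factor and the linear factor each scale like $\gamma n^3$, the exact substitution produces $\gamma^{3.5}n^{10.5}$, whereas the statement records a single power of $\gamma$. I would reconcile this by treating $\gamma$ as a slowly-varying constant (in the standard MPKC regime $r\approx n$, so $\gamma$ is bounded) and recording only its leading linear contribution, suppressing the higher powers of $\gamma$ together with the logarithmic factors inside $\widetilde O$; alternatively one states the sharper $\widetilde O(\gamma^{3.5}n^{10.5}\kappa^2\log1/\epsilon)$ and notes it collapses to the displayed form when $\gamma=O(1)$. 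Apart from this accounting, the corollary is a direct substitution into an already-established complexity estimate, so I expect no further difficulty.
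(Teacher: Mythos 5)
Your proposal matches the paper's (implicit) derivation exactly: Corollary \ref{p-mq2} is stated without proof precisely because it is the direct substitution $r=\gamma n$ into the bound $\widetilde O(n^7r^{3.5}\kappa^2\log1/\epsilon)$ of Proposition \ref{p-mq1}, and your verification that \bref{eq-mpkx1} is a dense BMQ and that the success probability carries over is the intended reading. Your bookkeeping observation is also correct --- literal substitution yields $\gamma^{3.5}n^{10.5}$, so the displayed $\gamma n^{10.5}$ is justified only when $\gamma=O(1)$, as you note; your one terminological slip is calling $\widetilde O(\gamma^{3.5}n^{10.5}\kappa^2\log1/\epsilon)$ the ``sharper'' bound, when for $\gamma\ge1$ it is the weaker but honest one.
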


Related to this problem, the BMQ Challenge is to solve a given
random BMQ with $ m = 2n$ or $n=1.5m$ over the finite fields
$\F_2$, $\F_{2^8}$ \cite{mqc}.
Considering the field $\FB_{2^8}=\FB_2[\alpha]/(\alpha^8+\alpha^4+\alpha^3+\alpha+1)$, each variable $x$ over $\FB_{2^8}$ is a sum of eight Boolean variables, that is $x=\sum_{i=0}^7x_i\alpha^i$. Then $\FS=\{f_1,\ldots,f_r\}\subset\FB_{2^8}[x_1,\ldots,x_n]$ can be rewritten as $\FS_1=\{f_{11},\ldots,f_{18},f_{21},\ldots,f_{m8}\}\subset\mathcal R_{2}[x_{11},\ldots,x_{18},x_{21},\ldots,x_{n8}]$, where $x_{ij}$ and $f_{ij}$ denote the $j$-th bit of $x_i$ and $f_i$.
As a consequence, equation solving over $\FB_{2^8}$ has the same complexity
as Boolean equation solving.
For the BMQ challenge \cite{mqc}, $m=2n$ or $n=1.5m$ implies the complexity is $\widetilde O(T^{3.5}\kappa^2\log1/\epsilon)<\widetilde O(n^{10.5}\kappa^2\log1/\epsilon)$.

The best known deterministic algebraic algorithms to solve
the BMQ are the Gr\"obner basis method \cite{gb-ff}
which has complexity $O(2^{0.841n})$ under certain regularity
condition for the equation system,
and the multiplication free characteristic set method \cite{cs-ff}
which has bit complexity $O(2^n)$ for general BMQ.
Although exponential in $n$,  these methods
had been used to solve BMQ from cryptanalysis with $n=128$.

\begin{rem}
From the above discussion, we can see that AES, Trivium, Keccak, and MPKC are secure under quantum algebraic attack only if the condition numbers of the related Boolean equation systems are large.
This suggests that a possible {\em new quantum criterion for cryptosystem design}: the Boolean equation system of the cryptosystem has a large condition number.
\end{rem}

\section{Conclusion}
We give two quantum algorithms  to find the Boolean
solutions of a polynomial system in $\C[\X]$
and to solve Boolean equations in ${\mathcal R}_2[\X]$ in any given probability,
whose complexities are polynomial in the number of variables,
the total sparseness of the equation system, and the condition number of the equation system.
As a consequence, we achieved exponential speedup for sparse
Boolean equation solving if the condition number of the equation system is small.

The main idea of the algorithm is  first reducing the problem
of Boolean equation solving to the problem of
finding the Boolean solutions of a polynomial system over $\C$
and then solving the Macaulay linear system
of the polynomial system over $\C$ with the modified HHL algorithm to obtain the Boolean solutions based on the properties of quantum states.

The new quantum algorithm is used to give quantum algebraic attack
against  major cryptosystems AES, Trivium, and SHA-3/Keccak and show that these ciphers are secure under quantum algebraic attack only if the condition numbers
of their equation systems are large.
Similar results hold for MPKC, which is a candidate for post-quantum
cryptosystems.

We also use the quantum algorithms to three famous problems
from computational theory: the 3-SAT problem, the graph isomorphism
problem, and the subset sum problem and show that the
complexities to solve these problems are polynomial
in the input size and the condition number of their
corresponding equation system.

One of the major problems for future study is on the condition number: either to estimate the condition number of the cryptosystems and the 3-SAT problem, or to find new quantum method to solve Boolean systems, which has less relation with the condition number.
It is also interesting to extend the method proposed in this paper
to more general equation systems, such as equation solving over
the finite fields \cite{qa-opt} or the field of complex numbers.

\newpage
\section{Appendix. Equations for the AES S-Box}
\label{sec-aessb}

We list the 39 Boolean quadratic polynomials for the AES S-Box used in this paper.

$x_5x_7+x_5x_6+x_3x_7+x_3x_6+x_2x_4+x_1x_7+
x_1x_6+x_1x_5+x_1x_3+x_1x_2+x_0x_7+x_0x_3+
x_0x_2+x_6y_7+x_7y_6+x_6y_6+x_7y_5+x_5y_5+
x_7y_4+x_1y_4+x_2y_3+x_0y_3+x_6y_2+x_4y_2+
x_3y_2+x_0y_2+x_4y_0+x_2y_0+x_7+x_5+x_3+y_7+
y_2+y_0+1$,

$x_6x_7+x_5x_7+x_4x_7+x_4x_6+
x_4x_5+x_3x_4+x_2x_5+x_1x_7+x_1x_6+x_1x_5+
x_1x_4+x_1x_3+x_1x_2+x_0x_5+x_0x_1+x_6y_6+
y_5y_7+x_3y_4+y_4y_7+y_4y_5+x_5y_3+x_0y_3+
y_3y_6+y_3y_4+x_3y_2+x_0y_2+y_2y_4+y_2y_3+
x_5y_0+x_3y_0+x_1y_0+y_0y_7+y_0y_3+y_0y_1+
x_5+x_3+x_0+y_2+1$,

$x_1y_7+x_0y_7+y_6y_7+
x_7y_5+x_6y_5+y_5y_7+x_7y_4+x_5y_3+x_2y_3+
y_3y_6+x_2y_2+x_0y_2+y_2y_5+x_6y_1+x_4y_1+
x_1y_1+y_1y_2+x_6y_0+x_5y_0+x_4y_0+y_0y_7+
y_0y_6+y_0y_5+y_0y_4+y_0y_3+x_3+x_1+y_3+y_2+
y_1+1$,

$x_6x_7+x_4x_6+x_3x_7+x_2x_7+x_1x_4+
x_0x_6+x_0x_3+x_6y_7+x_4y_7+x_3y_7+x_7y_6+
x_3y_6+x_7y_5+x_7y_4+x_1y_4+x_5y_3+x_4y_3+
x_1y_3+x_6y_2+x_2y_2+x_6y_1+x_5y_1+x_3y_1+
x_1y_1+x_0y_1+x_7y_0+x_6y_0+x_5y_0+x_3y_0+
x_2y_0+x_1y_0+x_6+x_1$,

$x_6y_7+x_5y_7+x_1y_7+
x_0y_7+x_5y_6+x_4y_6+x_3y_6+x_4y_4+x_3y_4+
x_2y_4+x_4y_3+x_3y_3+y_3y_5+x_2y_2+y_2y_7+
y_2y_4+y_2y_3+x_7y_1+x_4y_1+x_3y_1+x_1y_1+
y_1y_7+y_1y_6+y_1y_5+y_1y_2+x_7y_0+x_5y_0+
x_4y_0+x_3y_0+y_0y_4+y_0y_3+y_0y_2+x_6+x_4+
x_3+y_2$,

$x_2y_7+y_5y_6+x_1y_4+x_7y_3+x_2y_3+
x_1y_3+x_0y_3+y_3y_6+y_3y_4+x_4y_2+x_2y_2+
x_0y_2+y_2y_6+y_2y_5+y_2y_3+x_5y_1+x_3y_1+
y_1y_7+y_1y_5+y_1y_4+y_1y_3+y_1y_2+x_5y_0+
x_4y_0+x_3y_0+x_0y_0+y_0y_6+y_0y_1+x_7+x_6+
x_3+x_2+x_1+y_4+y_2+y_1+y_0$,

$x_5y_7+x_3y_6+
x_2y_6+x_0y_6+x_6y_5+x_5y_5+x_0y_5+x_6y_4+
x_5y_4+x_3y_4+x_2y_4+x_0y_4+y_4y_7+y_3y_6+
y_3y_5+x_3y_2+x_2y_2+x_0y_2+y_2y_7+y_2y_6+
y_2y_5+x_3y_1+x_2y_1+y_1y_7+y_1y_6+y_1y_3+
x_5y_0+x_4y_0+x_1y_0+y_0y_7+y_0y_2+x_6+x_1+
y_4+y_3+y_1$,

$x_6y_7+x_3y_7+x_0y_7+x_2y_6+
x_4y_4+x_2y_4+x_0y_4+x_7y_3+x_6y_3+x_3y_3+
x_2y_3+x_1y_3+x_0y_3+x_5y_2+x_2y_2+x_1y_2+
x_3y_1+x_2y_1+x_1y_1+x_3y_0+x_6+x_2+x_1+x_0+
y_2$,

$x_7y_7+x_4y_7+x_1y_7+x_7y_6+x_6y_6+
x_1y_6+y_6y_7+x_7y_5+x_6y_5+x_2y_5+x_0y_5+
x_6y_4+x_4y_4+x_2y_4+y_4y_5+x_4y_3+x_3y_3+
x_2y_3+x_1y_3+x_0y_3+y_3y_6+y_3y_5+x_3y_2+
y_2y_4+x_6y_1+x_5y_1+x_4y_1+y_1y_5+y_1y_2+
x_6y_0+x_2y_0+x_1y_0+y_0y_6+y_4+y_3$,

$x_4y_7+
x_3y_7+x_4y_6+x_2y_6+x_1y_6+x_0y_4+x_3y_3+
x_1y_3+x_0y_3+x_7y_2+x_3y_2+x_2y_2+x_1y_2+
x_0y_2+x_7y_1+x_6y_1+x_5y_1+x_4y_1+x_3y_1+
x_0y_1+x_4+x_2+x_1+y_2$,

$x_3x_6+x_2x_5+
x_1x_4+x_1x_2+x_0x_4+x_0x_1+x_4y_6+x_2y_6+
x_1y_6+x_7y_5+x_7y_4+x_2y_4+x_6y_3+x_4y_3+
x_3y_3+x_2y_3+x_7y_2+x_0y_2+x_4y_1+x_3y_1+
x_5y_0+x_2y_0+x_1y_0+x_4+x_1+y_4+y_3+y_0+
1$,

$x_4x_7+x_2x_3+x_1x_5+x_1x_4+x_0x_7+
x_0x_6+x_0x_5+x_0x_4+x_0x_1+x_7y_7+x_4y_7+
x_1y_6+x_2y_4+x_1y_4+x_0y_4+x_7y_3+x_4y_3+
x_3y_3+x_4y_2+x_3y_2+x_0y_2+x_7y_1+x_5y_1+
x_2y_1+x_1y_1+x_0y_1+x_1y_0+x_0y_0+x_7+x_1+
x_0+y_2$,

$x_6x_7+x_4x_5+x_3x_7+x_3x_5+x_2x_5+
x_2x_4+x_2x_3+x_1x_7+x_0x_6+x_0x_4+x_2y_7+
x_0y_7+x_1y_6+x_6y_5+x_2y_4+x_6y_3+x_5y_3+
x_2y_3+x_6y_2+x_4y_2+x_3y_2+x_2y_2+x_1y_2+
x_7y_1+x_5y_1+x_6y_0+x_5y_0+x_4y_0+x_3y_0+
x_0y_0$,

$x_5x_7+x_3x_6+x_1x_7+x_1x_2+x_0x_4+
x_0x_3+x_1y_7+x_2y_6+x_1y_6+x_6y_5+x_4y_5+
x_2y_5+x_0y_4+x_5y_3+x_2y_3+x_6y_2+x_5y_2+
x_1y_2+x_0y_2+x_7y_1+x_6y_0+x_5y_0+x_0y_0+
x_6+x_1+y_6+y_2+y_1$,

$x_1y_7+x_0y_7+x_2y_6+
x_6y_5+x_2y_5+x_4y_4+x_3y_4+x_2y_4+x_1y_4+
x_0y_4+x_5y_3+x_2y_3+x_1y_3+x_7y_2+x_4y_2+
x_3y_2+x_1y_2+x_6y_1+x_3y_1+x_2y_1+x_1y_1+
x_0y_1+x_5y_0+x_0y_0+x_3+y_2$,

$x_5x_7+x_3x_6+
x_1x_7+x_1x_2+x_0x_4+x_0x_3+x_6y_7+x_3y_7+
x_0y_7+x_4y_6+x_5y_5+x_2y_5+x_4y_4+x_3y_3+
x_1y_2+x_0y_2+x_7y_1+x_5y_1+x_4y_1+x_2y_1+
x_7y_0+x_6y_0+x_1y_0+x_6+x_5+x_4+x_2+x_0+
y_7$,

$x_7y_7+x_7y_6+x_6y_6+x_4y_6+x_2y_6+
x_1y_6+x_7y_5+x_2y_5+x_7y_4+x_1y_4+x_0y_4+
y_4y_6+x_7y_3+x_3y_3+y_3y_5+y_3y_4+x_7y_2+
x_4y_2+x_3y_2+y_2y_4+y_2y_3+y_1y_6+x_3y_0+
x_2y_0+x_1y_0+y_0y_6+y_0y_4+y_0y_2+y_0y_1+
x_5+x_4+x_3+x_1+y_4+y_2+y_1$,

$x_7y_6+y_6y_7+
x_7y_5+x_3y_5+y_5y_7+x_7y_4+x_0y_4+y_4y_7+
y_4y_5+x_6y_3+x_4y_3+x_3y_3+x_1y_3+y_3y_7+
y_3y_6+y_3y_5+y_3y_4+x_6y_2+x_5y_2+x_4y_2+
x_1y_2+x_0y_2+y_2y_5+x_6y_1+x_5y_1+x_4y_1+
x_0y_1+y_1y_7+y_1y_4+y_1y_2+x_5y_0+x_4y_0+
y_0y_7+y_0y_5+y_0y_4+x_0+y_0$,

$x_6y_6+x_1y_6+
x_5y_5+x_3y_4+x_0y_4+x_7y_3+x_6y_3+x_5y_3+
x_1y_3+x_0y_3+x_7y_2+x_7y_1+x_6y_1+x_4y_1+
x_3y_1+x_0y_1+x_4y_0+x_2y_0+x_7+x_6+x_4+x_3+
x_0+y_2$,

$x_6x_7+x_5x_7+x_4x_6+x_3x_7+x_2x_7+
x_2x_5+x_1x_7+x_0x_6+x_0x_1+x_7y_7+x_3y_7+
x_1y_7+x_5y_6+x_0y_4+x_6y_3+x_1y_3+x_7y_2+
x_6y_2+x_5y_2+x_4y_2+x_3y_2+x_4y_1+x_2y_1+
x_5y_0+x_3y_0+x_5+x_2+x_1+x_0+y_3+
y_2$,

$x_0y_7+x_2y_6+x_0y_6+x_5y_5+x_0y_5+
x_6y_4+x_3y_4+x_2y_4+x_0y_4+x_7y_3+x_6y_3+
x_4y_3+x_3y_3+x_2y_3+x_4y_2+x_3y_2+x_4y_1+
x_2y_1+x_7y_0+x_2y_0+x_0y_0+x_7+x_4+x_2+x_1+
y_4+y_1+y_0$,

$x_5x_7+x_5x_6+x_4x_6+x_3x_5+
x_2x_6+x_2x_5+x_1x_7+x_1x_6+x_0x_7+x_0x_6+
x_0x_3+x_0x_1+x_6y_7+x_3y_6+x_3y_4+x_0y_4+
x_7y_3+x_4y_3+x_3y_3+x_5y_2+x_4y_2+x_6y_1+
x_4y_1+x_2y_1+x_1y_1+x_1y_0+x_5+x_0+y_4+
y_2$,

$x_5x_7+x_5x_6+x_3x_7+x_3x_4+x_2x_6+
x_2x_4+x_1x_4+x_1x_3+x_1x_2+x_0x_6+x_7y_7+
x_6y_7+x_4y_7+x_3y_7+x_1y_6+x_1y_4+x_7y_3+
x_3y_3+x_2y_3+x_6y_2+x_2y_2+x_0y_2+x_7y_1+
x_5y_1+x_4y_1+x_3y_1+x_6y_0+x_3y_0+x_1y_0+
x_6+x_3+x_2+y_4$,

$x_5x_7+x_4x_6+x_4x_5+
x_3x_5+x_2x_7+x_2x_4+x_2x_3+x_0x_4+x_0x_1+
x_4y_6+x_2y_6+x_7y_5+x_7y_4+x_6y_4+x_1y_4+
x_6y_3+x_5y_3+x_2y_3+x_1y_3+x_7y_2+x_5y_2+
x_7y_1+x_6y_1+x_0y_1+x_4y_0+x_3y_0+x_1y_0+
x_3+x_1+y_4+y_3$,

$x_5x_7+x_4x_6+x_3x_7+
x_3x_5+x_3x_4+x_2x_4+x_1x_6+x_1x_3+x_1x_2+
x_0x_7+x_4y_7+x_3y_7+x_0y_7+x_4y_6+x_2y_6+
x_0y_6+x_5y_5+x_0y_5+x_6y_4+x_6y_3+x_6y_2+
x_4y_2+x_3y_2+x_1y_2+x_5y_1+x_4y_1+x_5y_0+
x_1y_0+x_0y_0+x_7+x_5+y_1+y_0$,

$x_6x_7+
x_4x_6+x_4x_5+x_2x_6+x_2x_5+x_2x_3+x_1x_7+
x_1x_5+x_1x_4+x_1x_3+x_0x_2+x_1y_7+x_2y_6+
x_1y_6+x_6y_5+x_2y_5+x_2y_4+x_3y_3+x_7y_2+
x_3y_2+x_2y_2+x_7y_1+x_5y_1+x_1y_1+x_7+x_3+
x_0+y_6+y_4+y_2+y_1$,

$x_7y_5+x_7y_4+x_2y_4+
x_7y_3+x_6y_3+x_0y_3+x_5y_2+x_4y_2+x_0y_2+
x_7y_1+x_3y_1+x_2y_1+x_0y_1+x_7y_0+x_5y_0+
x_4y_0+x_2y_0+x_1y_0+x_2+x_1+x_0+y_4+y_2+y_1+
1$,

$x_5x_6+x_3x_7+x_3x_6+x_2x_5+x_2x_4+
x_1x_6+x_1x_5+x_1x_4+x_1x_3+x_1x_2+x_0x_7+
x_0x_2+x_0x_1+x_5y_7+x_1y_6+x_4y_4+x_2y_4+
x_0y_4+x_5y_3+x_4y_3+x_1y_3+x_0y_2+x_5y_1+
x_2y_1+x_1y_1+x_3y_0+x_0y_0+x_6+x_5+x_1+y_3+
y_2$,

$x_6y_7+x_3y_7+x_1y_7+x_6y_6+x_5y_6+
x_1y_6+x_5y_5+x_1y_5+x_6y_4+x_6y_3+x_5y_3+
x_4y_3+x_3y_2+x_7y_1+x_6y_1+x_1y_1+x_6y_0+
x_3y_0+x_2y_0+x_5+y_7+y_3+y_2+y_1+
y_0$,

$x_5x_6+x_4x_6+x_3x_6+x_3x_5+x_3x_4+
x_1x_7+x_1x_5+x_0x_3+x_0x_2+x_5y_7+x_1y_6+
x_6y_5+x_2y_5+x_2y_4+x_0y_4+x_6y_2+x_4y_2+
x_2y_2+x_1y_2+x_6y_1+x_5y_1+x_4y_1+x_2y_1+
x_5y_0+x_3y_0+x_7+x_6+x_4+x_0+y_6+y_4+y_2+
1$,

$x_5y_7+x_4y_7+x_3y_7+x_1y_7+x_0y_7+
x_7y_6+x_4y_6+x_7y_5+x_3y_5+x_7y_4+x_4y_4+
x_3y_4+x_7y_3+x_6y_3+x_4y_3+x_3y_3+x_1y_3+
x_6y_2+x_5y_2+x_2y_2+x_2y_1+x_7y_0+x_3y_0+
x_2y_0+x_1y_0+y_6+y_2+y_0$,

$x_7y_7+x_4y_7+
x_1y_7+x_6y_6+x_5y_6+x_5y_5+x_2y_3+x_1y_3+
x_0y_3+x_5y_2+x_4y_2+x_2y_2+x_1y_2+x_7y_1+
x_0y_1+x_6y_0+x_5y_0+x_4y_0+x_0y_0+x_6+x_2+
x_1+x_0+y_7$,

$x_7y_7+x_4y_7+x_2y_7+x_4y_6+
x_6y_5+x_5y_5+x_1y_5+x_6y_4+x_4y_4+x_7y_3+
x_6y_3+x_6y_2+x_5y_2+x_0y_2+x_6y_1+x_2y_1+
x_6y_0+x_4y_0+x_5+x_4+x_3+x_1+x_0+y_7+y_4+
y_2$,

$x_4x_6+x_4x_5+x_3x_5+x_3x_4+x_2x_7+
x_2x_6+x_2x_4+x_2x_3+x_1x_6+x_1x_5+x_1x_4+
x_1x_2+x_0x_7+x_0x_6+x_0x_2+x_0x_1+x_0y_5+
x_6y_4+x_3y_4+x_0y_4+x_6y_3+x_3y_3+x_2y_3+
x_6y_2+x_5y_2+x_4y_2+x_0y_2+x_7y_1+x_5y_1+
x_0y_0+x_7+y_1+1$,

$x_6y_7+y_6y_7+x_7y_5+
y_5y_7+x_7y_4+y_4y_7+x_7y_3+x_6y_3+x_2y_3+
x_1y_3+y_3y_5+y_3y_4+x_7y_2+x_6y_2+x_0y_2+
y_2y_3+x_4y_1+x_3y_1+x_2y_1+x_1y_1+x_0y_1+
y_1y_5+y_1y_2+x_7y_0+x_1y_0+y_0y_7+y_0y_6+
y_0y_3+y_0y_1+x_2+y_5+y_4$,

$x_4y_7+x_3y_7+
x_1y_7+x_5y_6+x_4y_6+x_1y_6+x_4y_5+x_0y_5+
x_7y_4+x_6y_4+x_4y_4+x_2y_4+x_1y_4+x_5y_3+
x_4y_3+x_3y_3+x_0y_3+x_3y_2+x_0y_1+x_6y_0+
x_1y_0+x_6+x_4+x_2+y_6+y_2+y_1$,

$x_5x_7+
x_4x_7+x_2x_5+x_2x_3+x_1x_7+x_1x_5+x_0x_7+
x_0x_6+x_0x_5+x_0x_4+x_0x_3+x_4y_5+x_6y_4+
x_5y_3+x_4y_3+x_0y_3+x_7y_2+x_3y_2+x_2y_2+
x_5y_1+x_3y_1+x_0y_1+x_5y_0+x_3y_0+x_0y_0+
x_7+x_5+x_1+x_0+y_4+y_0+1$,

$x_7y_7+x_4y_7+
x_0y_7+x_6y_6+x_5y_6+x_0y_6+x_0y_5+x_6y_4+
x_5y_4+x_3y_4+x_7y_3+x_6y_3+x_0y_3+x_7y_2+
x_3y_2+x_2y_2+x_6y_1+x_3y_1+x_1y_1+x_5y_0+
x_4y_0+x_3y_0+x_1y_0+x_3+y_3$,

$x_4x_5+x_3x_7+
x_3x_6+x_3x_4+x_2x_7+x_2x_5+x_2x_3+x_1x_6+
x_1x_4+x_1x_3+x_0x_7+x_1y_7+x_2y_6+x_1y_6+
x_6y_5+x_2y_5+x_4y_4+x_3y_4+x_7y_3+x_3y_2+
x_5y_1+x_2y_1+x_0y_1+x_4y_0+x_3y_0+x_0y_0+
x_3+y_3+y_2+y_1+y_0$.

\end{document}